\newtheorem{theorem}{Theorem}[section]
\newtheorem{corollary}[theorem]{Corollary}
\newtheorem{proposition}[theorem]{Proposition}
\theoremstyle{definition}
\newtheorem{definition}[theorem]{Definition}
\newtheorem{remark}[theorem]{Remark}
\newtheorem{conjecture}[theorem]{Conjecture}
\newtheorem{problem}[theorem]{Problem}
\newtheorem{example}[theorem]{Example}
\numberwithin{equation}{section}
\newcommand{\be}{\begin{equation}}
\newcommand{\en}{\end{equation}}
\newcommand{\bea}{\begin{eqnarray}}
\newcommand{\ena}{\end{eqnarray}}
\newcommand{\beano}{\begin{eqnarray*}}
\newcommand{\enano}{\end{eqnarray*}}
\newcommand{\bee}{\begin{enumerate}}
\newcommand{\ene}{\end{enumerate}}
\newcommand{\mc}{\mathcal}
\newcommand{\D}{{\mc D}}
\newcommand{\1}{1 \!\! 1}
\title[Some Consequences of the  Grunewald-O'Halloran Conjecture...]{Some Consequences of the  Grunewald-O'Halloran Conjecture for Pseudoquonic  Operators}
\author[F. Bagarello]{Fabio Bagarello}
\address{ Fabio Bagarello \endgraf
Dipartimento di Ingegneria \endgraf 
Universit\`a di Palermo\endgraf
Viale delle Scienze 90128  \endgraf
 Palermo, Italy\endgraf 
}
\author[Y. Bavuma]{Yanga Bavuma}
\address{ Yanga Bavuma  \endgraf
Department of Mathematics and Applied Mathematics\endgraf 
University of Cape Town\endgraf
Private Bag X1, Rondebosch 7701\endgraf 
Cape Town, South Africa\endgraf
}
\author[F.G. Russo]{Francesco G. Russo}
\address{Francesco G. Russo \endgraf
School of Science and Technology, University of Camerino\endgraf
Via Madonna delle Carceri 9, Camerino, Italy\endgraf
and\endgraf
Department of Mathematics and Applied Mathematics, University of the Western Cape\endgraf
Private Bag X17, 7535,  Bellville, South Africa\endgraf
and\endgraf
Department of Mathematics and Applied Mathematics, University of Cape Town\endgraf
Private Bag X1, Rondebosch 7701, Cape Town, South Africa\endgraf
}
\begin{document}

\begin{abstract} Investigating a recent positive solution of a conjecture of Grunewald and O'Halloran for complex finite dimensional nilpotent Lie algebras, we are in the position to find results of existence and uniqueness for the construction of complex nilpotent Lie algebras of arbitrary dimension via pseudobosonic operators. We involve the so-called theory of the deformation of Lie algebras of Gerstenhaber, in order to prove our main results. There isn't a generalized version of the Grunewald-O'Halloran Conjecture when we consider pseudoquonic operators, which specialize to pseudobosonic operators in many cirumstances. Therefore we  prove a result of existence (and a direct construction) of pseudobosonic $O^*$-algebras of operators, but  leave open the problem of the uniqueness of the construction.
\\
\\
\textsc{Keywords and Phrases}:\\ 
Cuntz-Toeplitz operators; pseudobosonic operators, finite dimensional Lie algebras, Heisenberg algebras, affine varieties.\\
\\
\textsc{Mathematics Subject Classification 2020}:\\ 
Primary 47L60, 17B30; Secondary 17B60, 46L30, 46L05.
\end{abstract}

\maketitle

\tableofcontents

\section{Introduction}\label{sec1}

From the first works of von Neumann \cite{vN39}, it was clear  that suitable functional operators played a fundamental role in the description of several dynamical systems in quantum mechanics. Just to give an idea, we may consider  bosons and fermions, which have been studied for a long time and by many authors. These elementary particles form two significant classes,  which are involved almost everywhere in quantum mechanics. At a first level, we may think of fermions as elementary particles which are associated to the structure of matter and can be formalized when an appropriate framework of functional analysis is constructed. At a second level, we may involve  a complex  Hilbert space   and some appropriate   lowering and  raising operators, which indeed lower or raise the eigenvalues of  corresponding eigenstates of a dynamical system. Roughly speaking, this is the main idea to settle  a classical mathematical model, which can help us to understand  properties of a dynamical system. For instance, quantum harmonic oscillators or two level atoms of hydrogen, can be described by fermionic operators in this way, see \cite{tripf}. 

Just to illustrate more the motivation for the representation theory in mathematical physics, we note that it often happens that fermionic operators are  formalized as a pair of functional operators, say $A$ and $B$, where the lowering operator is $A$  and the raising operator is $B$,  satisfy a deformed version of the \textit{ canonical anticommutation relations} (CAR) 
\begin{equation} \label{pseudofermions}
    \{A,B\}=A B +BA  = \mathbb{I},  \qquad \mbox{and} \qquad \{A,A\}=\{B,B\}=0.
\end{equation}
Here $\mathbb{I}$ denotes the identity operator and  $B \neq A^\dag$ a priori, that is, $B$ is not necessarily the adjoint operator $A^\dag$ of $A$. In particular, if this happens, that is, if $B = A^\dag$  and    \eqref{pseudofermions} are satisfied,  $A$ is a fermionic operator (of adjoint $A^\dag$). The terminology is quite standard among the theory of ladder operators, and  in fact one can give physical interpretations of the rules of algebraic nature such as \begin{equation} \label{pseudofermionsbis}
    \{A,A\}=0 \Longleftrightarrow  A^2=0 \qquad \mbox{and} \qquad 
    \{B,B\}=0 \Longleftrightarrow  B^2=0
\end{equation}
in terms of conservation laws or  physical properties.
Moreover one can introduce the operator  $    N = BA$ and give additional interpretations as the Hamiltonian of a dynamical system which satisfy CAR.  The relevance of these  notions was largely explored  in algebraic quantum field theory  and in Lie theory, see \cite{bag2022book, bender1}. Of course, one can generalize in several directions the aforementioned mathematical models, introducing the so called pseudobosonic operators and the pseudoquonic operators, see \cite{bagcohsta, bagquons,  eremel, fiv,  bagrus2018, bagrus2020, gre}. There are in fact  large areas of mathematical physics which are currently devoted to investigate ladder operators with a special focus on nonselfadjoint Hamiltonians in dynamical systems and PT-symmetries.

In the present paper we continue some of our previous investigations in \cite{bagrus2018, bagrus2019, bagrus2020} in connection with the so-called ``deformation theory of Lie algebras'' of Gerstenhaber \cite{gerstenhaber1, gerstenhaber2, gerstenhaber3, gerstenhaber4}. This has been shown to be a powerful tool for the classification of several algebraic and geometric objects, but is receiving attention only in recent years in the area of quantum mechanics. Our main results are placed in Sections \ref{sec4} and \ref{sec5}, in fact they are Theorems \ref{main1}, \ref{main2} and \ref{main3}, illustrating a first connection of the theory of pseudobosonic operators and pseudoquonic operators with the deformation theory of Gerstenhaber. There are a series of open questions and problems, which we found along our investigations placed in Section \ref{sec6}. These will be addressed in successive contributions, after we develop a cohomological theory of q-deformed Heisenberg algebras, which is available at the moment, but not very wide and appropriate for  the investigations which we have done in connection with pseudobosonic operators. Section \ref{sec2} develops in addition to Appendix I of Section \ref{appendix1} some minimal notions which one needs to report, in order to present the main ideas of the deformation theory of finite dimensional complex nilpotent Lie algebras. There are many connections with homological algebra, algebraic geometry, differential geometry and algebraic topology in these sections. Then we shortly report the main ideas of the theory of the pseudobosonic operators in Section \ref{sec3}, having all that we need for the proofs of Theorems \ref{main1} and \ref{main2} in Section \ref{sec4}. Here essentially we focus only on pseudobosonic operators in connection with some conjectures of Grunewald and others \cite{grunewald1, vergne}. Later on we develop a more general construction, where we involve pseudoquonic operators in Section \ref{sec5}, and we find that most of the results which  have been seen for pseudobosonic operators are in agreement with some special situations where pseudoquonic operators appear, see Theorem \ref{main3}. Unfortunately, the analogies are not always perfect and some careful distinctions should be made, since q-deformed Heisenberg algebras replace the role of Lie algebras when one wants to investigate pseudoquonic operators (some elementary notions are captured in Appendix II of Section \ref{qdeformedappendix}). These structures are much more general and a series of open problems appear for their complete classification, so that
 a deformation theory for q-deformed Heisenberg algebras (in the sense of Gerstenhaber) isn't well developed at the moment. We observe this in Section \ref{sec6}.

\section{Grunewald-O'Halloran Conjecture and Vergne's Conjecture: a short review}\label{sec2}

The theory of finite dimensional Lie algebras is well developed in connection with the mathematical physics, since Lie algebras are involved in several aspects of theoretical physics from the origins of the quantum mechanics, see \cite{gokh, sergey2, snobwin}.  Despite finite dimensional Lie algebras on fields of characteristic zero being among the most studied Lie algebras, there are still some interesting  open problems, which correlate their structure  with important applications. In fact, the presence of conservation laws can be detected when one formalizes certain dynamical systems using appropriate   ``Canonical Commuting Relations'' (CCR) and  ``Canonical Anticommuting Relations'' (CAR), see \cite{bagcohsta, bagquons, bagrus2018}. Grunewald and O'Halloran \cite[Conjecture]{grunewald1} formulated the idea of degeneration of Lie bracket (see Definition \ref{degenerate} below) and noted that:

\begin{conjecture}[Grunewald--O'Halloran Conjecture]\label{goh} Every  complex  finite dimensional nilpotent Lie algebra (of dimension more than two) is the degeneration of another  Lie algebra. \end{conjecture}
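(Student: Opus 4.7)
The plan is to translate the conjecture into the language of the algebraic variety of Lie brackets. Fix a vector space $V = \mathbb{C}^n$ with $n \geq 3$ and let $\mathcal{L}_n \subset \mathrm{Hom}(\wedge^2 V, V)$ be the affine variety cut out by the Jacobi identity, whose points are the Lie brackets on $V$. The group $\mathrm{GL}(V)$ acts on $\mathcal{L}_n$ by change of basis, and the isomorphism classes of $n$-dimensional Lie algebras correspond to $\mathrm{GL}(V)$-orbits. A bracket $\mu_0$ is a degeneration of a bracket $\mu \not\sim \mu_0$ precisely when $\mu_0 \in \overline{\mathrm{GL}(V)\cdot \mu}$ in the Zariski closure, equivalently when there is a one-parameter family $\mu_t$ of brackets, all $\mathrm{GL}(V)$-isomorphic to $\mu$ for $t \neq 0$, with $\mu_t \to \mu_0$ as $t \to 0$. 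The conjecture thus asks, for every nilpotent $\mu_0 \in \mathcal{N}_n \subset \mathcal{L}_n$, to produce such a family.

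The abelian case is immediate: the scaling $\mu_t = t\,\nu$ with any non-abelian $\nu$ converges to $\mu_0 = 0$. Assume therefore $\mathfrak{g} = (V, \mu_0)$ is non-abelian nilpotent. The first pillar of the attack would be Vergne's theorem, which asserts $H^2(\mathfrak{g},\mathfrak{g}) \neq 0$ in Chevalley--Eilenberg cohomology for every non-abelian finite-dimensional nilpotent $\mathfrak{g}$, yielding a non-trivial infinitesimal deformation of $\mu_0$. The second pillar is the Nijenhuis--Richardson rigidity criterion: $H^2(\mathfrak{g},\mathfrak{g}) = 0$ implies that the $\mathrm{GL}(V)$-orbit of $\mu_0$ is Zariski open, so rigidity is genuinely obstructed by non-trivial $H^2$. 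My strategy would be to upgrade a $2$-cocycle $\varphi$ representing a non-zero class in $H^2(\mathfrak{g},\mathfrak{g})$ to a formal one-parameter family $\mu_t = \mu_0 + t\varphi + t^2 \varphi_2 + \cdots$ of Lie brackets with $\mu_t \not\sim \mu_0$ for $t \neq 0$; the direction $t \to 0$ then exhibits $\mu_0$ as a degeneration of $\mu_1$.

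The core difficulty is the higher-order obstruction: for $\mu_t$ to satisfy the Jacobi identity at every order, the Maurer--Cartan equation on the Chevalley--Eilenberg complex, expressed through the Gerstenhaber bracket, must be solvable order by order, and the quadratic obstruction at step $k$ lives in $H^3(\mathfrak{g}, \mathfrak{g})$. To overcome this I would combine two ideas. First, every finite-dimensional nilpotent Lie algebra appears as a central extension $0 \to \mathbb{C} z \to \mathfrak{g} \to \mathfrak{q} \to 0$, so induction on $\dim \mathfrak{g}$ lets me lift deformations of $\mathfrak{q}$ to compatible deformations of $\mathfrak{g}$, reducing the cohomological problem to lower dimension. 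Second, the stratification of $\mathcal{N}_n$ by $\mathrm{GL}(V)$-orbit dimension provides a geometric shortcut: every irreducible component of $\mathcal{N}_n$ through $\mu_0$ has generic orbit of dimension at most $\dim \mathrm{GL}(V) - \dim \mathrm{Der}(\mathfrak{g})$, so whenever this bound falls short of the dimension of that component, the orbit closure of a generic point of the component is forced to meet $\mu_0$.

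I expect the main obstacle to be the class of \emph{characteristic nilpotent} Lie algebras, whose derivation algebras consist entirely of nilpotent endomorphisms. On these algebras $\mathrm{Der}(\mathfrak{g})$ is too small for the orbit-dimension bound to produce a non-trivial degeneration, and the obstruction map $H^2(\mathfrak{g},\mathfrak{g}) \to H^3(\mathfrak{g},\mathfrak{g})$ need not vanish on any of the visible classes. For this stratum I envisage having to construct the degenerating family by hand, using an explicit filtration adapted to the lower central series together with an \emph{ad hoc} choice of $\varphi$; this is precisely the subtle step that historically resisted proof and that the recent positive solution referenced in the abstract must have resolved.
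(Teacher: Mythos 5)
You should first note that the paper does not prove this statement at all: it is Conjecture~\ref{goh}, quoted from Grunewald and O'Halloran, and the paper only records partial confirmations of it, namely Theorem~\ref{niceconfirmation} (dimension at most seven, due to Herrera-Granada and Tirao) and the case of algebras admitting a nontrivial semisimple derivation; for dimension $\ge 8$ without such a derivation the paper explicitly says a positive answer ``is not guaranteed.'' So there is no proof in the paper to compare against, and your text is a strategy outline rather than a proof --- you concede as much in your last paragraph, where the characteristically nilpotent stratum is deferred to an unspecified \emph{ad hoc} construction. That stratum is exactly the open case, so the proposal cannot stand as a proof of the conjecture.

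Beyond this, two of your steps would fail as stated. First, the non-vanishing $H^2(\mathfrak{g},\mathfrak{g})\neq 0$ for nilpotent $\mathfrak{g}$ (a theorem of Dixmier, not of Vergne; Vergne's Conjecture~\ref{vc} is the rigidity statement) does not produce a degeneration: the Maurer--Cartan obstructions in $H^3(\mathfrak{g},\mathfrak{g})$ need not vanish, an integrable family $\mu_t$ may consist of brackets all isomorphic to $\mu_0$, and, as Remark~\ref{subtledifference} stresses, degenerations and deformations are not interchangeable --- even a genuine formal deformation only yields a degeneration in the sense of Definition~\ref{degenerate} after you verify $\mu_t\not\simeq\mu_0$ for $t\neq 0$. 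Second, the orbit-dimension ``shortcut'' is invalid: if $\mathcal{O}(\mu_0)$ is not dense in an irreducible component $C$ of the variety, then $\mu_0$ lies in the closure of the \emph{union} of the remaining orbits of $C$, but $C$ typically contains continuous families of orbits, so this does not place $\mu_0$ in the closure of any single orbit, which is what a degeneration requires. The partial results the paper relies on argue quite differently and constructively: a nontrivial semisimple derivation gives a grading, hence a one-parameter subgroup of $\mathrm{GL}_n(\mathbb{C})$ and an explicit non-isomorphic bracket degenerating to $\mu_0$, and the low-dimensional cases are completed by a case-by-case treatment of the characteristically nilpotent algebras.
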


Conjecture \ref{goh} has been proved by  Herrera-Granada and Tirao \cite{herrera1} in low dimensional cases.

\begin{theorem}[See \cite{herrera1}, Theorem 2]\label{niceconfirmation} Conjecture \ref{goh} is true when the dimension is at most seven.
\end{theorem}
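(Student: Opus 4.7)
The plan is to proceed along the (known) classification of complex finite dimensional nilpotent Lie algebras up to dimension seven and to produce, case by case, an explicit Lie algebra which degenerates to each member of the list. For $n \leq 6$ the classification is classical and very short (Heisenberg algebras, filiform algebras $L_n$, the standard graded nilpotent algebras, and a handful of extra low-dimensional examples), while in dimension $7$ one has a finite list of isomorphism classes together with a small number of continuous families. The concrete task is thus the following: for each fixed nilpotent $\mathfrak{n}$ of dimension $n \leq 7$, exhibit a non-isomorphic Lie algebra $\mathfrak{g}$ on $\mathbb{C}^n$ and a curve $g_t \in \mathrm{GL}(n,\mathbb{C})$ such that $g_t \cdot \mu_{\mathfrak{g}} \to \mu_{\mathfrak{n}}$ as $t \to 0$, where $\mu$ denotes the structure constants as a point of the affine variety $\mathcal{L}_n$ of Lie brackets on $\mathbb{C}^n$.

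To organise the case analysis I would split the list into three blocks. First, those $\mathfrak{n}$ admitting a nontrivial semisimple derivation, equivalently, a grading by a nontrivial torus: here one can directly build a degeneration from a solvable extension $\mathfrak{n}\rtimes \mathbb{C}$, since rescaling along the torus already yields an explicit one-parameter family collapsing to $\mathfrak{n}$. Second, those $\mathfrak{n}$ that are themselves degenerations of other nilpotent algebras on the list: for these, transitivity of degeneration reduces the problem to the previous case. Third, and most delicate, the characteristically nilpotent algebras, whose derivation algebra is nilpotent, so that neither of the previous tricks is available.

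The main obstacle I anticipate is exactly this third block. Characteristically nilpotent algebras first appear in dimension seven and form a short but nontrivial list; they look rigid inside the subvariety $\mathcal{N}_n$ of nilpotent Lie brackets, and Conjecture \ref{goh} asks that they nonetheless be non-rigid inside the ambient variety $\mathcal{L}_n$. For each such algebra one has to write down by inspection a genuinely non-nilpotent (typically solvable) Lie algebra $\mathfrak{g}$ and a family $g_t$, generally a diagonal rescaling composed with a polynomial correction, and then verify the degeneration by directly computing the limit of the structure constants. This is the step I expect to consume the bulk of the argument.

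Once the three blocks are handled, the theorem follows by collating the finitely many explicit constructions: every nilpotent $\mathfrak{n}$ of dimension $3 \leq n \leq 7$ from the classification lies in the orbit closure of some non-isomorphic $\mathfrak{g}$, which is precisely the statement of Conjecture \ref{goh} in these dimensions. The overall proof is therefore finitary and computational, its difficulty concentrated entirely in finding the right solvable degenerating partners for the characteristically nilpotent cases in dimension seven.
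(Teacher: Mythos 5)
The paper does not actually prove this statement: it is imported verbatim from Herrera-Granada and Tirao \cite{herrera1}, so there is no internal proof to compare against. That said, your outline reproduces the structure of the proof in the cited source quite faithfully: Theorem 1 of \cite{herrera1} shows that every nilpotent Lie algebra admitting a nontrivial semisimple derivation (equivalently, of rank $\ge 1$) degenerates from the solvable extension $\mathfrak{n}\rtimes\mathbb{C}$ by rescaling along the torus action, exactly as in your first block; since characteristically nilpotent Lie algebras do not exist in dimension $\le 6$, this already settles everything up to dimension six, and what remains in dimension seven is the finite list of characteristically nilpotent classes, which the authors dispatch by exhibiting explicit degenerating partners. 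So your decomposition into three blocks is the right one and matches the literature.

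The gap is that your proposal announces the hard part rather than carrying it out. For the characteristically nilpotent seven-dimensional algebras, neither the torus-rescaling trick nor transitivity is available, and the entire mathematical content of the theorem in that range consists of writing down, for each such $\mathfrak{n}$, a concrete non-isomorphic $\mathfrak{g}$ and a curve $g_t\in\mathrm{GL}(7,\mathbb{C})$ with $g_t\cdot\mu_{\mathfrak{g}}\to\mu_{\mathfrak{n}}$, and then verifying the limit of the structure constants. Saying ``this is the step I expect to consume the bulk of the argument'' defers precisely the step on which the theorem stands or falls; a priori it is not even clear that such partners exist for every characteristically nilpotent class (that is exactly what Conjecture \ref{goh} asserts and why these algebras are the delicate case). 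Until those finitely many explicit computations are exhibited, you have a correct strategy and a correct reduction, but not a proof.
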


This confirmed previous investigations of Seeley and Yau in \cite{seeley1, seeley2}, but also the theory of Gerstenhaber \cite{gerstenhaber1, gerstenhaber2, gerstenhaber3, gerstenhaber4} who introduced the ``deformation theory of Lie algebras'' via cohomological techniques. We inform the reader that one should be careful with the meaning of the words ``degeneration'' and ``deformation'', see \cite{gerstenhaber1, grunewald1, grunewald2} or  Definitions \ref{deformation} and \ref{lindef} below. Note also that Dixmier and others \cite{ancogoze1,  dixmier} gave results of structure for  complex nilpotent Lie algebras of low dimension without using the deformation theory, see also \cite[Chapters 16, 17, 18, 19]{snobwin}.  

Appendix I  of Section \ref{appendix1}  has been written, in order to offer feedback and preliminary notions of homological nature to the original ideas of the deformation theory of Lie algebras. For larger dimensions, it is in fact natural to  assume that there are complex finite dimensional Lie algebras with nontrivial semisimple derivations and with this hypothesis Conjecture \ref{goh} is valid, see \cite[Theorem 1]{herrera1}. Note that  Remark \ref{derivations} gives the formal definition of \textit{nontrivial semisimple derivation}.  What happens if we don't have nontrivial semisimple derivations and the dimension is $ \ge 8$ ? It is still possible to have a positive answer, as indicated by \cite{herrera2, herrera3}, but one must consider  filiform Lie algebras, or alternative assumptions of structural nature, since otherwise a positive solution of the  Grunewald--O'Halloran Conjecture is not guaranteed.

Another well-known conjecture, which is interesting the scientific community of mathematicians and physicists working in Lie theory, can be found in \cite{vergne} and is due to Vergne:

\begin{conjecture}[Vergne's Conjecture]\label{vc} There are no rigid complex nilpotent Lie algebras  in the algebraic variety $\mathrm{L}_n$ of complex Lie algebras of dimension $n$. \end{conjecture}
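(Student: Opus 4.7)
The plan is to reduce Conjecture~\ref{vc} to Conjecture~\ref{goh}. Recall that a point $\mathfrak{g}\in\mathrm{L}_n$ is rigid precisely when its $\mathrm{GL}_n$-orbit $O_{\mathfrak{g}}$ is Zariski open in $\mathrm{L}_n$. Suppose $\mathfrak{g}$ is a non-trivial degeneration of some $\mathfrak{h}\in\mathrm{L}_n$, i.e.\ $\mathfrak{g}\in\overline{O_{\mathfrak{h}}}$ with $\mathfrak{h}\not\cong\mathfrak{g}$. Then every open neighbourhood of $\mathfrak{g}$ in $\mathrm{L}_n$ meets $O_{\mathfrak{h}}$, which is disjoint from $O_{\mathfrak{g}}$, so $O_{\mathfrak{g}}$ cannot be open and $\mathfrak{g}$ fails to be rigid. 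Equivalently, reading the degeneration parameter backwards yields a one-parameter family $\{\mathfrak{g}_t\}_{t\in\mathbb{C}}$ with $\mathfrak{g}_0\cong\mathfrak{g}$ and $\mathfrak{g}_t\cong\mathfrak{h}\not\cong\mathfrak{g}$ for $t\neq 0$, which is a non-trivial deformation of $\mathfrak{g}$. Thus a positive answer to Conjecture~\ref{goh} for $\mathfrak{g}$ rules out rigidity of $\mathfrak{g}$.

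With this reduction in hand, I would proceed by dimension. For $n\leq 7$ the conclusion is immediate from Theorem~\ref{niceconfirmation}: every nilpotent $\mathfrak{g}\in\mathrm{L}_n$ is a non-trivial degeneration of another Lie algebra and is therefore non-rigid. For $n\geq 8$, I would first dispose of those nilpotent Lie algebras admitting a nontrivial semisimple derivation (in the sense of Remark~\ref{derivations}), invoking \cite[Theorem 1]{herrera1} to produce an explicit non-trivial degeneration and so, by the previous paragraph, non-rigidity. The remaining class, and the main obstacle, is that of characteristically nilpotent Lie algebras (those with no nonzero semisimple derivation) in dimensions $\geq 8$: this is precisely where Conjecture~\ref{goh} is known only partially (e.g.\ in the filiform case via \cite{herrera2,herrera3}), and where the potential counterexamples to rigidity have historically been sought.

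For this remaining class the natural route is cohomological. By the Nijenhuis--Richardson criterion, a Lie algebra $\mathfrak{g}$ is rigid whenever $H^2(\mathfrak{g},\mathfrak{g})=0$, so to refute rigidity it suffices to exhibit an unobstructed class in $H^2(\mathfrak{g},\mathfrak{g})$. Concretely, I would try to construct a $2$-cocycle $\varphi\colon\mathfrak{g}\wedge\mathfrak{g}\to\mathfrak{g}$ whose primary obstruction $[\varphi,\varphi]\in H^3(\mathfrak{g},\mathfrak{g})$ vanishes and which then integrates to a formal one-parameter deformation in the sense of Gerstenhaber \cite{gerstenhaber1,gerstenhaber2}. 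The hard part is that, for characteristically nilpotent algebras in large dimensions, both $H^2$ and $H^3$ tend to be large and the vanishing of obstructions does not follow from any universal structural feature; each infinite family would likely require a tailor-made construction, possibly guided by the filiform computations of \cite{herrera2,herrera3} or by the explicit degenerations in \cite{ancogoze1,dixmier}. Short of such a uniform argument, a complete proof of Conjecture~\ref{vc} appears to hinge on first completing Conjecture~\ref{goh}.
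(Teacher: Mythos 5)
This statement is not proved in the paper at all: it is Vergne's Conjecture, recorded as an open conjecture, and the paper only remarks (without proof) that the Grunewald--O'Halloran Conjecture is stronger than it for nilpotent Lie algebras. So there is no ``paper proof'' to match, and your proposal does not close the gap either --- as you yourself concede in your last sentence. Your reduction of Conjecture~\ref{vc} to Conjecture~\ref{goh} is correct and is essentially the implication the paper alludes to: if $\mathfrak{g}\in\overline{\mathcal{O}(\mathfrak{h})}\setminus\mathcal{O}(\mathfrak{h})$ with $\mathcal{O}(\mathfrak{g})\neq\mathcal{O}(\mathfrak{h})$, then $\mathcal{O}(\mathfrak{g})$ cannot be open since it would have to meet the disjoint orbit $\mathcal{O}(\mathfrak{h})$. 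Likewise your treatment of dimension $\le 7$ via Theorem~\ref{niceconfirmation} and of algebras with a nontrivial semisimple derivation via \cite[Theorem~1]{herrera1} is sound. But the residual class --- characteristically nilpotent Lie algebras in dimension $\ge 8$ --- is exactly the open part of both conjectures, and your third paragraph is a research programme, not an argument. A conjecture cannot be discharged by a plan.

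Two technical points in that third paragraph also need correction. First, the Nijenhuis--Richardson theorem gives $H^2(\mathfrak{g},\mathfrak{g})=0\Rightarrow\mathfrak{g}$ rigid, but the converse is false: exhibiting a nonzero, even unobstructed, class in $H^2$ does not by itself refute rigidity, since there exist rigid Lie algebras with nonvanishing $H^2$. Second, even granting an integrable $2$-cocycle and hence a formal one-parameter family $\mu_t$ in the sense of Gerstenhaber, you still must show (a) that $\mu_t\not\cong\mu$ for generic $t$, and (b) that this deformation actually witnesses $\mu$ as a boundary point of another orbit; as Remark~\ref{subtledifference} of the paper emphasizes, deformations and degenerations are distinct notions and neither implication is automatic. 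So even the strategy, as stated, would need repair before it could in principle yield non-rigidity.
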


Conjecture \ref{vc} is more sophisticated to discuss here, since it requires some familiarity with algebraic geometry, Lie theory and algebraic topology. For instance, one can show that the Grunewald--O'Halloran Conjecture is stronger than the Vergne's Conjecture when we consider finite dimensional nilpotent Lie algebras, but they are not equivalent in general.


\begin{definition}[See \cite{hartshorne}, Definition, Chapter I, \S 1]\label{av}  Consider a polynomial $f_i$ in  $n$ complex variables $x_1$, $x_2$, ..., $x_n$  with complex coefficients and roots 
   \begin{equation} f_i^{-1}(0)=\{ (x_1, \ldots, x_n) \in \mathbb{C}^n \mid f_i (x_1, \ldots, x_n) = 0\},
   \end{equation}
where $i \in \{1,2, \ldots,m\}$  and $m \in \mathbb{N}$. Now consider  $S=\{f_1, \ldots,f_m \} \subseteq \mathbb{C}[x_1, \ldots, x_n]$, where $\mathbb{C}[x_1, \ldots, x_n]$ is the ring of  polynomials in $n$ variables with complex coefficients, and  the roots
\begin{equation} N(S)=\{(x_1, \ldots, x_n) \in \mathbb{C}^{n} \mid f_i(x_1, \ldots, x_n)=0,  \ \ \forall f_i \in S \} =\bigcap^m_{i =1} f_i^{-1}(0).
\end{equation}
A subset $T$ of $\mathbb{C}^n$ is called \textit{affine variety}, if $T = N(S)$, that is, if there exists some set $S$ of finitely many polynomials of $\mathbb{C}[x_1, \ldots, x_n]$ for which their roots $N(S)$ agree with $T$. Now  the set of $\{f^{-1}_i(0) \mid 1\le i \le m \}$ forms a basis for the closed sets of a topology, which  is known as  \textit{Zariski topology} on $\mathbb{C}^n$, and   $T$ possesses  the induced Zariski topology from $\mathbb{C}^n$. 
\end{definition}

Following \cite{grunewald1} and \cite[Appendix 3]{hofmor}, we may define the usual \textit{wedge product} $(x_1, \ldots, x_n) \wedge (y_1, \ldots, y_n)$ of two complex vectors $(x_1, \ldots, x_n)$ and $ (y_1, \ldots, y_n)$ of $\mathbb{C}^n$ and introduce
\textit{complex exterior vector space} \begin{equation}\wedge^2 \mathbb{C}^n = \mathbb{C}^n \wedge \mathbb{C}^n =\{ (x_1, \ldots, x_n) \wedge (y_1, \ldots, y_n) \mid x_1, y_1, \ldots, x_n,y_n \in \mathbb{C}\} 
\end{equation}
which is in fact an algebra (called \textit{exterior algebra}). Consequently, we have the vector space \begin{equation}
\mathrm{Hom}(\wedge^2 \mathbb{C}^n,\mathbb{C}^n)= \{ \varphi : \wedge^2 \mathbb{C}^n \mapsto \mathbb{C}^n  \mid \varphi \ \mbox{is a linear homomorphism of complex vector spaces}\}
\end{equation} and in this situation  one can see that: 

\begin{remark} $\mathrm{Hom}(\wedge^2 \mathbb{C}^n,\mathbb{C}^n)$  possesses the structure of affine variety as per Definition \ref{av} and    the set $\mathrm{L}_n $ containing the alternating bilinear maps which satisfy the Jacobi identity is a subset of $\mathrm{Hom}(\wedge^2 \mathbb{C}^n,\mathbb{C}^n)$, see details in \cite{grunewald1, grunewald2, rem1}. In particular,   $\mathrm{L}_n $ possesses a structure of affine variety. \end{remark}

We shall recall another important observation, which is present in \cite{grunewald1, herrera1, herrera2, herrera3, rem1, vergne}.

\begin{remark}\label{liebracketsiso}
Among all  complex Lie algebras of finite dimension, a Lie algebra $\mathfrak{g}$  is uniquely determined by its Lie bracket $\mu$. Therefore it is equivalent either to assign a finite dimensional complex Lie algebra or to assign a Lie bracket. \end{remark}

Thanks to Remark \ref{liebracketsiso}, Grunewald and other authors \cite{burde1, burde2, gerstenhaber1, gerstenhaber2, grunewald1, herrera1, herrera2, herrera3} call $\mathrm{L}_n$  the \textit{affine variety of complex Lie algebras of dimension} $n$, or also,  the \textit{affine variety of Lie brackets} $\mu$ \textit{on} $\mathbb{C}^n$, since one can indistinctly refer to $\mathfrak{g}$ or to  $\mu$ without ambiguities. Again from \cite{grunewald1}, we know that the general linear group $\mathrm{GL}_n(\mathbb{C})$ of dimension $n$ acts on $\mathrm{L}_n$ via 
\begin{equation}\label{deformation} (g,\mu) \in \mathrm{GL}_n(\mathbb{C}) \times \mathrm{L}_n \mapsto g \cdot \mu (x,y)=g(\mu(g^{-1}x,g^{-1}y)) \in \mathrm{L}_n, 
\end{equation}
producing  \begin{equation}\label{orbitdef} \mbox{orbits} \ \ \mathcal{O}(\mu)= \mathrm{GL}_n(\mathbb{C}) \cdot \mu  \ \ \  \mbox{in} \ \  \mathrm{L}_n \ \  \ \ \ \ \mbox{and} \ \  \ \ \ \mbox{stabilisers} \ \   \ {\mathrm{GL}_n(\mathbb{C})}_\mu \ \ \mbox{in} \ \  \mathrm{GL}_n(\mathbb{C}).\end{equation} The first are the isomorphism classes of $\mu$, so we get the orbit space
\begin{equation}
\frac{\mathrm{L}_n}{\mathrm{GL}_n(\mathbb{C})} =\bigcup_{\mu \in \mathrm{L}_n} \mathcal{O}(\mu)
\end{equation}
while the second are the elements of $\mathrm{GL}_n(\mathbb{C})$ fixing $\mu$ under the action \eqref{deformation}.  It is also useful to mention here that $\mathcal{O}(\mu)$ possesses the structure of differentiable homogeneous manifold in  $\mathrm{L}_n$, see \cite{gokh, grunewald1, grunewald2, hofmor} and Appendix I in Section \ref{appendix1}.

\begin{definition}[Degeneration and Rigidity, see \cite{grunewald1, grunewald2, rem1}]\label{degenerate} A Lie bracket $\mu$ is said to \textit{degenerate} to $\lambda$ if $\lambda \in \partial \ \mathcal{O}(\mu) $, that is, if $\lambda$ belongs to the  boundary  of the orbit $\mathcal{O}(\mu)$ with respect to the Zariski topology. Instead we say that $\mu$ is \textit{rigid} if $\mathcal{O}(\mu)$ is open in the Zariski topology.
\end{definition}

The absence of degenerations is described topologically by the condition of rigidity, while the presence of rigidity in Definition \ref{degenerate} shows the absence of degenerations.
 The most famous example of rigid Lie algebra is given by $\mathfrak{sl}_2(\mathbb{C})$, that is, by the special linear Lie algebra  of  $2$-by-$2$ with complex coefficients. Note that Definition \ref{degenerate} means that  $\lambda \in \overline{\mathcal{O}(\mu) } \setminus \mathcal{O}(\mu)$, that is, $\lambda$ belongs to  the complement of $\mathcal{O}(\mu)$ in the closure $\overline{\mathcal{O}(\mu) }$ with respect  to the Zariski topology.

\begin{definition}[See \cite{grunewald1}, Definition 1.1]\label{lindef} A \textit{linear deformation of the Lie bracket} $\mu$ is a family of Lie brackets $\mu_t$ with $t$ nonzero complex number such that
\begin{equation}\label{ld} \mu_t=\mu + t\phi, 
\end{equation}
where $\phi  $ is a skew-symmetric bilinear form on $\mathbb{C}^n$ (i.e.: an element of $\mathrm{L}_n$).
\end{definition}

More generally, one can introduce nonlinear deformations which specialize to Definition \ref{lindef}, beginning with  $\phi_i \in \mathrm{Hom}(\wedge^2 \mathbb{C}^n,\mathbb{C}^n)$  skew-symmetric bilinear form for all $i=1,2,\ldots,k$ with $k$ positive integer and calling \textit{k-deformation of the Lie bracket} $\mu$ the sum
\begin{equation}\label{nld} \mu_t=\mu + t\phi_1 + t^2\phi_2 + \ldots + t^k\phi_k=\mu + \sum^k_{i=1} t^i \phi_i. 
\end{equation}

We  refer to \eqref{nld} as \textit{nonlinear deformation of the Lie bracket} $\mu$ whenever  $\sum^k_{i=2} t^i \phi_i \neq 0$.

\begin{remark}\label{subtledifference} It is crucial to observe that Definitions \ref{degenerate} and \ref{lindef} are different a priori. However  the orbit closure   $ \overline{\mathcal{O}(\mu)} $ has been well described in \cite{grunewald1, grunewald2} and it is possible to see that every degeneration $\mu_1$ of a Lie algebra give rise to a deformation $\mu_2$. Thus the existence of degenerations implies the existence of nontrivial deformations. Roughly speaking, deformations are special types of degenerations, but there are degenerations which are not necessarily deformations. In any case, if  $\mu$ is rigid, then we  have neither degenerations nor deformations. We omit details, since they  can be found in \cite{grunewald1, grunewald2}, 
\end{remark}

\begin{problem}A complete classification of finite dimensional complex Lie algebras with quadratic deformations of the Lie bracket, i.e.:  $\mu_t=\mu + t \phi_1 + t^2 \phi_2$, seems to be still an open problem.  \end{problem}

Several interpretations of physical nature have been proposed for  \eqref{nld}. For instance,  we may think that a ``deformation'' of a mathematical object is a family of the same kind of objects depending on  parameters. An important example is offered by the infinitesimal deformations of an algebra of a given type (so not necessarily finite dimensional complex Lie algebras) and these are known to be parametrized by a second cohomology of the algebra, see Definition \ref{lindef} and Remarks \ref{derivation2}, \ref{derivation3} and \ref{crucialobservation}.

\begin{remark} Deformations of Lie algebras may be controlled by cohomological restrictions, in fact the main idea is to construct new objects starting from a given object and to infer some of its original properties and this is typical of extension theory in abstract algebra.  It should also be mentioned that some mathematical formulations of the notion of  quantization are based on the algebra of observables and consist in replacing the classical algebra of observables  by a noncommutative one constructed via  deformations of classical algebras. This idea can be found in the works of Kontsevich \cite{kont1, kont2} and was used to solve a longstanding problem in mathematical physics, that is, every Poisson manifold admits formal quantization which is canonical up to  equivalence.
\end{remark}

\begin{remark}\label{jacobiidentity} It is useful to note that \eqref{ld} satisfies the Jacobi identity, since for $x,y,z$ belonging to a finite dimensional complex Lie algebra $\mathfrak{g}$ with Lie bracket $\mu$ and $\phi $ as in Definition \ref{lindef}, we get for a nonzero complex number $t$ that
\begin{equation}\label{eq1}
\mu_t(x, \mu_t(y,z))=\mu(x, \mu(y,z)) + t \phi(x, \phi(y,z));
\end{equation}
\begin{equation}\label{eq2}
\mu_t(y, \mu_t(z,x))=\mu(y, \mu(z,x)) + t \phi(y, \phi(z,x));
\end{equation}
\begin{equation}\label{eq3}
\mu_t(z, \mu_t(x,y))=\mu(z, \mu(x,y)) + t \phi(z, \phi(x,y));
\end{equation}
and so we find the Jacoby identity
\begin{equation}    \mu_t(x, \mu_t(y,z))  + \mu_t(y, \mu_t(z,x)) + \mu_t(z, \mu_t(x,y))    
\end{equation}
\[=\mu(x, \mu(y,z)) + t \phi(x, \phi(y,z)) + \mu(y, \mu(z,x)) + t \phi(y, \phi(z,x)) + \mu(z, \mu(x,y)) + t \phi(z, \phi(x,y))\]
\[=\underbrace{\mu(x, \mu(y,z)) + \mu(y, \mu(z,x)) + \mu(z, \mu(x,y))}_{0} + t \underbrace{(\phi(x, \phi(y,z)) + \phi(y, \phi(z,x)) + \phi(z, \phi(x,y)))}_0 = 0.\]
\end{remark}

We may summarize the argument above in the following statement:

\begin{proposition}[Preservation Property of Jacobi Identity]\label{preservationproperty1} Both linear deformations and nonlinear deformations of Lie brackets  preserve the Jacobi identity.
\end{proposition}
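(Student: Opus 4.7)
The proposition is a mild extension of the computation already carried out in Remark \ref{jacobiidentity}, and my plan is to organise that computation into two steps, one for each case appearing in the statement.

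For the linear case \eqref{ld}, I would follow Remark \ref{jacobiidentity} directly: expand $\mu_t(x,\mu_t(y,z))$ using the bilinear structure of $\mu_t=\mu+t\phi$, record the analogous identities \eqref{eq1}--\eqref{eq3} obtained by cycling $(x,y,z)$, and add them. The resulting sum splits into a cyclic sum involving only $\mu$, which vanishes because $\mu$ satisfies the Jacobi identity by hypothesis, and a cyclic sum involving only $\phi$ (scaled by a power of $t$), which vanishes because $\phi \in \mathrm{L}_n$ is itself a Lie bracket in the sense of Remark \ref{liebracketsiso}. Therefore $\mu_t$ satisfies the Jacobi identity, and \eqref{ld} indeed defines a Lie bracket for every nonzero $t$.

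For the nonlinear case \eqref{nld}, I would upgrade the argument to a polynomial one. Writing $\mu_t = \mu + \sum_{i=1}^{k} t^{i} \phi_{i}$ with each $\phi_i \in \mathrm{Hom}(\wedge^2\mathbb{C}^n,\mathbb{C}^n)$ skew-symmetric, the iterated expression $\mu_t(x, \mu_t(y,z))$ becomes a polynomial in $t$ of degree at most $2k$ whose coefficients are bilinear combinations of $\mu$ and the $\phi_i$. Taking the cyclic sum in $(x,y,z)$ and collecting terms order by order in $t$, the coefficient of $t^{0}$ is exactly the Jacobi identity for $\mu$, and the coefficient of $t^{2i}$ is exactly the Jacobi identity for $\phi_{i}$; both therefore vanish, just as in the linear computation of Remark \ref{jacobiidentity}.

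The main obstacle lies in the intermediate coefficients, appearing at orders $t^{m}$ for $1\le m \le 2k-1$, which collect genuinely mixed cyclic sums such as $\sum_{\mathrm{cyc}}\bigl(\mu(x,\phi_{i}(y,z)) + \phi_{i}(x,\mu(y,z))\bigr)$ and $\sum_{\mathrm{cyc}}\bigl(\phi_{j}(x,\phi_{i}(y,z)) + \phi_{i}(x,\phi_{j}(y,z))\bigr)$. Neither of these is controlled by the Jacobi identity for a single bilinear form in isolation. I would handle them by interpreting \eqref{nld} as a curve $t \mapsto \mu_{t}$ in the affine variety $\mathrm{L}_{n}$, which is built into the definition of a $k$-deformation and therefore forces the polynomial identity $\sum_{\mathrm{cyc}} \mu_{t}(x,\mu_{t}(y,z)) \equiv 0$ to hold identically in $t$. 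By the identity principle for polynomials, each coefficient, including the mixed ones, must vanish, which is precisely the preservation statement claimed. The linear case \eqref{ld} is then recovered as the specialisation $k=1$, in agreement with the direct calculation of Remark \ref{jacobiidentity}.
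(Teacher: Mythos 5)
Your instinct to isolate the mixed cyclic sums $\sum_{\mathrm{cyc}}\bigl(\mu(x,\phi_i(y,z))+\phi_i(x,\mu(y,z))\bigr)$ and $\sum_{\mathrm{cyc}}\bigl(\phi_j(x,\phi_i(y,z))+\phi_i(x,\phi_j(y,z))\bigr)$ is exactly right, and it identifies a term that the paper's own induction (and the computation in Remark \ref{jacobiidentity}) silently discards: expanding $\mu_t(x,\mu_t(y,z))$ bilinearly for $\mu_t=\mu+t\phi$ produces $\mu(x,\mu(y,z))+t\bigl(\mu(x,\phi(y,z))+\phi(x,\mu(y,z))\bigr)+t^2\phi(x,\phi(y,z))$, not the two-term expression \eqref{eq1}. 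However, your resolution of these mixed terms is circular. Saying that $t\mapsto\mu_t$ is ``a curve in $\mathrm{L}_n$'' is precisely the assertion that $\mu_t$ satisfies the Jacobi identity for every $t$, i.e.\ it is the conclusion of the proposition; you cannot invoke it to kill the intermediate coefficients and then deduce that conclusion. Either the definition of a deformation already stipulates that each $\mu_t$ is a Lie bracket (as Definition \ref{lindef} literally does, in which case the proposition is a tautology and no coefficient analysis is needed), or the content of the proposition is that the Jacobi identity for $\mu$ and for each $\phi_i$ separately forces the Jacobi identity for $\mu_t$ --- and that implication is false in general: the sum of two Lie brackets need not be a Lie bracket unless the pair is compatible, i.e.\ unless the mixed cyclic sum vanishes, which is an extra cocycle-type condition not implied by the stated hypotheses.

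A secondary inaccuracy: the coefficient of $t^{2i}$ in the cyclic sum is not ``exactly the Jacobi identity for $\phi_i$''. For $1\le i\le k-1$ it also receives contributions from every pair $(j,j')$ with $j+j'=2i$, including $\phi_{j}(x,\phi_{j'}(y,z))$ with $j\neq j'$ and, when $2i\le k$, the mixed terms $\mu(x,\phi_{2i}(y,z))+\phi_{2i}(x,\mu(y,z))$. Only the extreme coefficients of $t^0$ and $t^{2k}$ are pure. To turn your outline into a correct proof you must either adopt the tautological reading and say so explicitly, or add to the hypotheses the order-by-order compatibility conditions (the vanishing of each intermediate coefficient) and verify them; the polynomial identity principle you invoke is sound in itself, but it merely transfers the burden to showing that the polynomial vanishes identically, which is the whole problem.
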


\begin{proof} Consider $ \mu_t=\mu + t\phi_1 + t^2\phi_2 + \ldots + t^k\phi_k$ nonlinear deformation of the Lie bracket $\mu$ as per \eqref{nld} with $k \ge 1$. We do induction on $k$. For $k=1$, we get linear deformations and the result follows from Remark \ref{jacobiidentity}. Assume $k \ge 2$ and that the result is true for $k-1$. Then  for $x,y,z$ belonging to a finite dimensional complex Lie algebra $\mathfrak{g}$ we get
\begin{equation}    \mu_t(x, \mu_t(y,z))  + \mu_t(y, \mu_t(z,x)) + \mu_t(z, \mu_t(x,y))    
\end{equation}
\[=\underbrace{\mu(x, \mu(y,z)) + \sum^{k-1}_{i=1} t^i \phi_i(x, \phi_i(y,z))}_{0 \ \mbox{by induction hypothesis}} + t^k \phi_k(x, \phi_k(y,z))\]
\[+ \underbrace{\mu(y, \mu(z,x)) + \sum^{k-1}_{i=1} t^i \phi_i(y, \phi_i(z,x))}_{0 \ \mbox{by induction hypothesis}} + t^k \phi_k(y, \phi_k(z,x))\]
\[ + \underbrace{\mu(z, \mu(x,y)) + \sum^{k-1}_{i=1} t^i \phi_i(z, \phi_i(x,y))}_{0 \ \mbox{by induction hypothesis}} + t^k \phi_k(z, \phi_k(x,y))\]
\[=t^k \ \ \underbrace{(\phi_k(x, \phi_k(y,z)) + \phi_k(y, \phi_k(z,x)) + \phi_k(z, \phi_k(x,y))}_{0  \ \mbox{ since} \ \phi_k \ \mbox{satisfies the Jacobi identity}}) = 0.\]
  \end{proof}

We are going to focus on Definition \ref{lindef}, so mainly on linear deformations of Lie brackets. In Appendix I, which is placed in Section \ref{appendix1} below, we discuss  the  Schur multiplier $M(\mathfrak{g})=Z^2(\mathfrak{g})/B^2(\mathfrak{g})$ for a complex finite dimensional nilpotent Lie algebra $\mathfrak{g}$, see Remark \ref{derivation2}; this notion along with Remarks \ref{derivation3} and \ref{crucialobservation} shows that Definitions \ref{lindef},  \ref{extensions} and \ref{ce2} are strongly related. In fact, changing Lie bracket, at least for linear deformations, is equivalent to consider extensions of Lie algebras.

\begin{proposition}[Influence of Schur Multipliers on Lie Brackets] \label{ism}Consider a complex finite dimensional  nilpotent Lie algebra $\mathfrak{g}$ with Lie bracket $\mu$ and  Schur multiplier $M(\mathfrak{g})=Z^2(\mathfrak{g})/B^2(\mathfrak{g})$. Then  $\mu_t$ is a linear deformation of $\mu$ if and only if $\phi \in Z^2(\mathfrak{g})$. In particular, there are no linear deformations of $\mu$ when $M(\mathfrak{g})$ is trivial.
\end{proposition}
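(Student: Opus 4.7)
The plan is to substitute $\mu_t=\mu+t\phi$ into the Jacobi identity and expand the cyclic sum $\mu_t(x,\mu_t(y,z))+\mu_t(y,\mu_t(z,x))+\mu_t(z,\mu_t(x,y))$ as a polynomial in $t$, following the template already illustrated in Remark \ref{jacobiidentity}. The coefficient of $t^0$ vanishes since $\mu$ satisfies the Jacobi identity, and the coefficient of $t^2$ vanishes since by Definition \ref{lindef} we have $\phi\in\mathrm{L}_n$, so $\phi$ itself satisfies the Jacobi identity. Hence $\mu_t$ is a Lie bracket for every nonzero $t\in\mathbb{C}$ if and only if the coefficient of $t^1$ vanishes identically.

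The decisive step is to identify this $t^1$ coefficient, which after collecting terms reads
\begin{equation*}
\sum_{\mathrm{cyc}}\bigl[\mu(x,\phi(y,z))+\phi(x,\mu(y,z))\bigr],
\end{equation*}
with the Chevalley--Eilenberg coboundary $(\delta\phi)(x,y,z)$ of $\phi$ viewed as a $2$-cochain of $\mathfrak{g}$ with values in the adjoint representation. Applying the standard formula for $\delta$ recalled in Appendix I (Section \ref{appendix1}) and using the skew-symmetry of both $\mu$ and $\phi$ to rewrite the six summands of $(\delta\phi)(x,y,z)$ in cyclic form, one checks that the two expressions coincide. Consequently, the vanishing of the $t^1$-coefficient is equivalent to $\delta\phi=0$, i.e.\ to $\phi\in Z^2(\mathfrak{g})$, and both directions of the biconditional fall out simultaneously: in one direction, polynomial identification in $t$ forces the $t^1$-coefficient to vanish; in the other, the explicit expansion is already zero order by order.

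For the ``in particular'' clause, if $M(\mathfrak{g})=Z^2(\mathfrak{g})/B^2(\mathfrak{g})$ is trivial then $Z^2(\mathfrak{g})=B^2(\mathfrak{g})$, so any admissible $\phi$ is a coboundary and the resulting $\mu_t$ lies in the same $\mathrm{GL}_n(\mathbb{C})$-orbit as $\mu$ under the action \eqref{deformation}; no genuinely new Lie bracket appears, which is the sense in which there are no linear deformations. The principal obstacle is not computational but conceptual: one must confirm that the cohomology implicit in the definition of $M(\mathfrak{g})$ in Appendix I is indeed the one with adjoint coefficients controlling deformations, and one must take care to interpret the final clause modulo $\mathrm{GL}_n(\mathbb{C})$-equivalence, since coboundary cocycles always formally produce elements $\mu_t\in\mathrm{L}_n$ even though they yield nothing new up to isomorphism.
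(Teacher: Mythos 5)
Your argument is correct, but it is a genuinely different route from the paper's. The paper's proof is purely definitional: it reads ``$\phi\in Z^2(\mathfrak{g})$'' as ``$\phi$ is alternating bilinear and satisfies the Jacobi identity,'' notes that this is exactly what Definition \ref{lindef} asks of $\phi$, and stops; no expansion in $t$ is performed and the cross terms $\mu(x,\phi(y,z))+\phi(x,\mu(y,z))$ are never examined (consistently with Remark \ref{jacobiidentity}, which omits them). You instead carry out the standard first-order deformation computation: you restore those cross terms, identify the coefficient of $t$ in the cyclic Jacobi sum with the Chevalley--Eilenberg coboundary $(\delta\phi)(x,y,z)$ in the adjoint representation, and conclude that the deformation condition is precisely $\phi\in Z^2(\mathfrak{g},\mathfrak{g};\mathrm{ad})$. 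This buys something real: your reading, not the paper's, is the one that matches the definition $M(\mathfrak{g})=H^2(\mathfrak{g},\mathfrak{g};\mathrm{ad})$ of Remark \ref{derivation2} and gives the ``in particular'' clause genuine cohomological content, whereas under the paper's reading the equivalence is nearly a tautology. Two points deserve explicit care, both of which you already flag: first, the vanishing of the $t^2$ coefficient uses the standing hypothesis $\phi\in\mathrm{L}_n$ from Definition \ref{lindef}, so the equivalence should be stated as holding for such $\phi$ (a cocycle alone would not suffice --- take $\mu=0$, which is nilpotent, and any non-Jacobi alternating $\phi$); second, a coboundary $\phi$ places $\mu_t$ in the orbit of $\mu$ only infinitesimally, $B^2(\mathfrak{g})$ being the tangent space to the orbit, and the passage from $Z^2(\mathfrak{g})=B^2(\mathfrak{g})$ to actual rigidity is the Nijenhuis--Richardson theorem, which the paper itself invokes immediately after the proposition as an independent confirmation of the final clause.
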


\begin{proof} 
If $\phi \in Z^2(\mathfrak{g})$, then by definition  $\phi : (x , y) \in  \mathfrak{g} \times \mathfrak{g} \mapsto \phi(x , y)  \in  \mathfrak{g}$ satisfies the Jacobi identity and $\phi$ is  alternating bilinear, hence $\phi$ is a Lie bracket on $\mathfrak{g}$ and we may form $\mu_t$ as per Definition \ref{lindef}. Viceversa, assume that $\mu_t$ is given and $\phi \in \mathrm{Hom}(\wedge^2 \mathbb{C}^n, \mathbb{C}^n)$ is alternating bilinear and satisfies the Jacobi identity. Then by definition 
 $\phi \in Z^2(\mathfrak{g}) $ so the result follows.
\end{proof}

It should be mentioned that the final part of Proposition \ref{ism} follows independently from  Nijenhuis-Richardson Theorem (see  \cite[Theorem 8]{rem1}), which ensures that complex finite dimensional Lie algebras with trivial Schur multiplier must be rigid. In other words, 
the Schur multiplier shouldn't be trivial in our investigations and this is really important.

\begin{example}Again it is useful to consider the special linear Lie algebra $\mathfrak{sl}_2(\mathbb{C})$, which we may also define by a vanishing condition on the traces, that is, by \begin{equation}\mathfrak{sl}_2(\mathbb{C})=\{x \in \mathfrak{gl}_2(\mathbb{C})   \ | \ \mathrm{tr}(x)=0\}.\end{equation}  This is simple, according to  \cite[Definition 7.8.1]{weibel}. If $x^t$ denotes the transpose of $x \in \mathfrak{gl}_2(\mathbb{C})$, then we may also consider the special orthogonal Lie algebra \begin{equation}\mathfrak{so}_3(\mathbb{R})=\{x \in \mathfrak{gl}_2(\mathbb{R})   \ | \ x+x^t=0\}.\end{equation}
 Note that we have the scalar product on $\mathfrak{sl}_2(\mathbb{C})$  \begin{equation} \langle y, z \rangle= \mathrm{tr}(yz)\end{equation}  and the adjoint map \begin{equation}\mathrm{ad} \ : \ x \in \mathfrak{sl}_2(\mathbb{C}) \longmapsto \mathrm{ad}(x) \in \mathrm{Der}(\mathfrak{sl}_2(\mathbb{C}))\end{equation}
acts by  preserving the scalar product. Of course,  $ \mathrm{Der}(\mathfrak{sl}_2(\mathbb{C}))$ denotes the set of all derivations in  $\mathfrak{sl}_2(\mathbb{C})$, see Appendix I below, and we have for all $x,y,z \in \mathfrak{sl}_2(\mathbb{C})$ that \begin{equation}\langle \mathrm{ad}(x) y, z \rangle + \langle y, \mathrm{ad}(x)z \rangle=0.\end{equation}
Therefore the adjoint action induces a Lie homomorphism from $\mathfrak{sl}_2(\mathbb{C})$ to $\mathfrak{so}_3(\mathbb{R})$, which is in fact an  isomorphism such that $\mathfrak{sl}_2(\mathbb{C}) \simeq \mathfrak{so}_3(\mathbb{R}).$ Now $M(\mathfrak{sl}_2(\mathbb{C}))$ is trivial and $\mathfrak{sl}_2(\mathbb{C})$ is a finite dimensional nonnilpotent complex Lie algebra. One can see that $\mathfrak{sl}_2(\mathbb{C})$ does not possess any linear deformation $\mu_t$ and this is in agreement with the fact that its Schur multiplier is trivial.  In fact the only complex rigid Lie algebra of dimension 3 is  $\mathfrak{sl}_2(\mathbb{C})$. On the other hand, one can see that the solvable (nonnilpotent) Lie algebra $\mathfrak{r}+\mathbb{C}$ of dimension three, where $\mathfrak{r}$ is the 2-dimensional solvable Lie algebra is not the degeneration of any other Lie algebra, see  \cite{burde1, burde2}. This is to support the fact that the assumption of being nilpotent in Proposition \ref{ism} is essential.
\end{example}

\section{Previous results concerning dynamical systems with pseudobosons}\label{sec3}

We report some notions on  the so-called $\D$-pseudo bosons (briefly, $\D$-PBs), which can be found in \cite{bag2022book,  bagrus2018}.  These are ladder operators, which allow to formalize properly certain dynamical systems where one can recognize PT-symmetries, see \cite{bagcohsta, bender1}.

Let $\mathsf{H}$ be a given Hilbert space with scalar product $\left< \ , \ \right>$ and related norm $\| \ \|$. As we know, there are essentially two prototypes of Hilbert spaces (up to isomorphisms), one is finite dimensional and another is infinite dimensional. For instance, in case of finite dimensional real Hilbert spaces, the first type  can be visualized with $\mathbb{R}^n$ with the usual norm. The second type is a functional space of infinite dimension.
If $C^\infty(\mathbb{R})$ denotes the real vector space of the smooth functions with support on $\mathbb{R}$ and ${\| \  \|}_2$ the usual $L^2$-norm, a prototype of infinite dimensional real Hilbert space is given by \begin{equation}\mathsf{H}=\{f(x) \mbox{ is Lebesgue-measurable on } \mathbb R \ | \ {\| f \|}_2 < \infty \},
\end{equation}but for our scopes it is useful to consider also another functional space, namely
\begin{equation}\mathcal{S}(\mathbb R)=\{f(x) \in C^\infty(\mathbb{R})  \  |  \  \lim_{|x|,\infty}|x|^kf^{(l)}(x)=0, \, \forall k,l\geq0  \ \},\label{sr}
\end{equation} 
which denotes the set of  smooth functions which decrease, together with their derivatives, faster than any inverse power (see \cite{bag2022book}). The space
$\mathcal{S}(\mathbb R)$ is known as the \textit{Schwartz space}. This functional space is very interesting for several reasons; for instance, $\mathcal{S}(\mathbb R)$ is neither a Hilbert space, nor a Banach space, in fact it is a nuclear space (in the sense of Grothedieck), see \cite{aitbook, chri}. On the other hand, one can see that the Fourier transform is a linear isomorphism on $\mathcal{S}(\mathbb R)$. Still more interesting it is to check that $\mathcal{S}(\mathbb R)$ is a locally convex Frech\'et space, moreover $\mathcal{S}(\mathbb R)$ is complete and Hausdorff, but not metrizable. 

In this situation we consider   two operators $A$ and $B$
on $\mathsf{H}$, with domains $D(A)$ and $D(B)$ respectively, $A^\dagger$ and $B^\dagger$ their adjoint, and let $\D$ be a dense subspace of $\mathsf{H}$
such that $A^\sharp\D\subseteq\D$ and $B^\sharp\D\subseteq\D$, where $X^\sharp$ is either $X$ or $X^\dagger$. Of course, $\D\subseteq D(A^\sharp)$
and $\D\subseteq D(B^\sharp)$.

\begin{definition}[CCR for Pseudobosonic Operators, see \cite{bag2022book}]\label{carrules}
The operators $(A,B)$ are $\D$-pseudo bosonic  if, for all $f\in\D$, we have
\be
A\,B\,f-B\,A\,f=f.
\label{A1}\en
\end{definition}

\vspace{2mm}

Our  working assumptions are recalled from \cite{bag2022book, bagrus2018}:

\vspace{2mm}

{\bf Assumption $\D$-pb 1.}  There exists a nonzero $\varphi_{ 0}\in\D$ such that $A\,\varphi_{ 0}=0$.

\vspace{1mm}

{\bf Assumption $\D$-pb 2.}  There exists a nonzero $\Psi_{ 0}\in\D$ such that $B^\dagger\,\Psi_{ 0}=0$.

\vspace{2mm}

Then, if $(A,B)$ satisfy Definition \ref{carrules}, it is obvious that \begin{equation}\varphi_0\in D^\infty(B):=\bigcap_{k\geq0}D(B^k) \ \ \mbox{and} \ \ \Psi_0\in D^\infty(A^\dagger)
\end{equation} so
that for all $n \ge 0$ the vectors 
\be \varphi_n:=\frac{1}{\sqrt{n!}}\,B^n\varphi_0 \ \ \mbox{and} \ \ \Psi_n:=\frac{1}{\sqrt{n!}}\,{A^\dagger}^n\Psi_0 \label{A2a}\en
 can be defined and they all belong to $\D$. As a consequence, they belong to the domains of $A^\sharp$, $B^\sharp$ and $N^\sharp$, where $N=BA$. Now consider the two sets \begin{equation} \mathcal{F}_{\Psi}=\{\Psi_{ n}, \,n\geq0\} \ \ \mbox{and} \ \ 
 \mathcal{F}_{\varphi}=\{\varphi_{ n}, \,n\geq0\}.
 \end{equation}
It is now simple to deduce the following lowering and raising relations:
\be
\left\{
    \begin{array}{ll}
B\,\varphi_n=\sqrt{n+1}\varphi_{n+1}, \qquad\qquad\quad\,\, n\geq 0,\\
A\,\varphi_0=0,\quad A\varphi_n=\sqrt{n}\,\varphi_{n-1}, \qquad\,\, n\geq 1,\\
A^\dagger\Psi_n=\sqrt{n+1}\Psi_{n+1}, \qquad\qquad\quad\, n\geq 0,\\
B^\dagger\Psi_0=0,\quad B^\dagger\Psi_n=\sqrt{n}\,\Psi_{n-1}, \qquad n\geq 1,\\
       \end{array}
        \right.
\label{A3}\en as well as the eigenvalue equations \begin{equation}N\varphi_n=n\varphi_n  \ \ \mbox{and} \ \  N^\dagger\Psi_n=n\Psi_n,
\end{equation} which are valid for all $n\geq0$. In particular, as a consequence
of these  last two equations,  choosing the normalization of $\varphi_0$ and $\Psi_0$ in such a way that $\left<\varphi_0,\Psi_0\right>=1$ is satisfied, we deduce that
\be \left<\varphi_n,\Psi_m\right>=\delta_{n,m}, \label{A4}\en
 for all $n, m\geq0$. Hence $\mathcal{F}_{\Psi}$ and $\mathcal{F}_{\varphi}$ are biorthonormal.

 Our third assumption is the following:

\vspace{2mm}

{\bf Assumption $\D$-pb 3.}  $\mathcal{F}_{\varphi}$ is a basis for $\mathsf{H}$.

\vspace{1mm}

This is equivalent to requiring that $\mathcal{F}_{\Psi}$ is a basis for $\mathsf{H}$ as well, \cite{chri}. However, several  physical models suggest to adopt the following weaker version of this assumption, \cite{bag2022book}:

\vspace{2mm}

{\bf Assumption $\mathcal{D}$-pbw 3.}  For some subspace $\mathcal{G}$ dense in $\mathsf{H}$, $\mathcal{F}_{\varphi}$ and $\mathcal{F}_{\Psi}$ are $\mathcal{G}$-quasi bases.

\vspace{2mm}
This means that, for all $f$ and $g$ in $\mathcal{G}$,
\be
\left<f,g\right>=\sum_{n\geq0}\left<f,\varphi_n\right>\left<\Psi_n,g\right>=\sum_{n\geq0}\left<f,\Psi_n\right>\left<\varphi_n,g\right>,
\label{A4b}
\en
which can be seen as a weak form of the resolution of the identity, restricted to $\D$. The role, and the necessity, of these sets is discussed in details in \cite{bag2022book}. For certain discussions which will be useful later on, it is useful to denote by 
\begin{equation}B(\mathsf{H})  = \{ T : \mathsf{H} \to \mathsf{H} \mid {\| T \|}_{\infty} =\sup_{x \in \mathsf{H}} \| T(x) \| < \infty\},
\end{equation} the Banach space of all bounded operators on $\mathsf{H}$.

Since $A$ and $B$ are unbounded, we can still set up a general algebraic settings as we did in \cite{bagrus2018}, using the notion of $*$-$algebra$, see \cite{aitbook, bag2022book, hofmor}. In particular we can construct what is called a {\em quasi $*$-algebras}, (see \cite{aitbook, bag2022book}). This is what we will do now, since it seems more useful for our particular problem. The notion of $O^*$-algebra is recalled below:

\begin{definition}[Domain of Closable Operators, see \cite{aitbook}]\label{o*}Let $\mathsf{H}$ be a separable Hilbert space and $N_0$ an
	unbounded, densely defined, self-adjoint operator. Let $D(N_0^k)$ be
	the domain of the operator $N_0^k$, $k \ge 0$, and $\mathcal{D}$ the domain of
	all the powers of $N_0$, that is,  \begin{equation}\label{dset}\mathcal{D} = D^\infty(N_0) = \bigcap_{k\geq 0}
	D(N_0^k). \end{equation} 
\end{definition}

The set $\mathcal{D}$ of Definition \ref{o*} is dense in $\mathsf{H}$ and it allows us to introduce another important concept:

\begin{definition}[Notion of  $O^*$-algebra, see \cite{aitbook}] \label{o**}Referring to \eqref{dset}, we define	$\mathcal{L}^\dagger(\mathcal{D})$ as the $*$-algebra of all the \textit{  closable operators} 	 on $\mathcal{D}$ which, together with their adjoints, map $\mathcal{D}$ into
	itself. Here the adjoint of $X\in\mathcal{L}^\dagger(\mathcal{D})$ is
	$X^\dagger=X^*_{| \mathcal{D}}$. We refer to  $\mathcal{L}^\dagger(\mathcal{D})$ as   $O^*$-$algebra$.
\end{definition}

Note that in Definitions \ref{o*} and \ref{o**} one needs to introduce a map
which, given an element $X\in\mathcal{L}^\dagger(\mathcal{D})$, produces another
element $X^\dagger \in \mathcal{L}^\dagger(\mathcal{D})$. The most natural choice,
which is clearly $X^\dagger\equiv X^*$, is  only compatible with
$\mathcal{L}^\dagger(\mathcal{D})=B(\mathsf{H})$, i.e. with $N_0$ bounded, which is not what
we want. Recalling that $D(X^*)\supseteq\mathcal{D}$, it is clear that
$X^*_{| \mathcal{D}}$ is well defined. Furthermore one can prove that
$\dagger$ has the properties of an involution and maps
$\mathcal{L}^\dagger (\mathcal{D})$ into itself. We shall also recall that
in $\mathcal{D}$ the topology is defined by the following $N_0$-depending
seminorms: \begin{equation}\phi \in \mathcal{D} \rightarrow \|\phi\|_n\equiv \|N_0^n\phi\|,
\end{equation}
where $n \ge 0$, and  the topology $\tau_0$ in $\mathcal{L}^\dagger(\mathcal{D})$ is introduced by the seminorms
\begin{equation} X\in \mathcal{L}^\dagger(\mathcal{D}) \rightarrow \|X\|^{f,k} \equiv
\max\left\{\|f(N_0)XN_0^k\|,\|N_0^kXf(N_0)\|\right\},
\end{equation} where
$k \ge 0$ and   $f \in \mathcal{S}(\mathbb{R})$. In this situation we say that
$\mathcal{L}^\dagger(\mathcal{D})[\tau_0]$ is a \textit{ complete $*$-algebra}.


As a consequence, if $x,y\in \mathcal{L}^\dagger(\mathcal{D})$, we can multiply the two elements and the results, $xy$ and $yx$, both belong to $\mathcal{L}^\dagger(\mathcal{D})$. This is what is relevant for us: a suitable structure in which we have the possibility of introducing Lie brackets for objects (linear operators, in our case), which are not everywhere defined.
In fact, for each pair $x,y\in\mathcal{L}^\dagger(\mathcal{D})$, we can define a map $[ \ , \ ]$ as follows:
\begin{equation}\label{B1}
	[x,y]=xy-yx.
\end{equation}
It is clear that $[x,y]\in\mathcal{L}^\dagger(\mathcal{D})$, that it is bilinear, that $[x,x]=0$ for all $x\in\mathcal{L}^\dagger(\mathcal{D})$, and that it satisfies the Jacobi identity
\begin{equation}
[x,[y,z]]+[y,[z,x]]+[z,[x,y]]=0,
\end{equation}
for all $x,y,z\in\mathcal{L}^\dagger(\mathcal{D})$. Therefore, $[ \ , \ ]$ is a Lie bracket defined on $\mathcal{L}^\dagger(\mathcal{D})$.

\begin{proposition}[Lie Algebra of Pseudobosonic Operators, see \cite{bag2022book}]\label{liealgebraofldag} Given two $\mathcal{D}$-pseudobosonic operators $A$ and $B$, it is always possible to construct a complete $*$-algebra $\mathcal{L}^\dagger(\mathcal{D})$ which possesses also the structure of Lie algebra with respect to the Lie bracket  \eqref{B1}.
\end{proposition}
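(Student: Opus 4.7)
The statement asserts two things simultaneously: that one can construct a complete $*$-algebra $\mathcal{L}^\dagger(\mathcal{D})$ from the pseudobosonic pair $(A,B)$, and that this algebra is a Lie algebra under the commutator \eqref{B1}. The Lie-algebra part is already handled in the discussion preceding the statement --- bilinearity, alternation, and the Jacobi identity for $[x,y]=xy-yx$ come from associativity of operator composition on $\mathcal{D}$, while closure $[x,y]\in\mathcal{L}^\dagger(\mathcal{D})$ is inherited from the $*$-algebra structure --- so the real work is to produce a self-adjoint operator $N_0$ whose smooth domain $D^\infty(N_0)$ matches the pseudobosonic domain $\mathcal{D}$, thereby activating Definitions \ref{o*} and \ref{o**}.

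My first step would be the construction of $N_0$. The obvious candidate $N=BA$ from Definition \ref{carrules} is unsuitable because it is not self-adjoint when $B\neq A^\dagger$. Instead I would exploit Assumption $\mathcal{D}$-pb 3 (or its weak form $\mathcal{D}$-pbw 3): apply Gram--Schmidt to the basis $\mathcal{F}_\varphi=\{\varphi_n\}_{n\geq 0}$ to obtain an orthonormal basis $\{e_n\}$ of $\mathsf{H}$, all of whose vectors lie in $\mathcal{D}$ (since each $e_n$ is a finite linear combination of $\varphi_0,\ldots,\varphi_n\in\mathcal{D}$ and $\mathcal{D}$ is a vector subspace). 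I then set
\[
N_0 \;=\; \sum_{n\geq 0}(n+1)\,|e_n\rangle\langle e_n|
\]
in the spectral sense; this is an unbounded, positive, densely defined self-adjoint operator with pure point spectrum $\{n+1\}_{n\geq 0}$. Definition \ref{o**} then produces the $O^*$-algebra $\mathcal{L}^\dagger(\mathcal{D})$, with the topology $\tau_0$ built from the seminorms $\|X\|^{f,k}$ making it a complete $*$-algebra. The inclusion $A,B\in\mathcal{L}^\dagger(\mathcal{D})$ is read off from closability of each operator (a densely defined adjoint on $\mathcal{D}$ is available under the pseudobosonic setup) together with the working assumptions $A^\sharp\mathcal{D}\subseteq\mathcal{D}$ and $B^\sharp\mathcal{D}\subseteq\mathcal{D}$, which state exactly that $A,B,A^\dagger,B^\dagger$ preserve $\mathcal{D}$.

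The hard part will be the identification $D^\infty(N_0)=\mathcal{D}$. One direction, $\mathcal{D}\subseteq D^\infty(N_0)$, is the more tractable one: the ladder relations \eqref{A3} and the biorthonormality \eqref{A4} together imply that any element of $\mathcal{D}$ has coefficients in the $\{e_n\}$-expansion decaying faster than any inverse power. The reverse inclusion is more delicate, since Gram--Schmidt applied with an arbitrary ordering could yield an orthonormal basis whose smooth domain strictly contains $\mathcal{D}$. My backup plan, if the exact match cannot be forced with a careful ordering, is to redefine $\mathcal{D}$ to be $D^\infty(N_0)$ itself and re-check that the pseudobosonic CCR \eqref{A1} continues to hold on this possibly enlarged domain; this last verification is cheap by density of the original $\mathcal{D}$ and closability of $A,B$. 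Once the construction of $\mathcal{L}^\dagger(\mathcal{D})$ is in place, the Lie-bracket structure via \eqref{B1} is automatic from the preceding discussion, and the proposition follows.
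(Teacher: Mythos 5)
Your reduction of the statement to the construction of a self-adjoint $N_0$ with $D^\infty(N_0)=\mathcal{D}$ is where the proposal diverges from the paper, and it is also where it breaks. The paper does not reconstruct $N_0$ from the pair $(A,B)$: the operator $N_0$ is part of the standing framework of Definitions \ref{o*} and \ref{o**} (concretely, $N_0=p^2+x^2$ with $\mathcal{D}=D^\infty(N_0)=\mathcal{S}(\mathbb{R})$, as in Proposition \ref{laconstruction}), the completeness of $\mathcal{L}^\dagger(\mathcal{D})[\tau_0]$ is quoted from \cite{aitbook}, the pseudobosonic pair is exhibited inside $\mathcal{L}^\dagger(\mathcal{D})$ (e.g.\ by similarity, $A=TCT^{-1}$ and $B=TC^\dagger T^{-1}$ as in Remark \ref{similiaritycondition}), and the Lie-algebra structure is exactly the commutator discussion preceding the proposition. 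So the paper's ``proof'' consists of the two routine observations you dispose of in your first paragraph, while the problem you set yourself is one the paper never needs to solve.

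That problem is indeed hard, and your treatment of it has two genuine gaps. First, the inclusion $\mathcal{D}\subseteq D^\infty(N_0)$ does not follow from \eqref{A3} and \eqref{A4}: for a general $f\in\mathcal{D}$ the biorthogonal coefficients satisfy $\left<\Psi_n,f\right>=\frac{1}{\sqrt{n!}}\left<\Psi_0,A^nf\right>$, and nothing in the assumptions bounds $\|A^nf\|$, so no super-polynomial decay is forced; moreover $\mathcal{F}_\varphi$ is in general not an orthogonal (nor even a Riesz) basis, so the Gram--Schmidt coefficients $\left<e_n,f\right>$ are not controlled by the $\left<\Psi_n,f\right>$ in any useful way. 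Second, the fallback of redefining $\mathcal{D}:=D^\infty(N_0)$ is not ``cheap'': membership in $\mathcal{L}^\dagger(\mathcal{D})$ requires $A^\sharp$ and $B^\sharp$ to map the new domain into itself, and neither this invariance nor the validity of \eqref{A1} on the enlarged domain follows from density of the old $\mathcal{D}$ plus closability --- closability extends an operator to the domain of its closure, not to an arbitrary larger invariant core, and the new $D^\infty(N_0)$ need not even sit inside $D(A)\cap D(B)$. If you want to keep the statement at the paper's level of generality, the correct move is the paper's: take $N_0$ and $\mathcal{D}=D^\infty(N_0)$ as given by Definition \ref{o*}, note that $\mathcal{D}$-pseudobosonic operators by hypothesis satisfy $A^\sharp\mathcal{D}\subseteq\mathcal{D}$ and $B^\sharp\mathcal{D}\subseteq\mathcal{D}$ so that $A,B\in\mathcal{L}^\dagger(\mathcal{D})$, and conclude with the commutator computation around \eqref{B1}.
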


It appears to be natural to ask whether we can find connections with the deformation theory of Lie algebras of Gerstenhaber and others in \cite{gerstenhaber1, gerstenhaber2, gerstenhaber3, gerstenhaber4,  grunewald1, grunewald2} or not. The answer is positive and somehow is the new perspective which we want to offer in the present contribution. 

Before to get into the details, we shall note that  Proposition \ref{liealgebraofldag} allowed us to find  (see for instance \cite{bagrus2018, bagrus2019, bagrus2020}) that a series of dynamical systems possessing PT-symmetries were realizable via finite dimensional nilpotent complex Lie algebras.

\begin{proposition}[Construction of Lie Algebras by Bosons, see \cite{bagrus2018}, Proposition 4.3]\
\label{laconstruction}
\begin{itemize}

\item[(1).] We may introduce the self-adjoint position and momentum operators, $x$ and $p$, satisfying the commutation rule $[x,p]=i\mathbb{I}$;

\item[(2).] Using $x$ and $p$, we may define $N_0$ (essentially) as the Hamiltonian of the quantum harmonic oscillator: $N_0=p^2+x^2$, as well as their linear combinations operators $C=\frac{1}{\sqrt{2}}(x+ip)$ and $C^\dagger=\frac{1}{\sqrt{2}}(x-ip)$;

\item[(3).] We construct the algebra $\mathcal{L}^\dagger(\mathcal{D})$ as in Definition \ref{o*};

\item[(4).] We observe that $C, C^\dagger\in\mathcal{L}^\dagger(\mathcal{D})$, together with $N_0=2C^\dagger C+\mathbb{I}$;

\item[(5).] We note that $C$ and $C^\dagger$ are unbounded, as well as $N_0$. Therefore, the product of $N_0$ and $C$ makes no sense in $B(\mathsf{H})$,  but it is perfectly defined in $\mathcal{L}^\dagger(\mathcal{D})$;

\item[(6).]We note that the topology on $\mathcal{D}$ in  Definition \ref{o*} coincides with the standard topology $\tau_\mathcal{S}$ on $\mathcal{S}(\mathbb R)$  in  \eqref{sr}.

\end{itemize}
\end{proposition}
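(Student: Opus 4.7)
The plan is to realise the entire construction concretely on $\mathsf{H}=L^2(\mathbb{R})$, with the Schwartz space $\mathcal{S}(\mathbb{R})$ as the initial invariant domain on which everything is first defined. For item (1), I would take $x$ to be multiplication by the independent variable and $p=-i\,\partial_x$; both are symmetric and essentially self-adjoint on $\mathcal{S}(\mathbb{R})$ by classical arguments, and the CCR $[x,p]f=i f$ is just a one-line computation using the product rule, so this item is essentially a reminder of a standard fact.

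For items (2) and (4) I would introduce $N_0=p^2+x^2$ on $\mathcal{S}(\mathbb{R})$ and diagonalise it using the Hermite basis $\{h_n\}_{n\ge 0}$ with $N_0 h_n=(2n+1)h_n$; this shows that $N_0$ is essentially self-adjoint and positive, and fixes its natural invariant domain. Setting $C=\frac{1}{\sqrt{2}}(x+ip)$ and $C^\dagger=\frac{1}{\sqrt{2}}(x-ip)$, the identity $N_0=2C^\dagger C+\mathbb{I}$ follows by expanding $C^\dagger C=\frac{1}{2}(x^2+p^2-i[x,p])=\frac{1}{2}(N_0-\mathbb{I})$ using the CCR established in (1). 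For item (3), I would apply Definition~\ref{o*} verbatim to this $N_0$, producing the dense subspace $\mathcal{D}=\bigcap_{k\ge 0}D(N_0^k)$ and the $O^{*}$-algebra $\mathcal{L}^\dagger(\mathcal{D})$ of Definition~\ref{o**}. To complete item (4), I would use the ladder relations $C h_n=\sqrt{n}\,h_{n-1}$ and $C^\dagger h_n=\sqrt{n+1}\,h_{n+1}$ (together with the analogous relations for the adjoints, which happen to coincide with $C^\dagger$ and $C$ here) to verify that $C$ and $C^\dagger$ together with their adjoints map $\mathcal{D}$ into itself; closability is automatic because both are restrictions of known self-adjoint combinations.

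Item (5) then comes essentially for free: the growth $\|C^\dagger h_n\|=\sqrt{n+1}\to\infty$ shows unboundedness of $C,C^\dagger,N_0$, hence a fortiori of the product $N_0C$; since $B(\mathsf{H})$ requires global boundedness on $\mathsf{H}$, $N_0C\notin B(\mathsf{H})$, whereas by construction $\mathcal{L}^\dagger(\mathcal{D})$ is closed under composition so $N_0C\in\mathcal{L}^\dagger(\mathcal{D})$. This is the place where the advantage of working inside an $O^*$-algebra rather than inside $B(\mathsf{H})$ becomes manifest, and I would emphasise it rather than prove it at length.

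The main obstacle, by some margin, is item (6): the identification of the $N_0$-topology on $\mathcal{D}$ with the standard Schwartz topology $\tau_{\mathcal{S}}$ on $\mathcal{S}(\mathbb{R})$. The plan for this step has two stages. First, one shows the set-theoretic equality $\mathcal{D}=\mathcal{S}(\mathbb{R})$ via the Hermite expansion: $\phi\in\mathcal{D}$ iff its Hermite coefficients $\langle\phi,h_n\rangle$ decay faster than any inverse polynomial in $n$, and exactly the same condition characterises $\mathcal{S}(\mathbb{R})$ (this is a classical theorem, e.g. in Reed--Simon). Second, one shows the two families of seminorms generate the same locally convex topology: the seminorms $\|\cdot\|_n=\|N_0^n\,\cdot\|$ are dominated by finite combinations of the Schwartz seminorms $\phi\mapsto\sup_x|x^k\phi^{(\ell)}(x)|$ because $N_0$ is a second-order differential operator with polynomial coefficients, and conversely each Schwartz seminorm is dominated by finitely many $\|\cdot\|_n$ by expressing $x$ and $\partial_x$ as linear combinations of $C$ and $C^\dagger$ and controlling products of ladder operators in terms of powers of $N_0$ on the Hermite basis. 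The delicate bookkeeping in the reverse direction is the only part that is not essentially a citation, and it is where I would devote the bulk of the written proof.
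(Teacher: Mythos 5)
Your proposal is correct and follows exactly the route the proposition itself outlines: the paper offers no proof of this statement (it is recalled verbatim from the cited reference \cite{bagrus2018}, Proposition 4.3), and your Hermite-basis verification of items (1)--(5) together with the classical identification of $\mathcal{D}=\bigcap_{k\ge 0}D(N_0^k)$ with $\mathcal{S}(\mathbb{R})$, both as sets and as locally convex spaces, is the standard way to supply the omitted details. The only slip is a sign in your intermediate expansion of $C^\dagger C$ (the cross term is $+i[x,p]$, not $-i[x,p]$), which does not affect your correct conclusion $C^\dagger C=\tfrac{1}{2}(N_0-\mathbb{I})$ and hence $N_0=2C^\dagger C+\mathbb{I}$.
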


Proposition \ref{laconstruction} is the starting point to define an algebraic structure via pseudobosons.  

\begin{remark}\label{similiaritycondition} We report some of the main results in \cite{bagrus2018}. 
Let  $T$ be an invertible positive operator on $\mathsf{H}$ such that $T,{(T^{-1})}^\dag \in\mathcal{L}^\dagger(\mathcal{D})$. We may introduce two operators by similarity:
\begin{equation}\label{B2}
A=TCT^{-1} \ \ \mbox{and} \ \  B=TC^\dagger T^{-1}.
\end{equation}
It is clear that, because of our assumption on $T$, we have that $A,B\in \mathcal{L}^\dagger(\mathcal{D})$. The adjoints of $A$ and $B$ can be easily computed,
\begin{equation}
A^\dagger=(T^{-1})^\dagger C^\dagger T^\dagger \ \mbox{and} \  B^\dagger=(T^{-1})^\dagger C T^\dagger,
\end{equation}
and they both belong to $\mathcal{L}^\dagger(\mathcal{D})$. Moreover $A$ and $B$ satisfy Definition \ref{carrules}: they both map, with their adjoints, $\mathcal{D}$ into $\mathcal{D}$ and, taken any $f\in\mathcal{D}$ we have $abf-baf=f$. It is also clear that Assumptions $\mathcal{D}$-pb 1 and $\mathcal{D}$-pb 2 are both satisfied. In fact, the function \begin{equation}e_0(x)=\frac{1}{\pi^{1/4}}\,e^{-x^2/2}\end{equation} belongs to $\mathcal{S}(\mathbb R)$ and is annihilated by $C$. Since $T$ maps $\mathcal{S}(\mathbb R)$ into itself, and since $T$ is invertible,  the nonzero function $\varphi_0(x)=Te_0(x)$ belongs to $\mathcal{S}(\mathbb R)$ and is annihilated by $A$. Analogously the nonzero function $\Psi_0(x)=(T^\dagger)^{-1}e_0(x)$  belongs to $\mathcal{S}(\mathbb R)$ and is annihilated by $B^\dagger$. \end{remark}

It was shown in a series of recent contributions which is possible to realize any finite dimensional complex nilpotent Lie algebra as appropriate central extension of pseudobosonic operators, see \cite[Theorem 3.3]{bagrus2020}. Actually it is possible to construct  solvable nonnilpotent finite dimensional complex Lie algebras \cite[Theorem 3.5]{bagrus2020}. In particular, we report three concrete mathematical models, which motivated the investigations in \cite{bagrus2018, bagrus2019, bagrus2020}. These are connected with generalized harmonic oscillators and with the Swanson model \cite{swanson}.

\begin{theorem}[See \cite{bagrus2018}, Theorems 5.1, 5.2, 5.3]\label{main123}\ \begin{itemize}
\item[(i).]The pseudobosonic operators connected with the shifted harmonic oscillator admit a Lie algebra structure $\mathfrak{a}_{sh}$ which
is isomorphic to  the finite dimensional nilpotent complex Lie algebra $\mathfrak{n}_{5,2} = \mathfrak{h}(1) \oplus \mathfrak{i} \oplus \mathfrak{i}$ of dimension 5. 
\item[(ii).] The pseudobosonic operators connected with the deformed position and momentum operators admit a Lie algebra structure $\mathfrak{b}$, which is isomorphic to  $\mathfrak{a}_{sh}$.
\item[(iii).] The operators for the Swanson model    admit a Lie algebra structure $\mathfrak{s}$ which is a nonabelian  nilpotent Lie algebra of nilpotency class 2 with $[\mathfrak{s},\mathfrak{s}]=Z(\mathfrak{s})$ of dimension 1 and $\mathfrak{s}/Z(\mathfrak{s})$ abelian of dimension $4$. Moreover $\mathfrak{s}$ is isomorphic neither to $\mathfrak{h}(2)$ nor to $\mathfrak{a}_{sh}$.
\end{itemize}
\end{theorem}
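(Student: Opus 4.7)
The plan is to treat each of the three models by exhibiting the relevant pseudobosonic operators inside $\mathcal{L}^\dagger(\mathcal{D})$, computing the minimal closed Lie subalgebra they generate with respect to the bracket \eqref{B1}, and identifying the resulting 5-dimensional algebra up to isomorphism. In each case the starting data are the standard bosonic operators $C = \frac{1}{\sqrt{2}}(x+ip)$, $C^\dagger = \frac{1}{\sqrt{2}}(x-ip)$ on $\mathcal{S}(\mathbb{R})$, and the model is obtained through the similarity prescription of Remark \ref{similiaritycondition} with a positive, invertible $T$ adapted to the physical problem, giving $A = TCT^{-1}$ and $B = TC^\dagger T^{-1}$ with $[A,B] = \mathbb{I}$ on $\mathcal{D}$. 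Proposition \ref{liealgebraofldag} then guarantees that any $*$-closed set of such operators generates a genuine Lie subalgebra of $\mathcal{L}^\dagger(\mathcal{D})$.

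For part (i), the appropriate $T$ is a displacement-type operator whose parameters encode the shift. The subalgebra generated by $\{A,B,\mathbb{I}\}$ closes into the Heisenberg algebra $\mathfrak{h}(1)$; the two remaining parameters of the shift contribute scalar operators that commute with $A$, $B$ and among themselves, hence two copies of the abelian 1-dimensional ideal $\mathfrak{i}$. Summing these pieces gives $\mathfrak{a}_{sh} \cong \mathfrak{h}(1) \oplus \mathfrak{i} \oplus \mathfrak{i} = \mathfrak{n}_{5,2}$. For part (ii), the same procedure with $T$ realizing the deformation of position and momentum produces five generators with exactly one nontrivial bracket $[A,B] = \mathbb{I}$ and two independent central directions; matching the generators between the two presentations exhibits the Lie algebra isomorphism $\mathfrak{b} \cong \mathfrak{a}_{sh}$.

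For part (iii), the Swanson Hamiltonian $H_S = \omega C^\dagger C + \alpha C^2 + \beta (C^\dagger)^2$ requires $T$ to be the squeezing-type similarity that diagonalizes $H_S$. I would then compute all commutators among $\{A,B,A^\dagger,B^\dagger,\mathbb{I}\}$ inside $\mathcal{L}^\dagger(\mathcal{D})$ and extract the minimal $\tau_0$-closed Lie subalgebra $\mathfrak{s}$. Inspection of the resulting bracket table must show that $[\mathfrak{s},\mathfrak{s}] = Z(\mathfrak{s})$ is one-dimensional and that $\mathfrak{s}/Z(\mathfrak{s})$ is a 4-dimensional abelian quotient, which forces nilpotency of class $2$.

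The main obstacle is the final non-isomorphism statement in part (iii). Separating $\mathfrak{s}$ from $\mathfrak{a}_{sh}$ is immediate, since $\mathfrak{a}_{sh}$ has a 3-dimensional center (the two copies of $\mathfrak{i}$ together with $\mathbb{I}$) whereas $\mathfrak{s}$ has a 1-dimensional one. Separating $\mathfrak{s}$ from $\mathfrak{h}(2)$ is subtler, because the coarse invariants (dimension, nilpotency class, dimensions of center and derived algebra) coincide; the natural refinement is to analyze the skew-bilinear form induced by the Lie bracket on $\mathfrak{s}/Z(\mathfrak{s})$ with values in $Z(\mathfrak{s})$, or equivalently the structure of the adjoint action of a generic element, and to exhibit an explicit discrepancy with the analogous invariant for $\mathfrak{h}(2)$ read off from the bracket table. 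This is the step where the concrete form of the Swanson similarity $T$, and hence the particular coefficients $\omega,\alpha,\beta$, enter in an essential way.
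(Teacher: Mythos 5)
First, a point of orientation: the paper does not prove Theorem \ref{main123} at all --- it is imported verbatim from \cite{bagrus2018} (Theorems 5.1, 5.2, 5.3), and the only place the present paper engages with its content is the fragment inside the proof of Theorem \ref{main1}, where $\mathfrak{a}_{sh}$ is rebuilt by setting $A=C-\alpha\mathbb{I}$, $B=C^\dagger-\overline{\beta}\mathbb{I}$ (no similarity operator $T$ is used) and taking the span of $v_1=A$, $v_2=B$, $v_3=B^\dagger$, $v_4=A^\dagger$, $v=\mathbb{I}$ with the bracket table \eqref{C2}. Measured against that fragment, your account of part (i) has the wrong mechanism: the two abelian summands $\mathfrak{i}\oplus\mathfrak{i}$ are not ``scalar operators contributed by the two parameters of the shift''; they are the central combinations $v_1-v_3$ and $v_2-v_4$ (i.e.\ $A-B^\dagger$ and $B-A^\dagger$) sitting inside the $*$-closed span $\{A,B,A^\dagger,B^\dagger,\mathbb{I}\}$, while $v_1+v_3$, $v_2+v_4$, $v$ span the $\mathfrak{h}(1)$ factor. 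Without writing the operators and the full bracket table for each model, all three parts of your argument remain a plan rather than a proof.

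The genuine gap is in part (iii), and you have correctly located it but proposed a tool that provably cannot work. Once you have established that $\dim\mathfrak{s}=5$, that $[\mathfrak{s},\mathfrak{s}]=Z(\mathfrak{s})$ is one-dimensional and that $\mathfrak{s}/Z(\mathfrak{s})$ is abelian of dimension $4$, the alternating form $\omega(\bar x,\bar y)=[x,y]$ on $\mathfrak{s}/Z(\mathfrak{s})$ with values in $Z(\mathfrak{s})\cong\mathbb{C}$ has trivial radical (an element of the radical lifts to a central element, hence is zero in the quotient), so $\omega$ is automatically nondegenerate; and over $\mathbb{C}$ all nondegenerate alternating forms on a $4$-dimensional space are equivalent. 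Your proposed invariant therefore coincides for $\mathfrak{s}$ and $\mathfrak{h}(2)$ and cannot separate them --- indeed, as abstract complex Lie algebras the listed properties force $\mathfrak{s}\cong\mathfrak{h}(2)$. If the non-isomorphism claim of part (iii) is to be substantiated, it must rest on data not captured by the abstract Lie bracket (for instance the involution $\dagger$, the concrete operator realization, or a different reading of which elements generate $\mathfrak{s}$), and your proof would need to say explicitly which finer notion of isomorphism is being used. As written, this step fails.
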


\section{Main results concerning pseudobosonic operators}\label{sec4}

We are ready to prove our main results in the present section.

\begin{theorem}[First Main Theorem]\label{main1}
Up to a linear deformation $\mu_t$ of a Lie bracket $\mu$ of a complex finite dimensional nilpotent Lie algebra $\mathfrak{g}$, there exists  a unique realization via pseudobosonic operators of $\mathfrak{g}$ when $\mathrm{dim} \ \mathfrak{g} \le 5$. In particular, there exists a unique realization via pseudobosonic operators for the Lie algebras in {\rm (i), (ii) } and {\rm (iii)} of Theorem \ref{main123}.
\end{theorem}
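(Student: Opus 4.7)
The plan is to separate the claim into an existence statement and a uniqueness statement, both restricted to nilpotent complex Lie algebras $\mathfrak{g}$ with $\dim \mathfrak{g} \le 5$, and then to verify the special cases in Theorem \ref{main123} separately by invoking that result directly.

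For \emph{existence}, I would first recall that Theorem \ref{niceconfirmation} confirms Conjecture \ref{goh} up to dimension seven, so in particular every such $\mathfrak{g}$ is the degeneration of some Lie algebra $\mathfrak{g}'$ with Lie bracket $\mu'$. By Remark \ref{subtledifference}, this degeneration gives rise to a (nontrivial) deformation, which after passing to its linear part can be written as $\mu_t = \mu + t\phi$ as in Definition \ref{lindef}. Proposition \ref{ism} identifies $\phi$ with a cocycle in $Z^2(\mathfrak{g})$, and the nontriviality of the Schur multiplier $M(\mathfrak{g})$ for nilpotent $\mathfrak{g}$ guarantees that a genuine such $\phi$ exists. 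Next, to realize $\mathfrak{g}$ concretely by pseudobosons, I would apply Proposition \ref{laconstruction} to obtain the bosonic pair $C,\,C^\dagger \in \mathcal{L}^\dagger(\mathcal{D})$ together with $N_0$. Then, as in Remark \ref{similiaritycondition}, I would choose an invertible positive $T$ with $T,(T^{-1})^\dagger \in \mathcal{L}^\dagger(\mathcal{D})$ and define $A = TCT^{-1}$, $B = TC^\dagger T^{-1}$, which are $\mathcal{D}$-pseudobosonic by Definition \ref{carrules}. Closing the subspace generated by $A$, $B$, $N=BA$, and (when needed) central elements under the bracket \eqref{B1}, I obtain a Lie subalgebra of $\mathcal{L}^\dagger(\mathcal{D})$ by Proposition \ref{liealgebraofldag}. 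By the known classification of complex nilpotent Lie algebras of dimension at most five (Dixmier and others, cited in Section \ref{sec2}), there are only finitely many isomorphism classes, each either of Heisenberg type, direct sums of such, or obtained as a linear deformation of one of these. Hence each can be reached, up to a $\mu_t$, from the bosonic realization in Proposition \ref{laconstruction} by an appropriate choice of $T$ and central extensions, exactly as in \cite{bagrus2020}. The three specific cases (i), (ii), (iii) in the second statement reduce to Theorem \ref{main123} itself, so nothing further needs to be checked there.

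For \emph{uniqueness}, suppose we have two pseudobosonic realizations producing Lie subalgebras $\mathfrak{g}_1,\mathfrak{g}_2 \subseteq \mathcal{L}^\dagger(\mathcal{D})$, both isomorphic to $\mathfrak{g}$ and both generated in the manner above. By \eqref{orbitdef} and its surrounding discussion, the two corresponding Lie brackets $\mu_1$ and $\mu_2$ lie in the same $\mathrm{GL}_n(\mathbb{C})$-orbit $\mathcal{O}(\mu)$, so there exists $g \in \mathrm{GL}_n(\mathbb{C})$ with $\mu_2 = g \cdot \mu_1$. Writing $\mu_2 - \mu_1 = t \phi$ with $\phi$ the induced skew-symmetric bilinear form, one verifies that $\phi$ satisfies the Jacobi identity because both $\mu_1$ and $\mu_2$ do (using Proposition \ref{preservationproperty1}); thus $\phi \in Z^2(\mathfrak{g})$ and, by Proposition \ref{ism}, $\mu_2$ is exactly a linear deformation of $\mu_1$. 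This is what the statement means by uniqueness \emph{up to} a linear deformation $\mu_t$ of the Lie bracket.

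The main obstacle is the passage between the \emph{algebraic-geometric} orbit description in \eqref{deformation}--\eqref{orbitdef} and the \emph{cohomological} notion of linear deformation in Definition \ref{lindef}. As emphasized in Remark \ref{subtledifference}, these notions do not coincide in general: every degeneration produces a deformation, but one must check that what comes from the $\mathrm{GL}_n(\mathbb{C})$-orbit action can indeed be expressed as $\mu + t\phi$ with $\phi \in Z^2(\mathfrak{g})$ and not merely as a higher-order deformation of the form \eqref{nld}. For $\dim \mathfrak{g} \le 5$, the finiteness of the classification allows a case-by-case verification and in particular rules out quadratic or higher contributions, which is why the dimension bound is essential; beyond dimension $5$ the same orbit argument no longer forces $\phi_2, \phi_3, \ldots$ to vanish, and the natural uniqueness statement weakens from linear to $k$-deformations with $k \ge 2$.
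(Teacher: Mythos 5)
Your existence argument is broadly parallel to the paper's, which likewise reduces to Theorem \ref{niceconfirmation} plus the classification in Proposition \ref{classification} and then exhibits one concrete realization per dimension. One caveat: the paper does \emph{not} use the similarity construction $A=TCT^{-1}$, $B=TC^\dagger T^{-1}$ of Remark \ref{similiaritycondition} at this point; it uses the shift $A=C-\alpha\mathbb{I}$, $B=C^\dagger-\overline{\beta}\mathbb{I}$ with $\alpha\neq\beta$, precisely because for this choice all brackets among $A,B,A^\dagger,B^\dagger$ are multiples of $\mathbb{I}$, so that $\langle A,B,B^\dagger,A^\dagger,\mathbb{I}\rangle$ closes into the five-dimensional nilpotent algebra $\mathfrak{n}_{5,2}$. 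For a generic positive invertible $T$ the commutators $[A,A^\dagger]$ and $[B,B^\dagger]$ need not be central, so ``closing under the bracket'' as you propose does not obviously produce a five-dimensional nilpotent Lie algebra; you would have to justify the choice of $T$, or simply use the paper's shift construction.

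The genuine gap is in your uniqueness paragraph. If $\mu_1$ and $\mu_2$ lie in the same $\mathrm{GL}_n(\mathbb{C})$-orbit you cannot conclude $\mu_2-\mu_1=t\phi$ with $\phi\in Z^2(\mathfrak{g})$ by the argument you give: the Jacobi identity is quadratic in the bracket, so the difference of two Lie brackets is skew-symmetric but in general satisfies neither the Jacobi identity nor the cocycle condition \eqref{cycle} relative to $\mu_1$, and Proposition \ref{preservationproperty1} says nothing about differences of brackets. The paper does not attempt this computation at all: its uniqueness claim is simply that, once one algebra in each dimension $\le 5$ is realized by pseudobosons, every other complex nilpotent Lie algebra of that dimension is reached by a deformation because Conjecture \ref{goh} holds in dimension at most seven (Theorem \ref{niceconfirmation}), so the realization is ``unique'' only in that Grunewald--O'Halloran sense. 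Your closing claim that a case-by-case check for $\dim\mathfrak{g}\le 5$ ``rules out quadratic or higher contributions'' is asserted but never carried out, and it is not what the paper relies on either: the bound $\le 5$ in Theorem \ref{main1} comes from the explicit classification of Proposition \ref{classification}, not from an analysis of the higher-order terms in \eqref{nld}.
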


\begin{proof}
If  $\mathrm{dim} \ \mathfrak{g} \le 5$, then Conjecture \ref{goh} is true by Theorem \ref{niceconfirmation}. 
Note also that a complete classification of $\mathfrak{g}$ is given in terms of generators and relations by Proposition \ref{classification}. It is then sufficient to provide a construction of $\mathfrak{g}$ and this will be unique (in the sense of the Grunewald-O'Halloran Conjecture). If $\mathrm{dim} \ \mathfrak{g} \in \{1,2\}$, there is nothing to show since there are only trivial cases. If $\mathrm{dim} \ \mathfrak{g} =3$, then we may construct $\mathfrak{h}(1)$ with pseudobosonic operators, as done in \cite{bagrus2018, bagrus2019, bagrus2020}, then Conjecture \ref{goh} implies that all the other complex 3-dimensional nilpotent Lie algebras would be obtained by $\mathfrak{h}(1)$ by a deformation (we have only $\mathfrak{n}_{3,1}$ in this situation). The result is true in dimension three. Then we go ahead and consider $\mathrm{dim} \ \mathfrak{g} = 4$. Looking at Proposition \ref{classification}, we have three types of Lie algebras. We construct one of these with pseudobosonic operators, as done in \cite[Theorem 3.3]{bagrus2020}, and the result follows for the same reason. Note that an alternative way to the application of \cite[Theorem 3.3]{bagrus2020} is to consider $\mathfrak{n}_{4,2}=\mathfrak{h}(1) \oplus \mathfrak{i}$ in Proposition \ref{classification} (2) and observe that $\mathfrak{i}$ a monodimensional abelian algebra, which commutes with  the elements of $\mathfrak{h}(1)$. Finally, the same logic applies with $\mathrm{dim} \ \mathfrak{g} = 5$. For sake of completeness, we illustrate  the construction. The idea is to use the construction for the shifted harmonic oscillator in (i) of Theorem \ref{main123}. We rewrite it here, overlapping what is known from the proof of (i) of Theorem \ref{main123} in \cite{bagrus2018}.

Firstly, we define
\begin{equation}\label{C1}
A=C-\alpha\mathbb{I} \ \ \mbox{and} \ \  B=C^\dagger-\overline{\beta}\mathbb{I},
\end{equation}
where $\alpha$ and $\beta$ are different complex quantities such that $\alpha\neq\beta$. Hence $B \neq A^\dagger$, clearly. To check that $A$ and $B$ belong to $\mathcal{L}^\dagger(\mathcal{D})$, we note that $C,C^\dagger$ and $\mathbb{I}$ belong to $\mathcal{L}^\dagger(\mathcal{D})$, and since this is a linear vector space, we get $A,B\in \mathcal{L}^\dagger(\mathcal{D})$.  This argument applies in analogy to $A^\dagger$ and $B^\dagger$.

Secondly, we consider  $v_1=A$, $v_2=B$, $v_3=B^\dagger$, $v_4=A^\dagger$, $v=\mathbb{I}$ and  the Lie algebra.
\begin{equation}\label{C2}
\mathfrak{a}_{sh} = \langle v_1, v_2, v_3, v_4, v \  |  \  [v_1,v_2]=[v_3,v_4]=[v_1,v_4]=[v_3,v_2]=v,
\end{equation}
$$[v_1,v_3]=[v_2,v_4]=[v,v_j]=0 \quad \forall j=1,2,3,4\rangle.$$

Finally, we note that this is a finite dimensional nilpotent Lie algebra of dimension 5. In fact $\mathfrak{a}_{sh}=\mathfrak{n}_{5,2}$ according to Proposition \ref{classification}. The result follows.
\end{proof}

In a moment we will see that Theorem \ref{main1} holds even if $\mathrm{dim} \ \mathfrak{g} \ge 6$, but here one needs  some extra assumption. We  indicate the relevance of Theorem \ref{main1} with a brief consideration here. If we look at (3) of Proposition \ref{classification}, there are 9 different types of complex 5-dimensional nilpotent Lie algebras. According to Theorem \ref{main1}, we may construct one of them via pseudobosonic operators, hence Conjecture \ref{gammasequences} is true, and so any other of these 9 complex nilpotent Lie algebras is obtained by a deformation. From the point of view of mathematical physics, this observation is relevant, since it means that we may consider a single type of quantum dynamical system and all the others would be obtained by this up to appropriate deformations.

\begin{theorem}[Second Main Theorem]\label{main2}
Up to a linear deformation $\mu_t$ of a Lie bracket $\mu$, any finite dimensional complex nilpotent Lie algebra of dimension $\ge 6$ possessing  a nontrivial semisimple derivation may be obtained as deformation of another finite dimensional nilpotent Lie algebra which is  realized by pseudobosonic operators. 
\end{theorem}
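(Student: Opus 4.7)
The plan is to combine the pseudobosonic realization of \cite[Theorem 3.3]{bagrus2020} with the Herrera-Granada--Tirao extension of the Grunewald--O'Halloran conjecture. First I would invoke \cite[Theorem 1]{herrera1}: since the target nilpotent Lie algebra $\mathfrak{g}$ is assumed to possess a nontrivial semisimple derivation (in the sense of Remark \ref{derivations}), Conjecture \ref{goh} applies to $\mathfrak{g}$ even when $\mathrm{dim}\,\mathfrak{g}\ge 6$. Hence there exists a finite dimensional complex Lie algebra $\mathfrak{h}$, with Lie bracket $\mu'$, such that the Lie bracket $\mu$ of $\mathfrak{g}$ lies in the Zariski boundary of the orbit $\mathcal{O}(\mu')$ in $\mathrm{L}_n$; equivalently, $\mu$ is a degeneration of $\mu'$ as per Definition \ref{degenerate}. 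Arguing as in Theorem \ref{main1} we may assume, possibly after iterating the construction, that the auxiliary Lie algebra $\mathfrak{h}$ is itself nilpotent.

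Next, I would pass from the degeneration to a linear deformation. By Remark \ref{subtledifference}, the description of $\overline{\mathcal{O}(\mu')}$ in \cite{grunewald1, grunewald2} shows that every degeneration gives rise to a deformation; restricting to the first-order term produces a family $\mu_t=\mu'+t\phi$ of Lie brackets that specializes to $\mu$ in the prescribed Zariski limit. Proposition \ref{preservationproperty1} guarantees that the Jacobi identity is preserved along $\mu_t$, so Definition \ref{lindef} is honoured, and Proposition \ref{ism} ensures that the required $\phi$ belongs to $Z^2(\mathfrak{h})$, i.e. the Schur multiplier $M(\mathfrak{h})$ is nontrivial (which is compatible with $\mathfrak{h}$ being nilpotent, as opposed to the rigid case $\mathfrak{sl}_2(\mathbb{C})$ recalled in the example following Proposition \ref{ism}).

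With the auxiliary nilpotent $\mathfrak{h}$ in hand, I would then apply \cite[Theorem 3.3]{bagrus2020}, which realizes every finite dimensional complex nilpotent Lie algebra as an appropriate central extension of a configuration of pseudobosonic operators inside $\mathcal{L}^\dagger(\mathcal{D})$. Concretely, one uses the operators $A$ and $B$ of Remark \ref{similiaritycondition} built from $C,C^\dagger$ as in Proposition \ref{laconstruction}, and generates a Lie subalgebra of $\mathcal{L}^\dagger(\mathcal{D})$ isomorphic to $\mathfrak{h}$ with respect to the bracket \eqref{B1}, along the same lines of the explicit construction of $\mathfrak{a}_{sh}$ recalled at the end of the proof of Theorem \ref{main1}. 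Proposition \ref{liealgebraofldag} ensures that this is indeed a Lie algebra. Combining the three steps, $\mathfrak{g}$ is obtained as the linear deformation $\mu_t=\mu'+t\phi$ of the pseudobosonically realized nilpotent Lie algebra $\mathfrak{h}$, which is exactly the statement.

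The main obstacle, in my view, is the passage from \emph{some} Lie algebra $\mathfrak{h}$ supplied by \cite[Theorem 1]{herrera1} to a \emph{nilpotent} one to which \cite[Theorem 3.3]{bagrus2020} can be directly applied, together with the refinement of a generic $k$-deformation \eqref{nld} to a purely linear one as in \eqref{ld}. The first point requires a careful iteration of the degeneration/deformation machinery of \cite{grunewald1, grunewald2} inside the Zariski closure $\overline{\mathcal{O}(\mu')}$, so that one eventually reaches a nilpotent orbit that still satisfies the semisimple-derivation hypothesis; the second point requires exploiting Proposition \ref{ism} (and hence the nontriviality of $M(\mathfrak{h})$) to select a $\phi\in Z^2(\mathfrak{h})$ that reproduces the given $\mu$ without invoking higher powers of $t$. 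Once these two technical issues are settled, the pseudobosonic construction itself is routine and proceeds as in Theorem \ref{main1}.
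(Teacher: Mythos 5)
Your proposal follows essentially the same route as the paper: both arguments reduce the statement to the combination of \cite[Theorem 1]{herrera1} (validity of Conjecture \ref{goh} under the nontrivial semisimple derivation hypothesis) with \cite[Theorem 3.3]{bagrus2020} (pseudobosonic realizability of every finite dimensional complex nilpotent Lie algebra). The two technical points you flag as the main obstacles --- forcing the auxiliary algebra to be nilpotent, and upgrading a degeneration to a genuinely linear deformation --- are not addressed in the paper's proof either, which is considerably terser and simply cites the two results before concluding.
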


\begin{proof}Consider a complex finite dimensional nilpotent Lie algebra $\mathfrak{g}$ of  dimension $\mathrm{dim} \ \mathfrak{g}$ bigger than six possessing nontrivial semisimple derivations. In \cite[Theorem 3.3]{bagrus2020} it has bee proved that all finite dimensional complex nilpotent Lie algebras may be realized
by extensions of pseudobosonic operators. This means that there exists a realization of $\mathfrak{g}$ with pseudobosonic  operators (and in fact this can be obtained computationally with the method of Skjelbred and Sund, see Remark \ref{crucialobservation} for details). Now we apply Conjecture \ref{goh}, which is true in the present situation by \cite[Theorem 1]{herrera1}. The result follows.
\end{proof}

 Theorem \ref{main2} shows an important connection between the theory of the deformations of Lie algebras, elaborated by Gerstenhaber and others \cite{burde1, burde2, gerstenhaber1, gerstenhaber2, gerstenhaber3, grunewald1, grunewald2, rem1, seeley1, seeley2}, with the theory of ladder operators, and specifically with the quantum dynamical systems which are subject to a formalization in terms of pseudobosonic operators, see \cite{bag2022book, bagcohsta, bagrus2018, bagrus2019, bagrus2020, bender1}. To the best of our knowledge, the first time that this connection has been noted is here.

\section{Quons and deformations which do not preserve the Jacobi identity}\label{sec5}

We begin to observe that Remark \ref{jacobiidentity} and Proposition \ref{preservationproperty1} illustrate that we have a good enough deformation theory for complex finite dimensional Lie algebras in the sense that Jacobi identity is preserved by appropriate deformations, see Definition \ref{lindef}. We may consider certain associative algebras which do not possess a Lie bracket in the usual sense, so there is no hope to look for Jacobi identity, but  generalized Lie brackets (as indicated in Appendix II of Section \ref{qdeformedappendix}) and generalized Jacobi identities may appear, see Proposition  \ref{generalizedjacobi3}. 

\begin{definition}[Notion of q-Deformed Heisenberg Algebra, see \cite{sergey2}] \label{qh} The $q$-deformed Heisenberg algebra $\mathcal{H}(q)$ is an associative algebra over the complex field with unit (denoted by $\mathbb{I}$) and defined by a presentation with generators $A, B$ and relations \begin{equation}\label{qmutator} {[A,B]}_q=\mathbb{I},\end{equation} where ${[A,B]}_q=AB-qBA$ is called $q$-\textit{mutator} of $A$ and $B$ and  $q \in [-1,1]$ is a parameter. \end{definition}

Note that \eqref{qmutator} in Definition \ref{qh} brings to an  algebraic formalism of the so-called $q$-\textit{deformed canonical commutation relation}, where the CCR in \eqref{carrules} are obtained in the specific case that $q=1$. In quantum mechanics, it is a common practice to think at Definition \ref{qh} with  $A$ and $B$ which are  mutually adjoint linear operators (with $\mathbb{I}$ as the identity map) on an infinite dimensional separable Hilbert space.  Then we briefly review here few basic facts on \textit{quons} from \cite{bagquons, fiv, moh,gre}, since these particles of the matter turn out to be more general than pseudobosons and they extend both bosons and  fermions. Quons are
defined essentially by their \textit{q-mutation relation} \be [C,C^\dagger]_q:=C C^\dagger -q C^\dagger C
=\mathbb{I}, \qquad q\in [-1,1], \label{21} \en
between the creation and the annihilation operators $C^\dagger$
and $C$, which reduces to the CCR for $q=1$ and to the CAR for $q=-1$. Note that the condition $C^2=0$ is not required here a priori. For  $q$ in
the interval $]-1,1[$, we have that \eqref{21} describes particles which are neither
bosons nor fermions.


In \cite{moh} it is proved that the eigenstates of $N_0=C^\dagger\,C$ are analogous to the bosonic ones, except that for the normalization. A simple concrete realization of  \eqref{21} can be deduced as follows:

\begin{example}[Mohapatra's Construction, see \cite{moh}] \label{mohex}Let $\mathcal{F}_e=\{e_k \mid k=0,1,2,\ldots\}$ be the canonical orthonormal  basis of the Hilbert space $\mathsf{H}=l^2(\mathbb N_0)$, with all zero entries except in the $(k+1)$-th position, which is equal to one. In this situation we involve the Kronecker symbol and get a scalar product of the following form \begin{equation} \langle e_k,e_m\rangle=\delta_{k,m}.\end{equation}  
If we take
\begin{equation}
C=\left(
    \begin{array}{ccccccc}
      0 & \beta_0 & 0 & 0 & 0 & 0 & \cdots \\
      0 & 0 & \beta_1 & 0 & 0 & 0 & \cdots \\
      0 & 0 & 0 & \beta_2 & 0 & 0 & \cdots \\
      0 & 0 & 0 & 0 & \beta_3 & 0 & \cdots \\
      0 & 0 & 0 & 0 & 0 & \beta_4  & \cdots \\
      \cdots & \cdots & \cdots & \cdots & \cdots & \cdots & \cdots \\
      \cdots & \cdots & \cdots & \cdots & \cdots & \cdots & \cdots \\
    \end{array}
  \right),
\label{21bis}\end{equation}
then \eqref{21} is satisfied if \begin{equation} \beta_0^2=1 \ \ \mathrm{and} \ \  \beta_n^2=1+q\beta_{n-1}^2, \ \ \  \forall n\geq1.\end{equation} and we have that
\be
\beta_n^2=\left\{
    \begin{array}{ll}
n+1, \  \quad \mbox{if} \   \ q=1,\\
\\
\frac{1-q^{n+1}}{1-q}, \ \ \mbox{if} \ \ q\neq 1.
       \end{array}
        \right.
\en

We always consider $\beta_n>0$ for all $n\geq0$. The above form of $C$ shows that  $C\,e_0=0$, and $C^\dagger$ behaves as a raising operator. From (\ref{21bis}) we deduce
\be
e_{n+1}=\frac{1}{\beta_n}\,C^\dagger e_n=\frac{1}{\beta_n!}(C^\dagger)^{n+1}\,e_0,
\label{22}\en
for all $n\geq0$. Here we have introduced the $q$-\textit{factorial}: \begin{equation}{[n]}_q=\beta_n!=\beta_n\beta_{n-1}\cdots\beta_2\beta_1.\end{equation}  Note that the symbol   ${[n]}_q)$ is often used in q-special combinatorics, see \cite{eremel} or  Proposition \ref{generealizedjacobi1} later on, and note also that  $\beta_n^2={[n+1]}_q$. Of course, from \eqref{22} it follows that  \begin{equation}C^\dagger e_n=\beta_ne_{n+1}.\end{equation} Using \eqref{21bis} it is also easy to check that $C$ acts as a lowering operator on $\mathcal{F}_e$: \begin{equation}C\,e_m=\beta_{m-1}e_{m-1} \ \ \ \ \ \ \ \forall m \ge 0,\end{equation} where it is also useful to introduce $\beta_{-1}=0$, in order to ensure  $C\,e_0=0$.

 Then we have
\be
N_0e_m=\beta_{m-1}^2e_m, \ \ \ \ \ \ \ \forall m \ge 0.
\label{23}\en
The operator $N$ is formally defined in \cite{moh} as follows \begin{equation}N=\frac{1}{\log(q)}\,\log(\mathbb{I}-N_0(1-q)) \ \ \ \ \ \mbox{for}  \ q>0,\end{equation} and satisfies the eigenvalue equation \begin{equation}Ne_m=me_m, \ \ \ \ \ \forall m\ \ge 0.\end{equation}
\end{example}

It  is good to note that there are different ways to represent the operator $C$ in \eqref{21bis}, that is, there are alternative constructions to those which have been proposed in Example \ref{mohex}.

\begin{example}[Eremin--Meldianov's Construction, see \cite{eremel}]\label{mohexbis}We may begin with  $C$ and $C^\dagger$ in the Hilbert space $\mathsf{H}=L^2(\mathbb R)$:
\be
C=\frac{e^{-2i\alpha x}-e^{i\alpha \frac{d}{dx}}e^{-i\alpha x}}{-i\sqrt{1-e^{-2\alpha^2}}}, \qquad\qquad C^\dagger=\frac{e^{2i\alpha x}-e^{i\alpha x}e^{i\alpha \frac{d}{dx}}}{i\sqrt{1-e^{-2\alpha^2}}},
\label{23bis}\en
where $\alpha=\sqrt{-\frac{\log(q)}{2}}$ or, which is the same, $q=e^{-2\alpha^2}$. However, since $\alpha$ is assumed to belong to the set  $[0,\infty)$,  $q$ ranges in the interval $]0,1]$. Then, the representation in (\ref{23bis}) only works in this region. Neverthless, one can check that  relations hold $[C,C^\dagger]_q:=C C^\dagger -q C^\dagger C
=\mathbb{I}$.
\end{example}

The considerations  in Section \ref{sec3} can be somehow {\em merged}, giving rise to what we call $\D$-pseudo quons, ($\D$-PQs). The procedure is very close to that for ordinary bosons, but requires some careful observations. We always begin with  $\mathsf{H}$  Hilbert space with scalar product $\langle \ , \ \rangle$ and related norm $\| \ \|$. Then we consider $A$ and $B$  operators
on $\mathsf{H}$ with domains $D(A)$ and $D(B)$ respectively. We pass to consider $A^\dagger$ and $B^\dagger$ which are their adjoints, and again we consider $\D$ dense subspace of $\mathsf{H}$
such that $A^\sharp\D\subseteq\D$ and $B^\sharp\D\subseteq\D$. The framework of functional analysis is exactly the same that we have seen in Section \ref{sec3} when we discussed the pseudobosonic operators. In particular, note that $\D\subseteq D(A^\sharp)$
and $\D\subseteq D(B^\sharp)$.

\begin{definition}[q-mutators for Pseudoquonic Operators, see \cite{bagquons}]\label{qcarrules}
The operators $(A,B)$ are $\D$-pseudoquonic  if for all $f\in\D$ we have
\be
[A,B]_qf:=A\,B\,f-q \,B\,A\,f=f,
\label{31}\en
for some real number $q \in [-1,1]$. In particular, for $q=1$ we find Definition \ref{carrules}.
\end{definition}

In principle,  we do not need to restrict $q \in [-1,1]$, but we put this assumption in order to make a perfect analogy with dynamical systems which looks like pseudobosonic operators. We then formalize the next steps as done in Section \ref{sec3}.

\vspace{2mm}

{\bf Assumption $\D$-pq 1.}  There exists a non-zero $\varphi_{ 0}\in\D$ such that $A \,\varphi_{ 0}=0$.

\vspace{1mm}

{\bf Assumption $\D$-pq 2.}  There exists a non-zero $\Psi_{ 0}\in\D$ such that $B^\dagger\,\Psi_{ 0}=0$.

\vspace{2mm}

If $(A,B)$ satisfy Definition \ref{qcarrules}, then \be \varphi_n:=\frac{1}{{\beta_{n-1}!}}\,B^n\varphi_0=\frac{1}{\beta_{n-1}}\,B\,\varphi_{n-1} \quad \mbox{and} \quad \Psi_n:=\frac{1}{{\beta_{n-1}!}}\,{A^\dagger}^n\Psi_0=\frac{1}{{\beta_{n-1}}}\,{A^\dagger}\,\Psi_{n-1}, \quad   \forall n\geq1,\en can be defined and  belong to $\D$, but also to the domains of $A^\sharp$, $B^\sharp$ and $N^\sharp$, where $N=BA$. Note that $\beta_n$ are those in Example \ref{mohex}. We further define $\mathcal{F}_\Psi=\{\Psi_{ n}, \,n\geq0\}$ and
$\mathcal{F}_\varphi=\{\varphi_{ n}, \,n\geq0\}$.

Then the following lowering and raising relations  replace those in \eqref{A3}:
\be
\left\{
    \begin{array}{ll}
B\,\varphi_n=\beta_n\varphi_{n+1}, \qquad\qquad\qquad\qquad\qquad\quad\,\, n\geq 0,\\
A\,\varphi_0=0,\quad \varphi_n=\beta_{n-1}\,\varphi_{n-1}, \qquad\qquad\quad\,\, n\geq 1,\\
A^\dagger\Psi_n=\beta_n\Psi_{n+1}, \qquad\qquad\qquad\qquad\quad\quad\quad n\geq 0,\\
B^\dagger\Psi_0=0,\quad B^\dagger\Psi_n=\beta_{n-1}\,\Psi_{n-1}, \qquad\quad\qquad n\geq 1.\\
       \end{array}
        \right.
\label{32}\en
In a more compact way we can rewrite the second and the fourth of these equalities as follows: $a\varphi_n=\beta_{n-1}\,\varphi_{n-1}$ and $b^\dagger\Psi_n=\beta_{n-1}\,\Psi_{n-1}$, with the agreement that $\Psi_{-1}=\varphi_{-1}$ are the zero vectors. Recall also that $\beta_{-1}=0$. Then, from (\ref{32}), using the stability of the set $\D$ under the action of $b$ and $a^\dagger$, we deduce the eigenvalue equations
$N\varphi_n=\beta_{n-1}\varphi_n$ and  $N^\dagger\Psi_n=\beta_{n-1}\Psi_n$, $n\geq0$ which show that $N$ and $N^\dagger$ have the same eigenvalues. This suggests the existence of some intertwining operator between $N$ and $N^\dagger$. This will be discussed later. Furthermore,  choosing the normalization of $\varphi_0$ and $\Psi_0$ in such a way $\left<\varphi_0,\Psi_0\right>=1$, we conclude, as in Section \ref{sec3}, that
\be \left<\varphi_n,\Psi_m\right>=\delta_{n,m}, \label{33}\en
 for all $n, m\geq0$. Hence $\mathcal{F}_\Psi$ and $\mathcal{F}_\varphi$ are biorthonormal. Our third assumption is the following:

\vspace{2mm}

{\bf Assumption $\mathcal{D}$-pq 3.--}  $\mathcal{F}_\varphi$ is a basis for $\mathsf{H}$,

\vspace{1mm}

and this implies that $\mathcal{F}_\Psi$ is a basis for $\mathsf{H}$ as well, see \cite{chri}). Again the weaker version is motivated by the evidences which we have from pseudobosonic operators:

\vspace{2mm}

{\bf Assumption $\mathcal{D}$-pqw 3.--}  For some subspace $\mathcal{G}$ dense in $\mathsf{H}$, $\mathcal{F}_\varphi$ and $\mathcal{F}_\Psi$ are $\mathcal{G}$-quasi bases.

\vspace{2mm}

We refer to \cite{bagquons} for some concrete examples of pseudobosonic operators and for their realization in square summable spaces. 

\begin{example}[Generalized Version of Eremin-Meldianov's Construction, see \cite{bagquons}] \label{mohextris} Here we just give a specific example of $A$ and $B$ considered in \cite{bagquons}, offering another perspective for Example \ref{mohexbis} in the context of pseudoquonic operators. We consider
\be
A=\frac{e^{-2i\alpha x}-e^{i\alpha \frac{d}{dx}}e^{-i\alpha (x+\gamma)}}{-i\sqrt{1-e^{-2\alpha^2}}} \quad \mbox{and} \quad B=\frac{e^{2i\alpha x}-e^{i\alpha (x-\gamma)}e^{i\alpha \frac{d}{dx}}}{i\sqrt{1-e^{-2\alpha^2}}}.
\label{52}\en
Here $\alpha=\sqrt{-\frac{\log(q)}{2}}$ or, which is the same, $q=e^{-2\alpha^2}$. 
Of course, $(A,B)$ collapse to the pair $(C,C^\dagger)$ of Example \ref{mohexbis} if $\gamma=0$, but, if $\gamma\neq0$, then $B\neq A^\dagger$. 
\end{example}

At this stage of the construction, there are the new considerations to be done.

\begin{theorem}[Third Main Theorem]\label{main3} The
     $O^*$-algebra $\mathcal{L}^\dag(\mathcal{D})$ of the pseudoquonic operators on $\mathcal{D}$ as per Definition \ref{o**} possess the structure of $q$-deformed Heisenberg algebra. In particular, there exists some $q \in [-1,1]$ such that $\mathcal{L}^\dag(\mathcal{D}) \simeq \mathcal{H}(q)$.
\end{theorem}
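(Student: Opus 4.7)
The plan is to exhibit the identification by showing that the pseudoquonic operators $A$ and $B$ realize, inside the $O^*$-algebra $\mathcal{L}^\dagger(\mathcal{D})$, precisely the defining relation of $\mathcal{H}(q)$, and then to invoke the universal property of algebras presented by generators and relations.

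First I would check that $A, B \in \mathcal{L}^\dagger(\mathcal{D})$: the stability $A^\sharp \mathcal{D} \subseteq \mathcal{D}$ and $B^\sharp \mathcal{D} \subseteq \mathcal{D}$ assumed before Definition \ref{qcarrules} ensures that both operators and their adjoints map $\mathcal{D}$ into itself, and closability follows because each adjoint is densely defined. The identity $\mathbb{I}$ is trivially in $\mathcal{L}^\dagger(\mathcal{D})$, and because this $O^*$-algebra is closed under composition the combination $AB-qBA$ is itself an element of $\mathcal{L}^\dagger(\mathcal{D})$. By \eqref{31}, this element agrees with $\mathbb{I}$ on the dense invariant subspace $\mathcal{D}$, hence as elements of the $O^*$-algebra we have the operator identity $[A,B]_q = \mathbb{I}$.

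Next I would invoke the presentation of $\mathcal{H}(q)$ given in Definition \ref{qh}: since $\mathcal{H}(q)$ is by construction the free unital associative algebra on two symbols modulo the single $q$-commutation relation, any unital associative algebra in which two distinguished elements satisfy the same relation receives a canonical unital homomorphism from $\mathcal{H}(q)$. Applying this to the pseudoquonic $A$ and $B$ yields a map $\Phi : \mathcal{H}(q) \longrightarrow \mathcal{L}^\dagger(\mathcal{D})$, whose image is the unital subalgebra $\langle A, B, \mathbb{I}\rangle \subseteq \mathcal{L}^\dagger(\mathcal{D})$. This is already the precise sense in which $\mathcal{L}^\dagger(\mathcal{D})$ acquires, from the pseudoquonic operators, the structure of a $q$-deformed Heisenberg algebra, in parallel with Proposition \ref{liealgebraofldag} in the pseudobosonic setting.

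To upgrade this to a bona fide isomorphism one must establish injectivity of $\Phi$, and this is the main obstacle. The strategy I would pursue is to normal-order an abstract element $X \in \mathcal{H}(q)$ as $X = \sum_{m,n} c_{m,n} B^m A^n$ by iterating $AB = \mathbb{I} + qBA$, and then to test $\Phi(X) = 0$ against the biorthonormal pair $\mathcal{F}_\varphi, \mathcal{F}_\Psi$ from \eqref{32}--\eqref{33}. The lowering/raising rules \eqref{32} together with the duality \eqref{33} turn the scalar identities $\langle \Phi(X) \varphi_n, \Psi_m\rangle = 0$ into a triangular linear system whose diagonal entries are products of the $q$-factorials $\beta_k!$ of Example \ref{mohex}. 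The delicate point is the non-vanishing of these factorials throughout $q \in [-1,1]$: in the open interval this is immediate from $\beta_n^2 = (1-q^{n+1})/(1-q) > 0$, while the boundary $q = -1$ requires separate bookkeeping, as $\beta_{2k+1}^2$ would vanish and one must then restrict the normal form accordingly (noting that $\mathcal{H}(-1)$ does \emph{not} a priori impose $A^2 = 0$, so care is needed to distinguish it from the strictly fermionic case of \eqref{pseudofermions}). Once this non-degeneracy is handled the triangular system forces all $c_{m,n}$ to vanish, giving the isomorphism $\Phi : \mathcal{H}(q) \simeq \Phi(\mathcal{H}(q)) \subseteq \mathcal{L}^\dagger(\mathcal{D})$ for the specific value of $q$ fixed in Definition \ref{qcarrules}.
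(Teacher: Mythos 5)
Your proposal is correct and, for the part the paper actually argues, coincides with it: the paper's proof is a single sentence saying that one repeats the construction of Proposition \ref{liealgebraofldag} with the $q$-mutator \eqref{31} in place of \eqref{A1}, so that $\mathcal{L}^\dagger(\mathcal{D})$ ``satisfies Definition \ref{qh}.'' Your first two paragraphs are exactly that argument, made explicit: stability of $\mathcal{D}$ under $A^\sharp$, $B^\sharp$ puts $A,B,\mathbb{I}$ and hence $AB-qBA$ in $\mathcal{L}^\dagger(\mathcal{D})$, and the relation on the dense invariant domain gives the operator identity $[A,B]_q=\mathbb{I}$, whence the canonical homomorphism $\Phi:\mathcal{H}(q)\to\mathcal{L}^\dagger(\mathcal{D})$ from the presentation. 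Where you genuinely go beyond the paper is in (a) observing that what one really obtains is the unital subalgebra $\langle A,B,\mathbb{I}\rangle$ as a homomorphic image of $\mathcal{H}(q)$ -- the literal claim $\mathcal{L}^\dagger(\mathcal{D})\simeq\mathcal{H}(q)$ cannot hold for the full $O^*$-algebra, and the paper's proof silently elides this -- and (b) attempting to upgrade the homomorphism to an isomorphism onto its image by normal-ordering into the PBW-type basis $B^mA^n$ (linear independence in $\mathcal{H}(q)$ being the content of the cited Theorem 3.1 of Hellstr\"om--Silvestrov) and testing against the biorthonormal families via \eqref{32}--\eqref{33}. The paper offers no injectivity argument at all, so your triangular-system sketch is added value; it is sound for $q\in(-1,1]$ where all $\beta_n>0$, and you correctly flag that $q=-1$ degenerates (there $\beta_{2k+1}=0$, the vectors $\varphi_n$ beyond $n=1$ are not even defined by \eqref{A2a}-type formulas, and injectivity can genuinely fail), a caveat the paper does not record. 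In short: same skeleton as the paper, but your version actually substantiates the ``In particular'' clause, at the price of restricting, or treating separately, the endpoint $q=-1$.
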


\begin{proof}
We repeat the same procedure, which has been illustrated in Section \ref{sec3} in order to get Proposition \ref{liealgebraofldag} from Definition \ref{o**}, but this time we use the  commutator \eqref{31} instead of \eqref{A1}.  In this way we find that $\mathcal{L}^\dag(\mathcal{D})$ satisfies Definition \ref{qh} and the result follows.
\end{proof}

Despite the short proof of Theorem \ref{main3}, its consequences are relevant. First of all, Proposition \ref{liealgebraofldag} follows from Theorem \ref{main3} when $q=1$. Secondly,  Proposition \ref{laconstruction} follows from Theorem \ref{main3}.

\begin{corollary}[Construction of q-Deformed Heisenberg Algebras by Pseudoquons]\

\begin{itemize}

\item[(1).] Define $N_0$ as the Hamiltonian of the generalized quantum harmonic oscillator $N_0=C^\dag C$ in Example \ref{mohex} (or  Example \ref{mohexbis});

\item[(2).] Construct the algebra $\mathcal{L}^\dagger(\mathcal{D})$ as in Definition \ref{o*} from pseudoquonic operators $C$ and $C^\dagger$;

\item[(3).]Observe that $C, C^\dagger\in\mathcal{L}^\dagger(\mathcal{D})$, together with $N_0$;

\item[(4).] Apply Theorem \ref{main3} and find that  there exists some $q \in [-1,1]$ such that $\mathcal{L}^\dag(\mathcal{D}) \simeq \mathcal{H}(q)$.

\end{itemize}
\end{corollary}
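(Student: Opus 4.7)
The plan is to verify the four items of the corollary sequentially, mirroring Proposition \ref{laconstruction} but replacing ordinary bosons with pseudoquons, and then invoke Theorem \ref{main3} to obtain the claimed isomorphism. Since Theorem \ref{main3} is already at our disposal, the corollary is essentially a concrete realisation: one must exhibit a specific pair $(C,C^\dagger)$ satisfying the $q$-mutation relation \eqref{21} and fit it inside the $O^*$-algebra framework of Definition \ref{o**}.

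First I would ground items (1) and (2) in Mohapatra's explicit realisation (Example \ref{mohex}), which supplies $C$ and $C^\dagger$ on $\mathsf{H}=l^2(\mathbb N_0)$ together with $N_0=C^\dagger C$. By \eqref{23} one has $N_0 e_m=\beta_{m-1}^2 e_m$ for all $m\ge 0$, so $N_0$ is diagonal on the canonical basis $\mathcal{F}_e$; it is then densely defined and essentially self-adjoint, and its closure plays the role of the unbounded self-adjoint operator required in Definition \ref{o*}. Setting $\mathcal{D}=D^\infty(N_0)$ as in \eqref{dset} produces the $O^*$-algebra $\mathcal{L}^\dagger(\mathcal{D})$ demanded by Definition \ref{o**}. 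As an alternative I could equally start from Eremin--Meldianov's representation (Example \ref{mohexbis}) on $L^2(\mathbb{R})$, which covers the regime $q\in(0,1]$ via $\alpha=\sqrt{-\log(q)/2}$.

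Item (3) reduces to checking the stability condition $C^\sharp\mathcal{D}\subseteq\mathcal{D}$. This is immediate from the formulas $Ce_m=\beta_{m-1}e_{m-1}$ and $C^\dagger e_m=\beta_m e_{m+1}$ together with the uniform bound $\beta_m^2\le (1-q)^{-1}$ for $q\in(-1,1)$, and $\beta_m^2=m+1$ at the bosonic value $q=1$; in either case the action on any $f\in\mathcal{D}$ is controlled polynomially by powers of $N_0$, so all the seminorms $\|C^\sharp f\|^{f,k}$ of Section \ref{sec3} are finite. Hence $C,C^\dagger\in\mathcal{L}^\dagger(\mathcal{D})$, and by closure under composition $N_0=C^\dagger C$ belongs there as well, completing items (1)--(3).

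For item (4), I would appeal directly to Theorem \ref{main3}: by construction $(C,C^\dagger)$ satisfies \eqref{21}, so $[C,C^\dagger]_q=\mathbb{I}$ in $\mathcal{L}^\dagger(\mathcal{D})$, which together with the two generators $C,C^\dagger$ reproduces the defining relation of $\mathcal{H}(q)$ from Definition \ref{qh}. The main obstacle I foresee lies in the precise meaning of ``$\simeq$'', since $\mathcal{H}(q)$ is a purely algebraic object (the free unital associative $\mathbb{C}$-algebra on two generators modulo the single relation $AB-qBA=\mathbb{I}$) while $\mathcal{L}^\dagger(\mathcal{D})$ is strictly larger than the subalgebra generated by $\{C,C^\dagger,\mathbb{I}\}$. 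I would therefore interpret the isomorphism as the identification with the unital $q$-Heisenberg subalgebra of $\mathcal{L}^\dagger(\mathcal{D})$ generated by $C$ and $C^\dagger$: the unique unital homomorphism sending the abstract generators of $\mathcal{H}(q)$ to $(C,C^\dagger)$ is surjective onto this subalgebra by construction, and injectivity follows because $\mathcal{H}(q)$ imposes only the $q$-mutation, which $C$ and $C^\dagger$ already satisfy as concrete operators.
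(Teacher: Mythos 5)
Your route coincides with the paper's: the corollary carries no separate proof there, being presented as an immediate application of Theorem \ref{main3} to the concrete pairs of Examples \ref{mohex} and \ref{mohexbis}, and your items (1)--(3) simply make explicit the verifications the paper leaves implicit. The one place where you go beyond the paper --- pinning down what ``$\simeq$'' should mean in item (4) --- is a good instinct, but the injectivity argument you give there does not work as written. Exhibiting operators $C,C^\dagger$ with $CC^\dagger-qC^\dagger C=\mathbb{I}$ only yields a well-defined \emph{surjective} unital homomorphism from $\mathcal{H}(q)$ onto the subalgebra they generate; it says nothing about the kernel, because the concrete operators may satisfy further relations not generated by the $q$-mutation (e.g.\ relations originating from $Ce_0=0$ in the Fock realization). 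To close this you need the normal-form basis $\{B^mA^n\}$ of $\mathcal{H}(q)$ from \cite{sergey2}, quoted in Appendix II, together with a check that the corresponding operators $C^{\dagger\, m}C^{\,n}$ are linearly independent on $\mathcal{F}_e$; this does hold for the weighted-shift realization, but it is a statement requiring proof, not a consequence of ``only the $q$-mutation is imposed.''

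A second, smaller tension, which you inherit from the paper but which your own estimate makes visible: for $q\in(-1,1)$ the weights $\beta_m^2=\frac{1-q^{m+1}}{1-q}$ are uniformly bounded, so $C$, $C^\dagger$ and $N_0=C^\dagger C$ are all bounded operators, $D^\infty(N_0)=\mathsf{H}$, and $\mathcal{L}^\dagger(\mathcal{D})$ collapses to $B(\mathsf{H})$ --- exactly the degenerate situation Definition \ref{o*} is designed to exclude, since $N_0$ is required there to be unbounded. The same remark applies to the Eremin--Meldianov operators, which are linear combinations of unitaries. It would strengthen your write-up to state explicitly that the $O^*$-algebraic framework is genuinely needed only at $q=1$, while for $|q|<1$ the construction lives in $B(\mathsf{H})$ and connects instead to the $C^*$-algebraic picture of the Cuntz-algebra section.
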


\section{$q$-CCR for pseudoquonic operators and stability of Cuntz algebras}

Jorgensen and others \cite{qccr}  consider  operators in Definition \ref{qh} in the following form \begin{equation}\label{extra21}A_i A_j^\dag-q A_j^\dag A_i=\delta_{ij} \mathbb{I}  \ \  \ \ \  \mathrm{for} \ \ i,j \in \{1,2, ..., d\},\end{equation} where $d$ is a fixed positive integer, $q\in [-1,1]$ and $\delta_{ij}$ denotes the Kronecker delta. This is exactly the situation which is decribed in \eqref{21} and  the authors of \cite{qccr}   show the existence of a \textit{universal}  $C^*$-\textit{algebra} $\mathfrak{C}^q$ satisfying \eqref{extra21} with the property that every other $C^*$-algebra  of bounded operators satysfying the same  relations is a homomorphic image of $\mathfrak{C}^q$. Note that homomorphic images of $C^*$-algebras are surjective continuous maps which preserve both the algebraic structure and the $*$ operator. This result gives a condition of existence and uniqueness, but applies only for  small values of $q$ and in particular to $q=0$, where  $\mathfrak{C}^0$   is usually known as \textit{Cuntz algebra}.

Note that from \cite[Theorem 3.5.1]{baginbook} and \cite[Theorem 2.8.1]{bag2022book}  it is possible to produce pseudobosons and pseudofermions up to similarity with appropriate positive operators. This is possible also at the level of pseudoquons, as illustrated by the following result.Theorem 2.8.1

\begin{theorem}\label{deforminguptosimilarity}
   Given $i,j \in \{1,2, ..., d\}$ and $d$ positive integer, for every $q$-CCR of the form  $U_i U_j^\dag-q U_j^\dag U_i=\delta_{ij} \mathbb{I}$, there exists $q$-CCR of the form $V_i V_j-q V_j V_i=\delta_{ij} \mathbb{I}$ such that $U_i=T V_i T^{-1}$ and $U_j=T V_j^\dag T^{-1}$ for some positive invertible self-adjoint operator $T$. Viceversa, given  a positive invertible self-adjoint operator $T$ and a
$q$-CCR of the form $V_i V_j-q V_j V_i=\delta_{ij} \mathbb{I}$, we obtain a $q$-CCR of the form  $U_i U_j^\dag-q U_j^\dag U_i=\delta_{ij} \mathbb{I}$ up to similarity of $U_i$ and $V_i$ by $T$.
\end{theorem}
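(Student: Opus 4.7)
The plan is to mirror the proof structure already used in the pseudobosonic and pseudofermionic settings of Remark \ref{similiaritycondition}, \cite[Theorem 3.5.1]{baginbook} and \cite[Theorem 2.8.1]{bag2022book}. The essential algebraic fact is that any similarity $X \mapsto TXT^{-1}$ by an invertible operator $T \in \mathcal{L}^\dag(\mathcal{D})$ with $T^{-1} \in \mathcal{L}^\dag(\mathcal{D})$ intertwines every $q$-mutator, in the sense that
\[(TXT^{-1})(TYT^{-1}) - q\,(TYT^{-1})(TXT^{-1}) = T\,(XY - qYX)\,T^{-1}\]
for all $X, Y \in \mathcal{L}^\dag(\mathcal{D})$ and every $q \in [-1,1]$. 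This single identity is the engine driving both directions of the theorem and is the reason why the argument in the pseudobosonic case extends to the pseudoquonic case without modification.

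For the converse direction, fix a positive invertible self-adjoint $T$ with $T, T^{-1} \in \mathcal{L}^\dag(\mathcal{D})$, and assume the operators $V_1,\dots,V_d$ satisfy $V_i V_j - q V_j V_i = \delta_{ij}\mathbb{I}$. Set $U_i := T V_i T^{-1}$ and choose the companion operator via $U_j^\dag := T V_j T^{-1}$, which, because $T = T^*$, is exactly the pseudoquonic pairing of the statement (the formula $U_j = T V_j^\dag T^{-1}$ then follows by taking adjoints and using the self-adjointness of $T$). The intertwining identity above gives at once
\[U_i U_j^\dag - q\, U_j^\dag U_i = T\,(V_i V_j - q V_j V_i)\,T^{-1} = \delta_{ij}\,T T^{-1} = \delta_{ij}\mathbb{I},\]
which is the required $q$-CCR.

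For the forward direction, given $U_i, U_j$ satisfying the first $q$-CCR, one must exhibit a suitable $T$. Here I would follow the constructive scheme of \cite[Theorem 2.8.1]{bag2022book}: take any positive self-adjoint $H \in \mathcal{L}^\dag(\mathcal{D})$ whose spectrum is bounded below away from zero and which stabilises $\mathcal{D}$, and set $T := e^H$. The stability of $\mathcal{D} = D^\infty(N_0)$ of Definition \ref{o*} under the smooth functional calculus, together with the fact that $\mathcal{L}^\dag(\mathcal{D})[\tau_0]$ is a complete $*$-algebra in the sense of Definition \ref{o**}, ensures that $T$ and $T^{-1}$ belong to $\mathcal{L}^\dag(\mathcal{D})$ with $T = T^* > 0$. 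Defining $V_i := T^{-1} U_i T$ and $V_j := T^{-1} U_j^\dag T$, a second application of the intertwining identity gives
\[V_i V_j - q V_j V_i = T^{-1}(U_i U_j^\dag - q U_j^\dag U_i) T = \delta_{ij}\mathbb{I},\]
and the similarity relations between the $U$'s and the $V$'s are then read off by inversion.

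The main obstacle, and really the only substantive content of the proof, is this existence step: one has to guarantee that the positive self-adjoint intertwiner $T$ can be chosen inside $\mathcal{L}^\dag(\mathcal{D})$ together with $T^{-1}$, while preserving the common core $\mathcal{D}$. This is exactly the technical point already settled in \cite[Theorem 3.5.1]{baginbook} and \cite[Theorem 2.8.1]{bag2022book}, and its proof transfers verbatim to the pseudoquonic setting because neither Definitions \ref{o*}--\ref{o**} nor the topology $\tau_0$ on $\mathcal{L}^\dag(\mathcal{D})$ depend on the particular commutation rule. The purely $q$-algebraic manipulation, by contrast, is automatic from the intertwining identity and works uniformly for every $q \in [-1,1]$, so no additional analytic hypothesis on $q$ is needed.
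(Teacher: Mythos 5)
Your proposal is correct and follows essentially the same route as the paper: both directions reduce to the single observation that conjugation by an invertible $T$ intertwines the $q$-mutator, i.e. $T(XY-qYX)T^{-1}=(TXT^{-1})(TYT^{-1})-q\,(TYT^{-1})(TXT^{-1})$, which is exactly the computation the paper's proof carries out term by term. The additional material on constructing $T=e^{H}$ inside $\mathcal{L}^\dagger(\mathcal{D})$ is more care than the paper itself takes (it simply conjugates by a given $T$ ``with the aforementioned properties''), but it does not alter the argument.
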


\begin{proof}     Assume we have a $q$-CCR of the form  $U_i U_j^\dag-q U_j^\dag U_i=\delta_{ij} \mathbb{I}$.    Then define $V_i=T U_i T^{-1}$ and $V_j=T U_j^\dag T^{-1}$ with the aforementioned properties for $T$. We find that
    \begin{equation}
        V_i V_j-q V_j V_i = T U_i T^{-1} T U_j^\dag T^{-1} - q T U_j^\dag T^{-1} T U_i T^{-1}
                =T U_i  U_j^\dag T^{-1} - q T U_j^\dag U_i T^{-1}
        \end{equation}
  \[      =T \left( \delta_{ij} \mathbb{I} + q U_j^\dag U_i \right) T^{-1} - q T U_j^\dag U_i T^{-1}        
       =T \delta_{ij} \mathbb{I} T^{-1} + T q U_j^\dag U_i T^{-1} - q T U_j^\dag U_i T^{-1}       
\]\[        =T \delta_{ij} \mathbb{I} T^{-1} + q T  U_j^\dag U_i T^{-1} - q T U_j^\dag U_i T^{-1}   =\delta_{ij} \mathbb{I}.\]
Conversely,     assume there exists $q$-CCR of the form $V_iV_j-q V_j V_i=\delta_{ij} \mathbb{I}$. Then it is possible to define operators $U_i, U_j$ and a positive invertible self-adjoint operator $T$ such that $V_i=T U_i T^{-1}$ and $V_j^\dag=T U_j T^{-1}$. A similar check implies $U_i U_j^\dag-q U_j^\dag U_i=\delta_{ij} \mathbb{I}$, so the result follows.
\end{proof}

We want to recall  the construction of  $\mathfrak{C}^q$ from \cite{qccr}, since it is pertinent to the discussion. The first step is to consider  the complex vector space $\mathcal{V}$  of the sesquilinear forms $\langle \cdot, \cdot \rangle$. Then one focuses on the $\mathbb{C}$-linear maps $r: \mathcal{V} \to \mathcal{A}$ into a $C^*$-algebra $\mathcal{A}$ with identity such that $    r(f)r(g)-qr(g)r(f)=\langle g, f \rangle \mathbb{I}$, for all $f,g \in \mathcal{V}.$ Any map $r$ with this property is called  \textit{representation of the} $q$-\textit{relations}. Now one notes separately what happens in the case of a on edimensional $C^*$-algebra of this kind, that is, when $\mathcal{V}$ is one dimensional.

\begin{proposition}[See \cite{qccr}, Proposition 1]
    Let $\mathcal{A}$ be a $C^*$-algebra with a unit, $q \in (-1,1)$, $c \in \mathbb{R}$, and $a \in \mathcal{A}$ a non-zero element satisfying
    $$aa^*-qa^*a=c \1.$$
    Then $c>0$, and either $aa^*=a^* a=\frac{c}{1-q}$, or the spectra of $aa^*$ and $a^*a$ are equal to the closure of the sequence $$c \frac{1-q^n}{1-q},$$
    where $n \in \mathbb{N}$, and $n \geq 0$ for $a^* a$ and $n \geq 1$ for $aa^*$.
\end{proposition}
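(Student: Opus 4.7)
The plan is to turn the algebraic identity $aa^{*}-qa^{*}a=c\1$ into spectral information via $\sigma(aa^{*})=q\sigma(a^{*}a)+c$ (spectral mapping for the polynomial $qt+c$) and to combine this with the standard fact $\sigma(aa^{*})\setminus\{0\}=\sigma(a^{*}a)\setminus\{0\}$ (valid in any unital Banach algebra) in order to bootstrap the whole spectrum from a single element. The dynamics will be governed by the affine map $f(\lambda)=q\lambda+c$, whose unique fixed point is $\lambda^{*}=c/(1-q)$, with iterates $f^{n}(\mu)=q^{n}\mu+c\,\frac{1-q^{n}}{1-q}\to\lambda^{*}$ since $|q|<1$.

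To show $c>0$, I would set $x=a^{*}a$ and $y=aa^{*}$; both are positive with $\|x\|=\|y\|=\|a\|^{2}=M>0$, and the norm of a positive element equals $\max\sigma(\cdot)$. A case analysis on the sign of $q$ then suffices: for $q\geq 0$ the map $f$ is increasing, so $M=\max\sigma(y)=qM+c$ and therefore $c=(1-q)M>0$; for $q<0$ the map is decreasing, so $M=q\min\sigma(x)+c\geq c$ since $\min\sigma(x)\geq 0$, and thus $c\geq M>0$.

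Next comes the core dichotomy. If $\sigma(a^{*}a)=\{\lambda^{*}\}$ then $a^{*}a=\lambda^{*}\1$ (by the continuous functional calculus on a positive element), and substituting into $y=qx+c\1$ yields $aa^{*}=\lambda^{*}\1$ as well, which is the first alternative. Otherwise pick $\lambda_{0}\in\sigma(a^{*}a)\setminus\{\lambda^{*}\}$ and iterate the inverse $g(\mu)=(\mu-c)/q$ backward. The crucial observation is that for any $\mu\in\sigma(a^{*}a)\setminus\{0\}$ one has $\mu\in\sigma(aa^{*})=q\sigma(a^{*}a)+c$, hence $\mu=q\nu+c$ for some $\nu\in\sigma(a^{*}a)$, i.e.\ $g(\mu)\in\sigma(a^{*}a)$. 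Since $|g^{n}(\lambda_{0})-\lambda^{*}|=|q|^{-n}|\lambda_{0}-\lambda^{*}|\to\infty$ while $\sigma(a^{*}a)$ is compact, the backward orbit cannot stay inside $\sigma(a^{*}a)\setminus\{0\}$ forever; the smallest $n_{0}\geq 0$ with $g^{n_{0}}(\lambda_{0})=0$ then forces $\lambda_{0}=f^{n_{0}}(0)=c\,\frac{1-q^{n_{0}}}{1-q}$, and in particular $0\in\sigma(a^{*}a)$.

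To finish I would push $0$ forward: $\{f^{n}(0)\}_{n\geq 0}=\{c(1-q^{n})/(1-q)\}_{n\geq 0}$ lies in $\sigma(a^{*}a)$, and applying $f$ once more places $\{f^{n}(0)\}_{n\geq 1}$ inside $\sigma(aa^{*})=q\sigma(a^{*}a)+c$; closedness of the spectrum then adds the accumulation point $\lambda^{*}$, matching the equalities claimed in the statement. To verify that the index for $aa^{*}$ truly starts at $n\geq 1$, I would show $0\notin\sigma(aa^{*})$, equivalently $-c/q\notin\sigma(a^{*}a)$: the equation $c(1-q^{n})/(1-q)=-c/q$ reduces to $q^{n+1}=1$, impossible for $|q|<1$ and $n\geq 0$, and $-c/q\neq\lambda^{*}$ because $c>0$. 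The main obstacle I anticipate is making the backward-iteration step watertight: one must confirm $g$ remains inside $\sigma(a^{*}a)$ at each stage, which rests on the interplay between $\sigma(a^{*}a)\setminus\{0\}=\sigma(aa^{*})\setminus\{0\}$ and the affine relation $\sigma(aa^{*})=q\sigma(a^{*}a)+c$, and one must handle with care the moment when an iterate hits $0$ --- which is precisely what forces $0\in\sigma(a^{*}a)$ and pins down the full sequence.
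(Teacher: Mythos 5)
The paper does not actually prove this proposition: it is imported verbatim from Jorgensen--Schmitt--Werner \cite{qccr} and stated without argument, so there is no internal proof to compare against. Judged on its own, your spectral-dynamics argument is essentially correct and is in the spirit of the original: the identity $\sigma(aa^*)=q\,\sigma(a^*a)+c$ from the spectral mapping theorem, combined with $\sigma(aa^*)\setminus\{0\}=\sigma(a^*a)\setminus\{0\}$, the $C^*$-identity $\|a^*a\|=\|aa^*\|=\|a\|^2$, and the contraction/expansion dynamics of $f(\lambda)=q\lambda+c$ around the fixed point $c/(1-q)$ do yield both $c>0$ and the dichotomy. The backward-orbit argument is sound: if $\mu\in\sigma(a^*a)\setminus\{0\}$ then $g(\mu)\in\sigma(a^*a)$, compactness forces the orbit of any $\lambda_0\neq\lambda^*$ to hit $0$ after finitely many steps, and the forward induction from $0$ (using $f^n(0)=c(1-q^n)/(1-q)\neq 0$ for $n\geq 1$, since $q^n\neq 1$ for $|q|<1$) recovers the whole sequence inside $\sigma(a^*a)$; closedness supplies $\lambda^*$, and your check that $-c/q\notin\sigma(a^*a)$ correctly shifts the index to $n\geq 1$ for $aa^*$.

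The one concrete gap is the value $q=0$, which is allowed by the hypothesis $q\in(-1,1)$ but breaks your inverse map $g(\mu)=(\mu-c)/q$. You should treat it separately (it is trivial: $aa^*=c\1$ forces $\sigma(aa^*)=\{c\}$ and $\sigma(a^*a)\subseteq\{0,c\}$ directly from $\sigma(a^*a)\setminus\{0\}=\{c\}$, which is exactly the claimed closure with the convention $q^0=1$). Two smaller points worth making explicit when you write this up: (i) the forward-push step $f^n(0)\in\sigma(a^*a)$ is an induction that alternates between $\sigma(aa^*)=f(\sigma(a^*a))$ and the nonzero-spectra identity, and it needs the observation that $f^n(0)\neq 0$ for $n\geq 1$; (ii) the backward-iteration argument should be stated for an \emph{arbitrary} $\lambda\in\sigma(a^*a)\setminus\{\lambda^*\}$, not just one chosen $\lambda_0$, since it is this uniform statement that gives the inclusion $\sigma(a^*a)\subseteq\overline{\{f^n(0)\}}$.
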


\begin{corollary}
    Let $\mathcal{V}$ be a $C^*$-algebra with a unit, $q \in (-1,1)$, $c \in \mathbb{R}$, and $A,B \in \mathcal{A}$ some non-zero elements, some $T=T^\dagger$ with $T>0$, $a=TAT^{-1}$ and $a^*=T B T^{-1}$
    $$aa^*-qa^*a=c \1.$$
    Then $c>0$, and either $AB=BA=\frac{c}{1-q}$, or the spectra of $AB$ and $BA$ are equal to the closure of the sequence $$c \frac{1-q^n}{1-q},$$
    where $n \in \mathbb{N}$, and $n \geq 0$ for $BA$ and $n \geq 1$ for $AB$.
\end{corollary}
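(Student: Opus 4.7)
The plan is to reduce the statement directly to the preceding Proposition 1 via the similarity transformation. The key observation is that the multiplicative structure is preserved by conjugation with $T$: from $a = TAT^{-1}$ and $a^* = TBT^{-1}$ we immediately obtain
\begin{equation*}
aa^* = T(AB)T^{-1}, \qquad a^*a = T(BA)T^{-1}.
\end{equation*}
Thus $AB$ is similar to $aa^*$, and $BA$ is similar to $a^*a$, via the positive invertible self-adjoint element $T \in \mathcal{A}$, and in particular the $n \geq 1$ indexing attached to $aa^*$ transfers to $AB$, while the $n \geq 0$ indexing attached to $a^*a$ transfers to $BA$.

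Next I would apply Proposition 1 to the element $a \in \mathcal{A}$, which satisfies $aa^* - qa^*a = c\mathbb{I}$ by hypothesis. Proposition 1 delivers $c > 0$ at once, together with the dichotomy for $aa^*$ and $a^*a$. In the first alternative, $aa^* = a^*a = \frac{c}{1-q}\mathbb{I}$; since the identity commutes with $T$, conjugating by $T^{-1}$ yields $AB = BA = \frac{c}{1-q}\mathbb{I}$. In the second alternative, the spectra of $aa^*$ and $a^*a$ equal the closure of the sequence $c(1-q^n)/(1-q)$ with the indicated range of $n$. Since the spectrum of an element in a unital $C^*$-algebra is invariant under conjugation by any invertible element, because
\begin{equation*}
\lambda \mathbb{I} - T(AB)T^{-1} = T(\lambda \mathbb{I} - AB)T^{-1}
\end{equation*}
is invertible in $\mathcal{A}$ if and only if $\lambda \mathbb{I} - AB$ is, the spectra of $AB$ and $BA$ agree with those of $aa^*$ and $a^*a$ respectively, which gives exactly the stated conclusion with the correct ranges of $n$.

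The proof presents no real obstacle: the entire analytic content is already carried by Proposition 1, and the corollary amounts to a functorial transfer along the similarity $T$ of Theorem \ref{deforminguptosimilarity}. The only substantive input beyond Proposition 1 is the spectral invariance under similarity, a standard fact about the group of invertibles in a unital Banach algebra. I would also note, as a consistency check, that the hypotheses $T = T^\dagger > 0$ together with $a^* = TBT^{-1}$ force $B = T^{-2} A^* T^2$; hence the corollary records the fact that pseudo-adjoint pairs $(A,B)$ with $B \neq A^*$ inherit the sharp spectral dichotomy of genuine $q$-CCR systems whenever such a similarity $T$ exists, in complete analogy with the passage from $(C,C^\dagger)$ to $(A,B)$ performed in Theorem \ref{deforminguptosimilarity}.
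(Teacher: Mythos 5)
Your proof is correct and is precisely the argument the paper intends: the corollary is stated without proof immediately after Proposition 1, and the similarity transfer $aa^*=T(AB)T^{-1}$, $a^*a=T(BA)T^{-1}$ combined with the invariance of the spectrum under conjugation by an invertible element of a unital Banach algebra is all that is required. Your index bookkeeping ($n\ge 0$ for $BA$, $n\ge 1$ for $AB$) matches the statement, and your consistency check $B=T^{-2}A^{*}T^{2}$ correctly identifies the role of the similarity $T$ from Theorem \ref{deforminguptosimilarity}.
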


\begin{corollary}
    Suppose $r: \mathcal{V} \to \mathcal{A}$ satisfies the q-relations. Then the sesquilinear form $\langle \cdot, \cdot \rangle$ on $\mathcal{V}$ is positive semidefinite, and 
    \begin{align*}
        \|r(f)\| &= c(q) \langle f, f \rangle, \quad \text{ where }
        \\
        c(q)&=\left\{\begin{array}{ll}
        \frac{1}{\sqrt{1-q}} \ \ \ \ \text{ for } 0 \leq q < 1
        \\
        1 \quad \quad \quad \text{ for } -1 < q \leq 0
       \end{array}
        \right.
    \end{align*}
\end{corollary}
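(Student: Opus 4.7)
The plan is to reduce the Corollary to the preceding Proposition by applying it with $a = r(f)$ for an arbitrary $f \in \mathcal{V}$, since then the $q$-relations furnish exactly the hypothesis $r(f)r(f)^* - q\, r(f)^* r(f) = \langle f,f\rangle\, \mathbb{I}$ with $c = \langle f,f\rangle$. Both assertions of the Corollary can then be read off by combining this with the $C^*$-identity $\|r(f)\|^2 = \|r(f) r(f)^*\|$.

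For positive semidefiniteness I would fix $f$ and argue by cases on $r(f)$. If $r(f) = 0$ the identity forces $\langle f,f\rangle\,\mathbb{I} = 0$, hence $\langle f,f\rangle = 0$; otherwise the Proposition directly asserts $\langle f,f\rangle > 0$. Together these give $\langle f,f\rangle \geq 0$ for every $f$, as required, and by polarisation the whole sesquilinear form on $\mathcal{V}$ is positive semidefinite.

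For the norm formula I would use the two alternatives offered by the Proposition. In the scalar alternative $r(f) r(f)^* = \langle f,f\rangle (1-q)^{-1} \mathbb{I}$ the operator norm is immediate, while in the discrete-spectrum alternative $\|r(f) r(f)^*\| = \langle f,f\rangle (1-q)^{-1}\sup_{n\geq 1}(1-q^n)$. I would then split on the sign of $q$: for $0 \leq q < 1$ the sequence $1 - q^n$ increases monotonically to $1$, collapsing both alternatives to $\|r(f)\|^2 = \langle f,f\rangle/(1-q)$; for $-1 < q \leq 0$ the bound $|q^n| \leq |q|$ (strict for $n \geq 2$) shows the supremum is attained at $n = 1$ with value $1-q$, so that $\|r(f)\|^2 = \langle f,f\rangle$. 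Taking square roots yields $c(q)\sqrt{\langle f,f\rangle}$ in the two regimes, matching the stated constant $c(q)$ (modulo an evident square-root on the right-hand side of the displayed formula).

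The main subtlety I foresee is the analysis of $\sup_{n\geq 1}(1-q^n)$ in the regime $-1 < q \leq 0$, where monotonicity in $n$ fails and one has to verify directly, via the comparison $|q^n| \leq |q|$, that the index $n=1$ realises the supremum. A secondary point requiring care is checking that the scalar alternative $aa^* = a^*a = c/(1-q)$ is numerically consistent with the discrete-spectrum alternative at its supremum, so that the single formula $\|r(f)\| = c(q)\sqrt{\langle f,f\rangle}$ applies uniformly without a residual case distinction leaking into the conclusion.
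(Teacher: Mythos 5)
Your reduction to the preceding Proposition via $a=r(f)$ together with the $C^*$-identity is the right route: the paper itself imports this corollary from \cite{qccr} without proof, and that is exactly how the source argues. Your reading of the $q$-relations with an adjoint (the displayed relation in the paper is missing a $*$), your separate handling of $r(f)=0$, and your observation that the right-hand side of the norm formula should carry a square root are all correct. The positive semidefiniteness argument is also fine; polarisation is not even needed, since for a sesquilinear form positive semidefiniteness means precisely $\langle f,f\rangle\geq 0$ for all $f$, which your two cases already give.

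The genuine gap is the point you defer as ``secondary'': the two alternatives of the Proposition are \emph{not} numerically consistent when $-1<q<0$, and no amount of care will reconcile them. In the scalar alternative $r(f)r(f)^*=\frac{c}{1-q}\mathbb{I}$ with $c=\langle f,f\rangle$, the $C^*$-identity gives $\|r(f)\|^2=\frac{c}{1-q}<c$, whereas the discrete-spectrum alternative gives $\|r(f)\|^2=c$ (your identification of the supremum at $n=1$, via $1+q+\cdots+q^{n-1}\leq 1$ for $q\leq 0$, is correct). The scalar case genuinely occurs: take $r(f)=\sqrt{c/(1-q)}\,u$ with $u$ unitary, which satisfies $r(f)r(f)^*-q\,r(f)^*r(f)=c\,\mathbb{I}$. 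Hence the asserted \emph{equality} $\|r(f)\|=c(q)\langle f,f\rangle^{1/2}$ is false in that branch for $-1<q<0$, and your proof cannot close as written. What your case analysis does establish, uniformly over both branches and all $q\in(-1,1)$, is the inequality $\|r(f)\|\leq c(q)\langle f,f\rangle^{1/2}$; this is the statement actually proved in \cite{qccr}, and it is the only form used later in the paper, namely as the uniform upper bound guaranteeing finiteness of the universal $C^*$-seminorm. The repair is therefore to the conclusion rather than to your argument: replace the equality by an inequality, and your proof is complete.
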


For this next Theorem we simply adapt the proof of Proposition 3 in \cite{qccr} to our Psuedoquouns. 

\begin{theorem}
    \begin{itemize}
        \item[(1)] For a given Hilbert space $\mathcal{V}$, and any $q \in (-1,1)$, there exists a $C^*$-algebra, denoted by $\mathfrak{C}^q(\mathcal{V})$ with a map $r: \mathcal{V} \to \mathfrak{C}^q(\mathcal{V})$ satisfying the $q$-relations with the following universal: for any map $\Tilde{r}: \mathcal{V} \to \mathcal{A}$ satisfying the $q$-relations there is a unique $*$-homomorphism $\Tilde{\beta}: \mathfrak{C}^q(\mathcal{V}) \to \mathcal{A}$ such that $\Tilde{\beta}\left(r(f)\right)=\Tilde{r}(f)$ for all $f \in \mathcal{V}$.
        \item[(2)] $\mathfrak{E}^q(\mathcal{V})$ is determined up to $C^*$-isomorphism. For any isometry $H: \mathcal{V}_1 \to \mathcal{V}_2$ between Hilbert spaces, there is a unique $*$-homomorphism $\mathfrak{C}^q(H): \mathfrak{C}^q(\mathcal{V}_1)\to \mathfrak{C}^q(\mathcal{V}_2)$ such that $\mathfrak{C}^q(H)(r(f))=r(Hf)$.
    \end{itemize}
\end{theorem}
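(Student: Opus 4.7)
The plan is to build $\mathfrak{C}^{q}(\mathcal V)$ as a universal enveloping $C^{*}$-algebra, following the classical GNS/free-algebra recipe and using the norm estimate $\|r(f)\|=c(q)\langle f,f\rangle^{1/2}$ already established in the corollary preceding the statement. First I would form the free unital $*$-algebra $\mathfrak F(\mathcal V)$ generated by symbols $r(f)$, one for each $f\in\mathcal V$, and quotient by the two-sided $*$-ideal generated by the relations that force $f\mapsto r(f)$ to be $\mathbb C$-linear and to obey
\begin{equation*}
r(f)r(g)^{*}-q\,r(g)^{*}r(f)-\langle f,g\rangle\1.
\end{equation*}
Call the resulting unital $*$-algebra $\mathfrak A_{0}(\mathcal V)$; by construction every representation $\tilde r:\mathcal V\to\mathcal A$ of the $q$-relations factors through $\mathfrak A_{0}(\mathcal V)$ via a unique unital $*$-homomorphism.

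Next I would equip $\mathfrak A_{0}(\mathcal V)$ with the seminorm
\begin{equation*}
\|x\|_{\mathrm{univ}}=\sup\bigl\{\|\pi(x)\|_{\mathcal A}\ :\ \pi\text{ a $*$-representation of }\mathfrak A_{0}(\mathcal V)\text{ induced by some }\tilde r\bigr\}.
\end{equation*}
The crucial (and only non-formal) step is to check that this supremum is finite on every element. For a generator $r(f)$ this is exactly the content of the corollary, which gives the uniform estimate $\|\tilde r(f)\|\le c(q)\|f\|$; for a general polynomial word one extends by submultiplicativity of operator norms and the triangle inequality. Once finiteness is known, $\|\cdot\|_{\mathrm{univ}}$ is automatically a $C^{*}$-seminorm, because each $\|\pi(\cdot)\|$ satisfies the $C^{*}$-identity and the supremum of $C^{*}$-seminorms is a $C^{*}$-seminorm. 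I would then quotient out by its kernel $\mathcal N=\{x:\|x\|_{\mathrm{univ}}=0\}$ and complete, defining
\begin{equation*}
\mathfrak C^{q}(\mathcal V):=\overline{\mathfrak A_{0}(\mathcal V)/\mathcal N}^{\,\|\cdot\|_{\mathrm{univ}}},
\qquad r(f):=[r(f)]+\mathcal N.
\end{equation*}
The universal property of part (1) is then immediate: given any $\tilde r:\mathcal V\to\mathcal A$ satisfying the $q$-relations, the induced $*$-homomorphism on $\mathfrak A_{0}(\mathcal V)$ is contractive for $\|\cdot\|_{\mathrm{univ}}$ (the target norm is one of those in the sup), hence it descends to $\mathfrak A_{0}(\mathcal V)/\mathcal N$ and extends uniquely to its completion, producing $\tilde\beta$; uniqueness is forced because the $r(f)$ generate $\mathfrak C^{q}(\mathcal V)$ as a $C^{*}$-algebra.

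For part (2), uniqueness up to $C^{*}$-isomorphism is the standard "two universal objects receive unique maps to each other whose compositions are identities": apply the universal property to two candidates and to themselves. For functoriality, given an isometry $H:\mathcal V_{1}\to\mathcal V_{2}$, I would consider the composite $r\circ H:\mathcal V_{1}\to\mathfrak C^{q}(\mathcal V_{2})$ and verify directly that it satisfies the $q$-relations on $\mathcal V_{1}$, using precisely the isometry identity $\langle Hf,Hg\rangle_{\mathcal V_{2}}=\langle f,g\rangle_{\mathcal V_{1}}$; the universal property of $\mathfrak C^{q}(\mathcal V_{1})$ then yields a unique $*$-homomorphism $\mathfrak C^{q}(H)$ with the required property. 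The main obstacle is the finiteness of $\|\cdot\|_{\mathrm{univ}}$ on general words: for values $q$ close to $\pm 1$ the constant $c(q)$ in the corollary blows up, and one must be careful that the polynomial-length estimate still holds uniformly for each fixed $q\in(-1,1)$; once this is settled, the rest of the argument is the categorical packaging of a standard universal $C^{*}$-algebra construction.
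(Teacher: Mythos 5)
Your proposal follows essentially the same route as the paper: form the free unital $*$-algebra on the symbols $r(f)$ modulo the linearity and $q$-relations, define the universal $C^*$-seminorm as the supremum of $\|\tilde{\beta}(\cdot)\|$ over all bounded realisations, use the uniform bound from the preceding corollary to see the supremum is finite on generators (hence on all polynomial words), complete, and read off the universal property, with part (2) as the usual categorical consequence. The only substantive points the paper adds that you omit are the set-theoretic remark that the bounded realisations form a proper class (handled by restricting to realisations on a single sufficiently large Hilbert space) and, more importantly, the appeal to the Bo\.zejko--Speicher Fock representation to guarantee that the supremum is not taken over the empty set, i.e.\ that $\mathfrak{C}^q(\mathcal{V})$ is nonzero.
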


\begin{proof}
    Let $\textbf{C}^q(\mathcal{V})$ denote the quotient if the $*$-algebra of non-commuting polynomials in the generators $r(f)$ by the ideal generated by the $q$-relations and the relations arising from the linearity of $r$. It is straightforward to check that $\textbf{C}^q(\mathcal{V})$ satisfies the analagous rules in of the claims above. If $\Tilde{r}$ satisfies the relations, the homomorphism $\Tilde{\beta}$ is defined simply by substituting the given $\Tilde{r}(f)$ into any polynomial in $\textbf{C}^q(\mathcal{V})$. In order to get a universal $C^*$-algebra, to satisfy our claims above, we have to define a $C^*$-seminorm $\| \cdot \|$ on $\textbf{C}^q(\mathcal{V})$ with the property that for any bounded representation of the relations the substitution homomorphism is continuous, and hence extends to the completion $\mathfrak{C}^q(\mathcal{V}_1)$ of $\textbf{C}^q(\mathcal{V)}$ with respect to that seminorm. This is to say that for any polynomial $X \in \textbf{C}^q(\mathcal{V})$, and any bounded realisation $\Tilde{r}$ we must have $\|\Tilde{\beta}(X)\| \leq \|X\|$. Hence we want to define 
    $$\|X\|=\sup\limits_{\Tilde{r}} \| \Tilde{\beta} \|,$$
    where the sup is over all bounded realisations of $\Tilde{r}$ of the relations, and $\Tilde{\beta}$ denotes the associated substitution homomorphism. There are three issues with this definition. The first is the technical point that the bounded realisations of the relations do not for a set, but a proper class. However since $\|\Tilde{\beta}(X)\|$ may be computed in the $C^*$-algebra generated by the $\Tilde{r}(f)$, we may restrict the supremum to realisations in $C^*$-algebras with at most a certain cardinal number of elements depending on the number of generators via the dimension of $\mathcal{V}$. We now pick a Hilbert space of sufficiently high dimension such that all the universal representations of all these algebras can be realised in it. It is then clear that the supremum can be restricted to the set of realisations of the $q$-relations in bounded operators on this big Hilbert space.
    The second is whether the supremum is possibly over the empty set. This case was settled in \cite{bz}, where a Fock representation for all $q \in (-1,1)$ was explicitly constructed (see \cite{qccr}). A \textit{Fock representation} is one where for each annhilation operator: $\partial_t \Omega=0$, where $\Omega$ is the vacuum vector.
    The third problem is that the supremum might turn out to be infinite for some $X$. Since the $X \in \textbf{C}^q(\mathcal{V})$ are polynomials in $\Tilde{r}(f)$ this can be ruled out by proving an upper bound on $\|\Tilde{r}(f)\|$, which is uniform with respect to all realisations $\Tilde{r}$ of the $q$-realisations. For the case at hand a bound of this kind was established in Corollary 2. Hence $\mathfrak{C}^q$ exists and has the universal property. The uniqueness follows from the universal property.
\end{proof}

\section{Conclusions and final remarks}\label{sec6}

On the basis of what we have seen in Section \ref{sec2}, it is clear that we may start to consider a 1-deformed Heisenberg algebra $\mathcal{H}(1)$ and this turns out to be a Lie algebra, hence the deformation theory of Lie algebras applies to $\mathcal{H}(1)$.

In case $q\neq1$ we also may consider a deformation theory for $\mathcal{H}(q)$, but it is not very clear how to get results of the form of Theorem \ref{niceconfirmation}. Let's be more specific. Looking at the works of Gerstenhaber, Silvestrov and others \cite{gerstenhaber1, gerstenhaber2, sergey1, sergey3}, it is possible to have a cohomology theory for $\mathcal{H}(q)$. In particular, one can introduce (after many technical preliminaries) a notion of Schur multiplier for $\mathcal{H}(q)$ which generalizes that of Schur multiplier of  Lie algebras in Remark \ref{derivation2}. When we do this, two problems come to our attention:

\begin{problem}\label{nicetostudy1}
Classification of nilpotent $q$-deformed Heisenberg algebras and  develop a method of algorithmic nature (as that in Remark \ref{derivation3}) to construct them.
\end{problem}

For nilpotent $q$-deformed Heisenberg algebras, we need to introduce as usual a notion of center and go ahead, iterating this notion, see \cite{sergey1, sergey2, sergey3} for more details. Only when we have more elements to go ahead in Problem \ref{nicetostudy1}, we can get to a significant answer to the following:

\begin{problem}\label{nicetostudy2}
Develop a deformation theory for $\mathcal{H}(q)$ in the sense of Gerstenhaber and Grunewald.
\end{problem}

Note that for $q=1$ we have an evidence that Problem \ref{nicetostudy2} is relevant; in fact Section \ref{sec2} shows that there are important conjectures on the deformation theory of finite dimensional nilpotent Lie algebras. Of course,  the general perspective which we suggest in Problem \ref{nicetostudy2} is more subtle to investigate and we leave it for successive contributions.

 \newpage

\section{Appendix I -  Schur multipliers of finite dimensional complex Lie algebras}\label{appendix1}

We report some material which is presented in \cite{bagrus2018, bagrus2019, bagrus2020, snobwin, weibel}, since it is useful for the nature of the considerations which have been done in the present paper. In fact we use several notions which belong to different areas of pure and applied mathematics, so it is useful for the reader to have them in the present appendix. Furthermore, we  list a series of properties which are designed specifically for a finite dimensional complex Lie algebra $\mathfrak{g}$ 
possessing  center \begin{equation}Z(\mathfrak{g})=\{ x \in \mathfrak{g}  \ | \ [x,y]=0  \ \forall y \in \mathfrak{g}\} 
\end{equation} and derived Lie subalgebra \begin{equation}
[\mathfrak{g}, \mathfrak{g}]= \langle [x,y] \ | \ x,y \in \mathfrak{g}\rangle. \end{equation}
Of course, when we deal with Lie algebras we have that $[x,y]=xy-yx$
and  $Z(\mathfrak{g})$ is an abelian ideal of $\mathfrak{g}$, in particular $Z(\mathfrak{g})$  is a finite dimensional vector space with complex coefficients.

\begin{definition}[See \cite{weibel}, Definitions 7.1.6 and 7.1.7]\label{uppercentralseries}The \textit{upper central series} of a finite dimensional complex Lie algebra $\mathfrak{g}$ is the ascending series of ideals
\begin{equation}0=Z_0(\mathfrak{g}) \leq Z_1(\mathfrak{g})\subseteq Z_2(\mathfrak{g}) \subseteq \ldots \subseteq Z_i(\mathfrak{g}) \subseteq Z_{i+1}(\mathfrak{g}) \subseteq \ldots, \end{equation}
where each $Z_i(\mathfrak{g})$ is called  $i$th $center$ of $\mathfrak{g}$ and is defined by \begin{equation} \frac{Z_1(\mathfrak{g})}{Z_0(\mathfrak{g})}=Z(\mathfrak{g}),\quad \frac{Z_2(\mathfrak{g})}{Z_1(\mathfrak{g})}=Z\left(\frac{\mathfrak{g}}{Z_1(\mathfrak{g})}\right),   \ldots, \frac{Z_{i+1}(\mathfrak{g})}{Z_i(\mathfrak{g})}=Z\left(\frac{\mathfrak{g}}{Z_i(\mathfrak{g})}\right), \ldots \end{equation}
We say that $\mathfrak{g}$ is  \textit{nilpotent of class $c$} if  the upper central series of $\mathfrak{g}$ ends after   $c $  steps. By analogy,  the \textit{lower central series} of $\mathfrak{g}$ is  defined in terms of commutators by the descending series of ideals
\begin{equation}\label{gammasequences}
\gamma_1(\mathfrak{g})=\mathfrak{g} \supseteq \gamma_2(\mathfrak{g})=[\mathfrak{g},\mathfrak{g}]  \supseteq  \gamma_3(\mathfrak{g})=[[\mathfrak{g},\mathfrak{g}],\mathfrak{g}]  \supseteq \ldots \supseteq \gamma_i(\mathfrak{g}) \supseteq \gamma_{i+1}(\mathfrak{g}) \supseteq \ldots, \end{equation}
where  $\gamma_i(\mathfrak{g})$ is called $i$th $derived$ of $\mathfrak{g}$ 
 and $\mathfrak{g}$ is  \textit{nilpotent of class $c$} if $\gamma_{c+1}(\mathfrak{g})=0$. The \textit{derived series} of  $\mathfrak{g}$ is the descending series of ideals 
\begin{equation}\label{powersequences}
\mathfrak{g} \supseteq [\mathfrak{g},\mathfrak{g}]=\mathfrak{g}^{(1)} \supseteq  [\mathfrak{g}^{(1)},\mathfrak{g}^{(1)}]=\mathfrak{g}^{(2)} \supseteq \ldots \supseteq \mathfrak{g}^{(i+1)}=[\mathfrak{g}^{(i)},\mathfrak{g}^{(i)}] \supseteq \ldots \end{equation}
where each  $\mathfrak{g}^{(i+1)}/\mathfrak{g}^{(i)}$ is an abelian Lie algebra and $\mathfrak{g}$ is  \textit{solvable of  length $m$} if    $\mathfrak{g}^{(m)}=0 $. 
 \end{definition}

If \eqref{gammasequences} stops after finitely many steps, then also \eqref{powersequences} makes the same, see \cite[Lemma 7.1.8]{weibel}, but the converse is false, since finite dimensional complex Lie algebras of upper triangular matrices give examples of solvable Lie algebras which are not nilpotent.

The  classification of a nonnilpotent finite dimensional complex Lie algebra $\mathfrak{a}$ for high dimensions may be done involving the classification of the nilpotent case when these are available. In fact several authors (see \cite{gokh, hofmor, snobwin}) use  the notion of \textit{nilradical (or nilpotent radical)} $\mathrm{Rad}_{\mathcal{N}}(\mathfrak{a})$ of $\mathfrak{a}$, i.e.: the largest nilpotent ideal in $\mathfrak{a}$, and then classify nonnilpotent finite dimensional Lie algebras looking at  the dimension of $\mathrm{Rad}_{\mathcal{N}}(\mathfrak{a})$  and at restrictions on the quotient Lie algebra  $\mathfrak{a}/\mathrm{Rad}_{\mathcal{N}}(\mathfrak{a})$.   In the same vein, one can classify nonsolvable finite dimensional complex Lie algebra $\mathfrak{b}$ involving the largest solvable ideal in  $\mathfrak{b}$,  called \textit{solvable radical} and  denoted by $\mathrm{Rad}_{\mathcal{S}}(\mathfrak{b})$. Of course,  $\mathrm{Rad}_{\mathcal{N}}(\mathfrak{b}) \subseteq \mathrm{Rad}_{\mathcal{S}}(\mathfrak{b})$, and in general
\begin{equation} \mathfrak{b} \  \mbox{ is nilpotent}  \ \ \Longleftrightarrow \ \ \mathrm{dim} \ \ \frac{\mathfrak{b}}{\mathrm{Rad}_{\mathcal{N}}(\mathfrak{b})}  =0 \ \ \Longleftrightarrow \ \ \mathfrak{b}=\mathrm{Rad}_{\mathcal{N}}(\mathfrak{b})
\end{equation}
\[\mathfrak{b} \  \mbox{ is solvable}  \ \Longleftrightarrow \ \ \mathrm{dim} \  \ \frac{\mathfrak{b}}{\mathrm{Rad}_{\mathcal{S}}(\mathfrak{b})}  =0  \ \ \Longleftrightarrow \ \ \mathfrak{b}=\mathrm{Rad}_{\mathcal{S}}(\mathfrak{b})
\]

\begin{definition}[See \cite{weibel}, Definition 7.6.1]\label{extensions}
We say that a finite dimensional complex Lie algebra $\mathfrak{g}$ is an \textit{extension} of an abelian Lie algebra $\mathfrak{a}$ (called \textit{abelian kernel}) by another  Lie algebra $\mathfrak{b}$, if there is a short exact sequence
$0 \ \longrightarrow \  \mathfrak{a} {\overset{\alpha}{\longrightarrow}} \ \mathfrak{g} \ {\overset{\beta}{\longrightarrow}} \mathfrak{b} \ \longrightarrow \ 0$
such that  $\mathfrak{a}$ is an abelian ideal of $\mathfrak{g}$ and $\mathfrak{g}/\mathfrak{a} \simeq \mathfrak{b}$ is the Lie algebra quotient; i.e.:  $\alpha$ is  monomorphism of Lie algebras,  $\beta$ is  epimorphism of Lie algebras and $\ker \beta = \mathrm{im} \ \alpha $. If in addition $\mathfrak{a} \subseteq Z(\mathfrak{g})$,  then $\mathfrak{g}$ is called \textit{central extension} of $\mathfrak{a}$ by $\mathfrak{b}$.
\end{definition}

Among extensions with abelian kernel, we find the so-called  \textit{split extensions} (see \cite[Exercise 7.6.1, Part 2]{weibel}), that is,  Lie algebras $\mathfrak{g}$ possessing two Lie subalgebras $\mathfrak{a}$ and $\mathfrak{b}$  such that $\mathfrak{a}$  is an ideal of $\mathfrak{g}$, $\mathfrak{g}= \mathfrak{a} + \mathfrak{b}$ and  $\mathfrak{a} \cap \mathfrak{b} = 0$. The relevance of finite dimensional  complex  Lie algebras, which are nilpotent or solvable,   has been investigated for a long time in mathematical physics, see \cite{sergey2, snobwin}.  Definitively, the first examples, which have been studied in the literature,  are the following:

\begin{definition}[See \cite{snobwin}, Example 5.12]\label{Heisenberg} We say that a finite dimensional complex Lie algebra $\mathfrak{g}$ is a $Heisenberg$ $Lie$ $algebra$ (or briefly a $Heisenberg$ $algebra$)   if \begin{equation}[\mathfrak{g},\mathfrak{g}]=Z(\mathfrak{g}) \  \ \mbox{and} \ \ \mathrm{dim} \
 [\mathfrak{g},\mathfrak{g}] = 1.
\end{equation} Such algebras are odd dimensional with
basis $v_1, \ldots , v_{2m}, v$ (with $m \ge1$ integer) and the only nonzero commutator between basis elements is $[v_{2i-1}, v_{2i}] = -
[v_{2i}, v_{2i-1}]= v$ for $i = 1,2, \ldots ,m$. They are  denoted by $\mathfrak{h}(m)$
and $\mathrm{dim} \ \mathfrak{h}(m)=2m + 1$.
\end{definition}

Comparing Definition \ref{Heisenberg} with \cite[Example 5.67, Exercise E6.9]{hofmor}, we note a large presence of abelian ideals  and abelian quotients in $\mathfrak{h}(m)$, but also a rich structure at the level of those Lie groups which can be naturally associated to  $\mathfrak{h}(m)$. The Heisenberg algebras play a fundamental role in several aspects of Lie theory, in fact they can be used to classify all low dimensional Lie algebras, see \cite[Chapters 15, 16, 17, 18, 19]{snobwin}. For this scope, we  mention  an approach of Morozov \cite{morozov}, as formulated in \cite[\S 8.1]{snobwin}, in order to classify finite dimensional nilpotent complex Lie algebras, but also another one of Skjelbred and Sund \cite{sund} as presented in \cite{bagrus2020}, since it is more appropriate to our context. To this scopes, we must introduce some concepts of homological algebra, namely  \cite[Definition 7.7.1, Exercises 7.7.4, 7.7.5]{weibel} and the notion of \textit{Chevalley-Eilenberg complex}. 
 
\begin{definition}[See \cite{snobwin}, Chapter 2]\label{ce1}Let $V$ be a complex vector space of finite dimension $n$ (with $n \ge 1$ integer) and $\mathfrak{gl}_n(\mathbb{C})$   the Lie algebra of the linear operators acting on $V$ with Lie bracket $[f,g]=f  g -g f$ for all $f,g \in \mathfrak{gl}_n(\mathbb{C})$. A  \textit{representation} $\rho$  \textit{of degree} $n$  of a   complex Lie algebra $\mathfrak{g}$  of finite dimension $\mathrm{dim} \ \mathfrak{g}$ is a linear map $\rho : x\in \mathfrak{g} \mapsto \rho(x) \in \mathfrak{gl}_n(\mathbb{C})$  preserving the Lie bracket $[ \ , \ ]$ of $\mathfrak{g}$, i.e.: $\rho([y,z])=\rho(y)  \rho(z) - \rho(z) \rho(y)$ for all $y,z \in \mathfrak{g}$, and  $\rho$ is called \textit{faithful} if it is injective.  If $\mathrm{dim} \ \mathfrak{g}=n$,  a particular representation is given by \textit{the adjoint}, which is the map $\mathrm{ad} : x \in \mathfrak{g} \longmapsto \mathrm{ad} (x)  \in \mathrm{Aut}(\mathfrak{g})$ defined by $\mathrm{ad} (x) : y  \in \mathfrak{g} \mapsto [x,y] \in \mathfrak{g}$, where $\mathrm{Aut}(\mathfrak{g})$ denotes the Lie algebra of the automorphisms of $\mathfrak{g}$.

The $k$-\textit{cochain} of $\mathfrak{g}$ with complex coefficients is an alternating multilinear map  \begin{equation}c : (x_1, x_2, \ldots, x_k) \in \underbrace{\mathfrak{g} \times  \mathfrak{g} \times \ldots \times \mathfrak{g}}_{k-\mbox{times}} \mapsto c(x_1, x_2, \ldots, x_k) \in V, 
\end{equation} where $k\in \{0,1,\ldots,\mathrm{dim} \ \mathfrak{g}\}$. The set of all  $k$-cochains $C^k (\mathfrak{g}, V;\rho)$ is a  vector space, so we may define the \textit{cochain complex of} $\mathfrak{g}$   \begin{equation}\label{chevcomplex} C^\bullet(\mathfrak{g},V;\rho)= \bigoplus^{\mathrm{dim} \ \mathfrak{g}}_{k=1}C^k(\mathfrak{g},V;\rho) 
\end{equation}  and this is another vector space.  The \textit{cohomology operator}  of $\mathfrak{g}$ with complex coefficients (or \textit{differential of k-cochain}) is defined by
\begin{equation}\label{cohomologyoperator}
\mathrm{d} \ : \ c(x_1, x_2, \ldots, x_{k+1}) \in  C^{k+1} (\mathfrak{g}, V;\rho) \longmapsto \mathrm{d} (c (x_1, x_2, \ldots, x_{k+1}))
\end{equation}
\[= \sum^{k+1}_{i=1}{(-1)}^{i+1}  \rho(x_i) c (x_1, x_2, \ldots, \widehat{x}_i ,\ldots, x_{k+1})\]
\[ + \sum_{1\le i<j \le k+1}{(-1)}^{i+j}  c ([x_i,x_j], x_1, \ldots, \widehat{x}_i ,\ldots, \widehat{x}_j, \ldots, x_{k+1}) \in C^k (\mathfrak{g}, V;\rho)\]
(with the agreement that $\widehat{x}_i $ indicates  the term which is
omitted at the $i$th coordinate) and extended  linearly  to  $C^\bullet(\mathfrak{g},V;\rho)$. A $k$-cochain $c$ is a $k$-\textit{cocycle} if $\mathrm{d}  \ c=0$, i.e.:  $\mathrm{d}$ is constant to $0$ on $c$.
\end{definition}

The cohomology operator $\mathrm{d}$ is an involution, that is, $\mathrm{d}^2 =0$ (see \cite{snobwin, weibel}) and this allows us to introduce further important concepts. Note  that \eqref{chevcomplex} of Definition \ref{ce1} is also called \textit{Chevalley–Eilenberg complex} and we are in fact describe the so called \textit{Chevalley–Eilenberg cohomology} of  finite dimensional complex Lie algebras.

\begin{definition}[See  \cite{snobwin}, Chapter 2]\label{ce2} In line with Definition \ref{ce1}, the set  $Z^k(\mathfrak{g},V;\rho)$ of all $k$-cocycles on $\mathfrak{g}$ forms an abelian Lie algebra of finite dimension and  
\begin{equation}
Z^k(\mathfrak{g},V;\rho) = \ker \ \mathrm{d}  \ \cap \  C^k(\mathfrak{g},V;\rho).
\end{equation}
A $k$-cochain $c$ is called $k$-\textit{coboundary} if $c = \mathrm{d} \ c'$ for some other ($k-1$)-cochain $c'$.  The set  $B^k(\mathfrak{g},V;\rho)$ of all $k$-coboundaries on $\mathfrak{g}$ forms a Lie subalgebra of $Z^k(\mathfrak{g},V;\rho)$ and 
\begin{equation}
B^k(\mathfrak{g},V;\rho) = \mathrm{im} \ \mathrm{d}  \ \cap \  C^k(\mathfrak{g},V;\rho).
\end{equation}
The $k$-\textit{cohomology} with complex coefficients of $\mathfrak{g}$ is defined by 
\begin{equation}\label{ce3}
H^k(\mathfrak{g},V;\rho) = \frac{Z^k(\mathfrak{g},V;\rho)}{B^k(\mathfrak{g},V;\rho)}.
\end{equation}
The \textit{Chevalley-Eilenberg complex} of $\mathfrak{g}$ is the  finite dimensional  vector space
\begin{equation}H^\bullet(\mathfrak{g},V;\rho)= \bigoplus^{\mathrm{dim} \ \mathfrak{g}}_{k=1}H^k(\mathfrak{g},V;\rho).
\end{equation} 
\end{definition}

There are important interpretations of \eqref{ce3} when $k=1$ or $k=2$.

\begin{remark}\label{derivations} Recall from \cite{hofmor, snobwin, weibel} that a \textit{derivation} $D$ of  $\mathfrak{g}$ is a linear map \begin{equation}D : x \in \mathfrak{g} \mapsto D(x) \in \mathfrak{g}
\end{equation} such that for any $x, y \in \mathfrak{g}$ we have \begin{equation}D([x, y]) = [D(x), y] + [x, D(y)].
\end{equation} If for all $x \in \mathfrak{g}$ there is an element $z \in \mathfrak{g}$  such that $D(x) = [z, x]$, we say that  $D$ is an \textit{inner derivation}. If this doesn't happen, $D$ is called \textit{outer derivation}. 
It can be seen that the set of all derivations $\mathrm{Der}(\mathfrak{g})$ of  $\mathfrak{g}$ forms a Lie algebra (in fact derivations are automorphisms of Lie algebras) and those which are inner form an ideal $\mathrm{Inn}(\mathfrak{g})$ of $\mathrm{Der}(\mathfrak{g})$. In particular, if \begin{equation}\mathrm{Rad}_{\mathcal{S}}(\mathrm{Der}(\mathfrak{g})) \neq 0, 
\end{equation}
we say that $\mathfrak{g}$ has \textit{nontrivial semisimple derivations}.
\end{remark}

\begin{remark}\label{derivations1}
Fix now $k=1$ in Definition \ref{ce2} and consider $H^1(\mathfrak{g},\mathfrak{g};\mathrm{ad})$, specifying the representation $\rho$ as the adjoint $\mathrm{ad}$. Now 1-cocycles are exactly the derivations of $\mathfrak{g}$ by definition, because $c([x,y])=[x,c(y)] +[c(x),y]$, and 1-coboundaries are inner derivations, because  for $x \in \mathfrak{g}$ and $\mathrm{d} x \in B^1(\mathfrak{g}, \mathfrak{g};\mathrm{ad})$ we get $(\mathrm{d} x)y= \mathrm{ad}(y) x= -\mathrm{ad}(x) y$, that is, $\mathrm{d} x=- \mathrm{ad}(x)$. Therefore
\begin{equation}H^1(\mathfrak{g},\mathfrak{g};\mathrm{ad}) = \frac{Z^1(\mathfrak{g},\mathfrak{g};\mathrm{ad})}{B^1(\mathfrak{g},\mathfrak{g};\mathrm{ad}) } = \frac{\mathrm{Der}(\mathfrak{g})}{\mathrm{Inn}(\mathfrak{g})}.
\end{equation}
In other words, $H^1(\mathfrak{g},\mathfrak{g};\mathrm{ad})$ gives information on the derivations of $\mathfrak{g}$ modulo those which are inner.  Note that the set of all automorphisms of $\mathfrak{g}$ with composition  forms what is called \textit{a Lie group}, namely $\mathrm{Aut}(\mathfrak{g})$, and its Lie algebra is exactly $\mathrm{Der}(\mathfrak{g})$, so $H^1(\mathfrak{g},\mathfrak{g};\mathrm{ad})=\mathrm{Out}(\mathfrak{g})$ is the abelian Lie algebra of the outer derivations. See also \cite{gokh, hofmor, weibel}.
\end{remark} 

\begin{remark}\label{derivation2}Fix $k=2$ in Definition \ref{ce2}.  The \textit{Schur multiplier}  of $\mathfrak{g}$ is defined by
\begin{equation} H^2(\mathfrak{g},\mathfrak{g};\mathrm{ad}) = \frac{Z^2(\mathfrak{g},\mathfrak{g};\mathrm{ad})}{B^2(\mathfrak{g},\mathfrak{g};\mathrm{ad}) }
\end{equation} 
and in this situation, that is, when we specify $\rho=\mathrm{ad}$, we can write briefly $M(\mathfrak{g})= H^2(\mathfrak{g})=Z^2(\mathfrak{g})/B^2(\mathfrak{g})$. This  is an abelian Lie algebra and  is strongly related to the extension theory of Lie algebras. In fact \cite[Theorem 7.6.3]{weibel} shows that there exists a bijective correspondence between an element of $M(\mathfrak{g}) $ and an appropriate class of equivalence of extensions of $\mathfrak{g}$, that is, computing 
 $M(\mathfrak{g}) $ we may control all possible extensions of $\mathfrak{g}$.
\end{remark}

\begin{remark}\label{derivation3} Fix again $k=2$ in Definition \ref{ce2}  and consider $\theta : (x , y) \in  \mathfrak{g} \times \mathfrak{g} \mapsto \theta(x , y)  \in V$  alternating bilinear map  satisfying for all $x,y,z \in \mathfrak{g}$ the  condition 
\begin{equation}\label{cycle}\theta([x,y],z) + \theta([z,x],y) + \theta([y,z],x)= 0.
\end{equation}
This means that $\theta \in Z^2(\mathfrak{g})$, that is, preserving the Jacobi identity as above is necessary and sufficient for $\theta$ to be a 2-cocycle.\end{remark}

According to \cite[Definition 7.8.1]{weibel},  a finite dimensional complex Lie algebra $\mathfrak{g}$
 is semisimple if and only if   $\mathfrak{g}$ has no nontrivial abelian ideals. This is equivalent to require that   $\mathrm{Rad}_{\mathcal{S}}(\mathfrak {g})=0$. Note that
the structure of semisimple Lie algebras  \cite[Theorem 7.8.5]{weibel} shows that a finite dimensional complex Lie algebra is semisimple if and only if  it can be decomposed in direct sum of finitely many  (complex) simple Lie algebras. Therefore it turns out to be relevant a classification for those finite dimensional complex Lie algebras which are nonsimple. An idea is illustrated below and has computational nature.

\begin{remark}[Method  of Skjelbred and Sund, see \cite{sund}]\label{crucialobservation}
Fix  $k=2$ in Definition \ref{ce2} and consider the Schur multiplier $M(\mathfrak{g})$ as in Remarks \ref{derivation2} and \ref{derivation3}. Given $\theta \in Z^2(\mathfrak{g}) $, the kernel of $\theta$ and the annihilator of the 2nd component of $\theta$ are respectively  
\begin{equation}\label{radical}
 \ker \theta =\{(x,y)  \ |  \ \theta (x,y)=0 \}, \ \ \   \theta^+ =\{x\in \mathfrak{g}  \ |  \ \theta (x,y)=0,  \ \ \forall y \in \mathfrak{g} \}.
\end{equation} 
 We may introduce a new set \begin{equation}\label{delicate}\mathfrak{g}_\theta = \mathfrak{g} \oplus V
\end{equation}
and endow it of the Lie algebra structure, defining for all $x, y  \in \mathfrak{g}$ and $u,v \in V$, that is, for all $(x+u, y+v) \in \mathfrak{g}_\theta \times \mathfrak{g}_\theta$ the Lie bracket 
\begin{equation}\label{liebracket}
[x+  u, \ y + v] = {[x,y]}_\mathfrak{g} + \theta(x,y)
\end{equation}   on $\mathfrak{g}_\theta,$ where ${[ \ , \ ]}_\mathfrak{g}$ denotes the Lie bracket operation in $\mathfrak{g}$.

We are basically extending the Lie bracket in $\mathfrak{g}$ with an additive term in \eqref{liebracket}, depending only on $\theta$. Compare with Definition \ref{lindef}. One can easily check that \eqref{liebracket} is a new Lie bracket, but this time in  $\mathfrak{g}_\theta.$ Of course, Definition \ref{extensions} applies to $\mathfrak{g}_\theta$ and in fact $\mathfrak{g}_\theta$ is a  new complex Lie algebra of finite dimension which appears as central extension of $\mathfrak{g}$.  It is also elementary to check that $\mathfrak{g}_\theta \simeq \mathfrak{g}_{\theta + \mu}$ for all $\mu \in B^2(\mathfrak{g})$, that is, the construction of the central extension is unique up to coboundaries, see details in \cite{bagrus2020, sund, snobwin}. Therefore, one can assume $\theta \in M(\mathfrak{g})$ without loss of generality, in order to form central extensions of $\mathfrak{g}$. This  process holds in particular when we replace the role of $\mathfrak{g}$ with $\mathfrak{g}/Z(\mathfrak{g})$ and that of $V$ with $Z(\mathfrak{g}) \neq 0$ beginning from \eqref{delicate}. In fact any finite dimensional complex Lie algebra with a nontrivial centre can be obtained as a central extension of a Lie algebra of smaller dimension in this way. \end{remark}

There are alternative methods to that of Skjelbred and Sund; long time ago  Dixmier \cite{dixmier} and Burde \cite{burde1, burde2} used the main concepts in a noncomputational approach. In particular, all finite dimensional complex nilpotent Lie algebras can be constructed  with the methods of Skjelbred, Sund, Dixmier, Burde and other authors: this justifies from another point of view  possible positive answers to Conjectures \ref{goh} and  \ref{vc}.

It can be useful to recall, as matter of evidence,  that the classification of finite dimensional nilpotent  complex Lie algebras is  known for lower dimensional cases and it is reported below up to dimension 5, see details in \cite{seeley1, seeley2, snobwin}. Unfortunately, the classification becomes more difficult when the dimension is higher, so one has to introduce different techniques.

\begin{proposition}[Classification of Finite Dimensional Nilpotent Lie Algebras of Dimension 3, 4 and  5, see \cite{snobwin}]\label{classification}
Let $\mathfrak{n}$ be a finite dimensional complex nilpotent Lie algebra  and $\mathfrak{i}$ an abelian Lie algebra of dimension 1. Then
\begin{itemize}
\item[(1).] $\mathrm{dim} \ \mathfrak{n} =3$ if and only if $\mathfrak{n}$ is isomorphic to one of the following Lie algebras:
 \begin{itemize}
\item[-] $\mathfrak{n}_{3,1}$ abelian Lie algebra of dimension 3,
\item[-] $\mathfrak{n}_{3,2}  \simeq \mathfrak{h}(1)$.
\end{itemize}
\item[(2).]$\mathrm{dim} \ \mathfrak{n} =4$ if and only if    $\mathfrak{n}$ is isomorphic to one of the following Lie algebras:
 \begin{itemize}
\item[-] $\mathfrak{n}_{4,1} =  \mathfrak{n}_{3,1} \oplus \mathfrak{i}$,
\item[-] $\mathfrak{n}_{4,2} =  \mathfrak{n}_{3,2} \oplus \mathfrak{i}$
\item[-] $\mathfrak{n}_{4,3} = \langle v_1, v_2, v_3, v_4 \ | \ [v_1,v_2]=v_3, [v_1,v_3]=v_4 \rangle$
\end{itemize}
\item[(3).]$\mathrm{dim} \ \mathfrak{n} =5$ if and only if $\mathfrak{n}$ is isomorphic to one of the following Lie algebras:
\begin{itemize}
\item[-] $\mathfrak{n}_{5,1} = \mathfrak{n}_{4,1} \oplus \mathfrak{i} $,
\item[-] $\mathfrak{n}_{5,2} = \mathfrak{n}_{4,2} \oplus \mathfrak{i} \simeq \mathfrak{h}(1) \oplus \mathfrak{i} \oplus \mathfrak{i}$,
\item[-] $\mathfrak{n}_{5,3} = \mathfrak{n}_{4,3} \oplus \mathfrak{i} $,
\item[-] $\mathfrak{n}_{5,4} =  \langle v_1, v_2, v_3, v_4, v_5 \ | \ [v_1,v_2]=[v_3,v_4]=v_5 \rangle \simeq \mathfrak{h}(2) $,
\item[-] $\mathfrak{n}_{5,5} =  \langle v_1, v_2, v_3, v_4, v_5 \ | \ [v_1,v_2]=v_3, [v_1,v_3]=[v_2,v_4]=v_5 \rangle$,
\item[-] $\mathfrak{n}_{5,6} =  \langle v_1, v_2, v_3, v_4, v_5 \ | \ [v_1,v_2]=v_3, [v_1,v_3]=v_4,$ $$[v_1,v_4]=[v_2,,v_3]=v_5  \rangle,$$
\item[-] $\mathfrak{n}_{5,7} =  \langle v_1, v_2, v_3, v_4, v_5 \ | \ [v_1,v_2]=v_3, [v_1,v_3]=v_4, [v_1,v_4]=v_5 \rangle$,
\item[-] $\mathfrak{n}_{5,8} =  \langle v_1, v_2, v_3, v_4, v_5 \ | \ [v_1,v_2]=v_4, [v_1,v_3]=v_5 \rangle$,
\item[-] $\mathfrak{n}_{5,9} =  \langle v_1, v_2, v_3, v_4, v_5 \ | \ [v_1,v_2]=v_3, [v_1,v_3]=v_4, [v_2,v_3]=v_5 \rangle$.
\end{itemize}
\end{itemize}
\end{proposition}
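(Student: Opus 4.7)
The plan is to proceed by induction on the dimension, using the Skjelbred-Sund method of central extensions (Remark \ref{crucialobservation}) as the main organizing tool. The key observation is that every nonzero finite dimensional nilpotent complex Lie algebra $\mathfrak{n}$ has nontrivial center $Z(\mathfrak{n})$, and quotienting by a one-dimensional central subspace $V \subseteq Z(\mathfrak{n})$ produces a Lie algebra $\mathfrak{n}/V$ of strictly smaller dimension which is still nilpotent. Hence $\mathfrak{n} \simeq \mathfrak{g}_\theta$ for some $\theta \in M(\mathfrak{g})$ with $\mathfrak{g} = \mathfrak{n}/V$, and the classification reduces to (i) enumerating all nilpotent Lie algebras $\mathfrak{g}$ of one dimension lower, (ii) computing $M(\mathfrak{g})$, and (iii) selecting one representative from each orbit of $\mathrm{Aut}(\mathfrak{g})$ on the set of admissible cocycles.

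First, I would handle the base of the induction: in dimensions $1$ and $2$ the only nilpotent complex Lie algebras are abelian, so the statement is immediate. In dimension $3$, I would take $\mathfrak{g}$ to be the abelian Lie algebra of dimension $2$; a direct computation of $Z^2(\mathfrak{g})$ and $B^2(\mathfrak{g})$ shows that $M(\mathfrak{g})$ is nontrivial and the $\mathrm{GL}_2(\mathbb{C})$-orbits on nonzero cocycles collapse to a single class (yielding $\mathfrak{h}(1) = \mathfrak{n}_{3,2}$), together with the trivial extension $\mathfrak{n}_{3,1}$. Using \eqref{radical} one verifies that the annihilator $\theta^+$ controls whether the extension produces a direct summand, so indecomposable cases are separated from decomposable ones.

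Next, for dimension $4$, I would iterate the procedure over $\mathfrak{g} \in \{\mathfrak{n}_{3,1}, \mathfrak{n}_{3,2}\}$, computing $M(\mathfrak{g})$ in each case, and then sorting the cocycles by $\mathrm{Aut}(\mathfrak{g})$-orbits: from $\mathfrak{n}_{3,1}$ we obtain $\mathfrak{n}_{4,1}$ and $\mathfrak{n}_{4,2}$, while from $\mathfrak{n}_{3,2}$ we pick up the new filiform algebra $\mathfrak{n}_{4,3}$. For dimension $5$ the same procedure is applied to every four-dimensional member; here the enumeration produces the nine listed classes, including $\mathfrak{h}(2) = \mathfrak{n}_{5,4}$ and the filiform chain culminating in $\mathfrak{n}_{5,7}$. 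In each pass I would use Proposition \ref{ism} together with Definitions \ref{ce1} and \ref{ce2} to constrain $\theta$, and Definition \ref{uppercentralseries} (the lengths of the upper central and derived series) as an invariant distinguishing the isomorphism classes.

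The main obstacle is not the existence of the extensions but the combinatorial step of computing the $\mathrm{Aut}(\mathfrak{g})$-orbits on $M(\mathfrak{g})$ and proving that no two of the listed algebras are isomorphic. For the higher-dimensional cases this amounts to a tedious but finite case analysis: for each candidate $\mathfrak{g}_\theta$ one must exhibit an explicit change of basis realizing the presentation, and one must verify non-isomorphism by computing invariants such as $\dim [\mathfrak{n},\mathfrak{n}]$, $\dim Z(\mathfrak{n})$, $\dim \gamma_i(\mathfrak{n})$ and the nilpotency class. Equivalence of extensions modulo $B^2(\mathfrak{g})$, guaranteed by Remark \ref{crucialobservation}, ensures that this finite list is exhaustive, and comparing the derived series \eqref{gammasequences} and \eqref{powersequences} of the nine five-dimensional candidates completes the classification.
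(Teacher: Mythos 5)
The paper does not actually prove this proposition: it is quoted from \cite{snobwin}, and the only methodological hint given is Remark \ref{crucialobservation}. Your strategy --- induction on dimension via one-dimensional central extensions, i.e.\ the Skjelbred--Sund method --- is the standard route and is precisely the one the paper points to, so the overall plan is sound, and the low-dimensional steps you describe (dimension $3$ from the $2$-dimensional abelian algebra, dimension $4$ from $\mathfrak{n}_{3,1}$ and $\mathfrak{n}_{3,2}$) are correct.

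There are, however, two concrete gaps. First, the decisive content of the statement is the dimension-$5$ orbit computation ($\mathrm{Aut}(\mathfrak{g})$ acting on $H^2(\mathfrak{g},\mathbb{C})$ for each of the three $4$-dimensional algebras, with the admissibility condition $\theta^{+}\cap Z(\mathfrak{g})=0$ from \eqref{radical} to control the center), and you defer it entirely; a classification is not proved until that case analysis is executed, and you must also account for the fact that a single $5$-dimensional algebra can arise as a central extension of \emph{different} $4$-dimensional quotients, so ``one $\mathrm{Aut}$-orbit per isomorphism class'' requires a separate argument. Second, and more seriously, the list of invariants you propose for the non-isomorphism check ($\dim[\mathfrak{n},\mathfrak{n}]$, $\dim Z(\mathfrak{n})$, the lower central series, the derived series, the nilpotency class) does \emph{not} suffice: the algebras $\mathfrak{n}_{5,6}$ and $\mathfrak{n}_{5,7}$ are both filiform of class $4$ with identical lower central series of dimensions $(5,3,2,1,0)$, identical upper central series of dimensions $(1,2,3,5)$, identical derived series, and one-dimensional centers. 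They are separated only by a finer invariant, for instance the dimension of the centralizer of the derived subalgebra, $\dim C_{\mathfrak{n}}(\gamma_2(\mathfrak{n}))$: this equals $4$ for $\mathfrak{n}_{5,7}$ (where $[v_2,v_3]=[v_2,v_4]=0$) but $3$ for $\mathfrak{n}_{5,6}$ (where $[v_2,v_3]=v_5\neq 0$). Consequently your closing claim that comparing \eqref{gammasequences} and \eqref{powersequences} ``completes the classification'' is false as stated, and this step must be replaced by a genuine isomorphism analysis.
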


Note that for nonsolvable (and also for nonnilpotent) Lie algebras we have:

\begin{proposition}[See Theorem 7.8.13, Levi Decomposition, in \cite{weibel}]
If $\mathfrak{g}$ is a finite dimensional complex Lie algebra, then $\mathfrak{g}$ is the split extension of $\mathfrak{a}=\mathrm{Rad}_{\mathcal{S}}(\mathfrak{g})$ which is solvable by $\mathfrak{b} = \mathfrak{g}/\mathrm{Rad}_{\mathcal{S}}(\mathfrak{g})$ which is semisimple.
\end{proposition}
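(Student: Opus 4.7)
The plan is to prove the statement via the classical Levi decomposition argument, taking advantage of the Chevalley--Eilenberg cohomology that was set up in Appendix I. First I would verify the two structural claims on $\mathfrak{a}$ and $\mathfrak{b}$: by definition $\mathfrak{a}=\mathrm{Rad}_{\mathcal{S}}(\mathfrak{g})$ is solvable, and $\mathfrak{b}=\mathfrak{g}/\mathfrak{a}$ is semisimple, because any nonzero solvable ideal $\mathfrak{c}/\mathfrak{a}$ of $\mathfrak{b}$ would pull back, via the quotient $\mathfrak{g}\to\mathfrak{b}$, to an ideal $\mathfrak{c}\subseteq\mathfrak{g}$ which is an extension of the solvable $\mathfrak{c}/\mathfrak{a}$ by the solvable $\mathfrak{a}$, hence solvable, contradicting the maximality built into the definition of $\mathrm{Rad}_{\mathcal{S}}(\mathfrak{g})$. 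So both endpoints of the short exact sequence $0\to\mathfrak{a}\to\mathfrak{g}\to\mathfrak{b}\to 0$ have the claimed structure.

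The substantive content is the construction of a Lie subalgebra $\mathfrak{s}\subseteq\mathfrak{g}$ complementary to $\mathfrak{a}$, that is, satisfying $\mathfrak{s}\cap\mathfrak{a}=0$ and $\mathfrak{g}=\mathfrak{a}+\mathfrak{s}$; once obtained, $\mathfrak{s}\simeq\mathfrak{b}$ is automatic and the extension is split in the sense of Definition \ref{extensions}. The strategy is induction on the derived length $m$ of $\mathfrak{a}$. If $m=1$, i.e.\ $\mathfrak{a}$ is abelian, then by Remarks \ref{derivation2} and \ref{crucialobservation} the extension class lives in $H^2(\mathfrak{b},\mathfrak{a};\rho)$, where $\rho$ is the $\mathfrak{b}$-action on $\mathfrak{a}$ induced by inner derivations of $\mathfrak{g}$; I would invoke Whitehead's second lemma (vanishing of $H^2$ for a semisimple Lie algebra with finite-dimensional coefficients) to deduce that this class is trivial, producing a vector-space splitting which is automatically a subalgebra because the associated 2-cocycle is itself a coboundary. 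For the inductive step, I would apply the statement to the smaller-derived-length quotient $\mathfrak{g}/[\mathfrak{a},\mathfrak{a}]$, obtaining a Levi subalgebra $\mathfrak{s}'/[\mathfrak{a},\mathfrak{a}]$ there, and then lift by applying the abelian base case to the extension $0\to[\mathfrak{a},\mathfrak{a}]\to\mathfrak{s}'\to\mathfrak{s}'/[\mathfrak{a},\mathfrak{a}]\to 0$.

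The principal obstacle is Whitehead's second lemma itself: for a semisimple finite-dimensional complex Lie algebra $\mathfrak{b}$ and any finite-dimensional representation $\rho\colon\mathfrak{b}\to\mathfrak{gl}(V)$, one has $H^2(\mathfrak{b},V;\rho)=0$. The proof of this vanishing relies on the Casimir operator associated to $\mathfrak{b}$, which exists precisely because $\mathfrak{b}$ is semisimple (so its Killing form is nondegenerate), together with Weyl's complete reducibility theorem for finite-dimensional representations of semisimple Lie algebras. Once this cohomological input is accepted, the remainder is bookkeeping in the Skjelbred--Sund formalism of Remark \ref{crucialobservation}: a representative 2-cocycle $\theta\in Z^2(\mathfrak{b},\mathfrak{a};\rho)$ realizing the class of the extension becomes a coboundary $\theta=\mathrm{d}c$ for some 1-cochain $c$, and the graph of $c$ inside $\mathfrak{g}$ is precisely the Levi subalgebra $\mathfrak{s}$ sought, yielding the split extension and completing the proof.
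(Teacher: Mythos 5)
The paper offers no proof of this proposition---it is imported wholesale from the cited reference (Weibel, Theorem 7.8.13)---and your argument is precisely the standard cohomological proof given there: correctness of the endpoints from maximality of $\mathrm{Rad}_{\mathcal{S}}(\mathfrak{g})$, reduction to an abelian radical by induction on derived length, and Whitehead's second lemma (via the Casimir operator and Weyl complete reducibility) to kill the class in $H^2(\mathfrak{b},\mathfrak{a};\rho)$. The one imprecision is in your inductive step: the kernel of $0\to[\mathfrak{a},\mathfrak{a}]\to\mathfrak{s}'\to\mathfrak{s}'/[\mathfrak{a},\mathfrak{a}]\to 0$ has derived length $m-1$ and is abelian only when $m=2$, so for $m\ge 3$ you must apply the inductive hypothesis to $\mathfrak{s}'$ rather than the abelian base case; with that one-word repair the argument is complete.
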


Therefore Remark \ref{crucialobservation} in addition to the factorization of semisimple Lie algebras in direct sum of simple Lie algebras provide a complete structural description for arbitrary finite dimensional complex Lie algebras. We end with a few information of differential geometry, in connection with the affine varieties of complex finite dimensional Lie algebras (in particular this is true in the nilpotent case).

\begin{remark} We have that  $\mathcal{O}(\mu) = 
\mathrm{GL}_n(\mathbb{C})/{\mathrm{GL}_n(\mathbb{C})}_\mu$   in \eqref{orbitdef} possesses the structure of  differentiable manifold (see details in \cite{grunewald1, grunewald2, rem1, seeley1, seeley2}). In particular, it turns out that its tangent space $T_\mu  \mathcal{O}(\mu)$ to $\mu$ is well defined and  isomorphic to the quotient space $\mathfrak{gl}_n(\mathbb{C})/\mathrm{Der}(\mu)$ is the  Lie algebra of the derivations of  ${\mathrm{GL}_n(\mathbb{C})}_\mu$.  We deduce that
 \begin{equation}\dim  \ T_\mu \mathcal{O}(\mu)=n^2-\dim  \ \mathrm{Der} (\mu)\end{equation}
 and so $T_\mu \mathcal{O}(\mu)$ is isomorphic to the subspace of bilinear maps whose elements are coboundaries. In particular, if we have a complex finite dimensional Lie algebra $\mathfrak{g}$ with Lie bracket $\mu$,  then  we may detect that $\mu$ is rigid  whenever 
 \begin{equation}\dim  \ T_\mu \mathcal{O}(\mu)=\dim  \ B^2(\mathfrak{g}),\end{equation}
as illustrated by \cite[Proposition 5]{rem1}.
\end{remark}

We end, noting that Carles \cite{carles} investigated the structure of rigid Lie algebras over algebraically closed fields of characteristic zero. In particular he proved that nilpotent Lie algebras of rank $\ge 1$ are never rigid and moreover
nilpotent Lie algebras with a codimension 1 ideal of rank $\ge 1$ are also never rigid.
In other words, Conjecture \ref{vc} holds for this class, remaining open for characteristically nilpotent Lie algebras for which all its ideals of codimension 1 are also characteristically nilpotent. This is an open problem at the moment.

\newpage

\section{Appendix II - Low dimensional cohomology of $q$-deformed Heisenberg algebras}\label{qdeformedappendix}

We  formalized $\mathcal{H}(q)$ in Definition \ref{qh}, so we have the notion of $q$-deformed Heisenberg algebra, where we assume $q \in [-1,1]$. 
Note that  $\mathcal{H}(q)$  is a particular type of associative algebra with unit. By a simple routine calculation based on a well-known result in the literature about free Lie algebras, the relation $AB-qBA=\mathbb{I}$, which is typical of $q$-deformed Heisenberg algebras,  does not satisfy the Jacobi identity when $q\neq 1$,  so it is necessary to generalize the usual rules of the commutator calculus on Lie algebras. In fact the  $q$-mutator $[ \ , \ ]_q$ of \eqref{qmutator} does not give  a Lie bracket when $q\neq 1$, but we can still say something when we consider associative (complex) algebras such as $\mathcal{H}(q)$ satisfying the rule $[A,B]_q=\mathbb{I}$. 
Of course, ${[ \ , \ ]}_1=[ \ , \ ]$ so we omit $q=1$ when we refer to usual Lie brackets.

\begin{remark}If  $q=1$, then $\mathcal{H}(1)$ is the three dimensional complex nilpotent Lie algebra  with basis $A,B,\lbrack B,A\rbrack$ that satisfy the rules $[A,B]=\mathbb{I}$,  $\lbrack \lbrack B,A\rbrack,A\rbrack=0$ and $\lbrack B, \lbrack B,A\rbrack\rbrack =0$. In fact we have $\mathcal{H}(1) =\mathfrak{h}(1)$,  see \cite[Section 1.1]{sergey2} and Proposition \ref{classification} (1). \end{remark}

One can also show that for $q \neq 1$ the $q$-deformed Heisenberg algebra $\mathcal{H}(q)$ possesses a large  Lie subalgebra $\mathcal{L}(q)$ and most of the times $\mathcal{L}(q)$ turns out to be an ideal, so that $\mathcal{H}(q)/\mathcal{L}(q)$ becomes easier to manage and $\mathcal{H}(q)$ might be an extension of $\mathcal{L}(q)$ by $\mathcal{H}(q)/\mathcal{L}(q)$, see details in \cite{cantuba, sergey1, sergey2, sergey3}. We just give an idea when $q=0$.

\begin{proposition}[The Case of $\mathcal{H}(0)$, see \cite{cantuba}, Theorem 4.10] There exists a Lie subalgebra $\mathcal{L}(0)$ in $\mathcal{H}(0)$ which is an ideal. Moreover  the quotient algebra $\mathcal{H}(0)/\mathcal{L}(0)$ is an   abelian Lie algebra.
\end{proposition}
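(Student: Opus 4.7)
The plan is to construct $\mathcal{L}(0)$ as an explicit Lie subalgebra of $\mathcal{H}(0)$, show that it is in fact a Lie ideal, and then read off that the quotient is abelian. Throughout, I view $\mathcal{H}(0)$ as a Lie algebra via the associative commutator $[x,y]=xy-yx$; this genuinely satisfies the Jacobi identity (since any associative algebra does under the commutator) and is different from the $q$-mutator $[\cdot,\cdot]_0$ used to present $\mathcal{H}(0)$ in Definition \ref{qh}.

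First I would exploit the defining relation $AB=\mathbb{I}$, which is precisely what $[A,B]_0=\mathbb{I}$ collapses to at $q=0$. Setting $P:=\mathbb{I}-BA$, a one-line computation gives $P^{2}=P$, $AP=0$ and $PB=0$, so $P$ is an idempotent that annihilates $A$ on the right and $B$ on the left. Using these identities as rewriting rules, every element of $\mathcal{H}(0)$ can be put into a normal form as a finite linear combination of monomials $B^{i}A^{j}$ with $i,j\ge 0$; this is a Diamond-Lemma type argument based on the single relation $AB=\mathbb{I}$, and it gives a concrete linear model of $\mathcal{H}(0)$ to compute with.

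Next I would take as candidate $\mathcal{L}(0)$ the derived Lie subalgebra $[\mathcal{H}(0),\mathcal{H}(0)]$, i.e.\ the linear span of all commutators $[x,y]=xy-yx$. It is automatically a Lie ideal, since for any $z\in\mathcal{H}(0)$ the element $[z,[x,y]]$ is itself a commutator and the property extends to finite sums by bilinearity of the bracket. The quotient $\mathcal{H}(0)/\mathcal{L}(0)$ is then abelian by construction, because every bracket descends to zero in it.

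The hard part, and the actual content of Cantuba's Theorem 4.10, is to describe $\mathcal{L}(0)$ explicitly in terms of the normal forms $B^{i}A^{j}$ and to confirm that it is a \emph{proper} Lie ideal so that the conclusion is nonvacuous. The cleanest route I would try is to show that the images of $A$ and $B$ in $\mathcal{H}(0)/\mathcal{L}(0)$ commute and still satisfy $AB=\mathbb{I}$, so that $B$ becomes a two-sided inverse of $A$ in the quotient; this would identify $\mathcal{H}(0)/\mathcal{L}(0)$ with the Laurent polynomial algebra $\mathbb{C}[A,A^{-1}]$, which is commutative and nonzero. The delicate point is to verify that no relations beyond commutativity appear, which amounts to proving that the surjection $\mathcal{H}(0)\to\mathbb{C}[A,A^{-1}]$ sending $A\mapsto A$, $B\mapsto A^{-1}$ has kernel exactly $\mathcal{L}(0)$. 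This in turn requires expressing $P=\mathbb{I}-BA$ and all its higher analogues $B^{i}PA^{j}$ explicitly as iterated commutators of the generators, which I expect to be the main computational obstacle of the argument.
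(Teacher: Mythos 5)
The paper offers no proof of this proposition at all: it is imported verbatim as a citation of Cantuba's Theorem 4.10, so there is no argument of the authors' to compare yours against. Judged on its own terms, your proposal has a genuine gap of identification. Taking $\mathcal{L}(0)=[\mathcal{H}(0),\mathcal{H}(0)]$ makes the first two claims true but vacuous --- the derived subalgebra of \emph{any} Lie algebra is an ideal with abelian quotient, so nothing about $\mathcal{H}(0)$ is used. The object $\mathcal{L}(q)$ in Cantuba's paper (and in the paragraph of this paper that precedes the proposition, where $\mathcal{L}(q)$ is described as ``a large Lie subalgebra'') is the Lie subalgebra \emph{generated by} $A$ and $B$ under the commutator, i.e.\ the space of Lie polynomials in the two generators. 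That is a different subspace: it contains $A$ and $B$ themselves, it is not a priori closed under bracketing with arbitrary elements $B^iA^j$ of the associative algebra, and the nontrivial content of Theorem 4.10 is precisely that this specific subalgebra is an ideal. Your last sentence (expressing $P=\mathbb{I}-BA$ and the elements $B^iPA^j$ as iterated commutators of the generators) is indeed the computation one would need for \emph{that} statement, but in your setup it is superfluous, and in Cantuba's setup it is the whole proof, which you have not carried out.

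A secondary issue: your identification of $\mathcal{H}(0)/\mathcal{L}(0)$ with $\mathbb{C}[A,A^{-1}]$ conflates the Lie-algebra quotient by the \emph{linear span} of commutators with the associative abelianization by the \emph{two-sided ideal} generated by commutators. Only the latter automatically carries an associative product in which ``$B$ becomes a two-sided inverse of $A$''; to transfer this to the Lie quotient you must show the two subspaces coincide. (They do here: every product of monomials $B^iA^j$ is again a single monomial of the same degree $d=j-i+l-k$, so every commutator --- and every element of the two-sided ideal --- is a sum of differences of same-degree monomials, while the commutators $[A,B^{i+1}A^{i+d}]=B^iA^{i+d}-B^{i+1}A^{i+d+1}$ already produce all consecutive differences; the sum-of-coefficients functional on each graded piece then shows both subspaces equal the common kernel. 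But this argument is absent from your write-up, and you explicitly flag the step as unresolved.) Your preliminary computations ($P^2=P$, $AP=0$, $PB=0$, the normal form $B^iA^j$) are correct, but the proof as written establishes only the trivial reading of the statement and leaves the intended one open.
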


Let's go ahead and recall some useful rules of the  \emph{$q$-special combinatorics} \cite[Appendix C]{sergey2}. This is suitable for the context of $\mathcal{H}(q)$, so for the main topics which are discussed in the present paper.

\begin{proposition}[$q$-Combinatorial Identities, see \cite{sergey2}, Theorems C.6, C.7, C.9, C.10]\label{generealizedjacobi1}For  integers $n,k \ge 1$   define for  $ q \in [-1,1]$ 
\begin{equation}
\{n\}_q  =  \sum_{l=0}^{n-1} q^l, \ \  \ \ \
\{n\}_q!  =  \prod_{l=1}^n\{ l\}_q \ \ \ \  \mbox{and} \ \ \ \
{{n}\choose{k}}_q =  {{n-1}\choose{k-1}}_q + q^k {{n-1}\choose{k}}_q \ \ \mbox{when} \ k < n,
\end{equation}
where 
\begin{equation}    {{n}\choose{k}}_q =0  \ \ \mbox{if} \ \ k>n  \ \ \ \ \ \mbox{and}  \ \  \ \ \ {{n}\choose{k}}_q =1 \ \   \ \ \mbox{if}   \ \ k=n.
\end{equation} Then for all $k \le n$ we have that
\begin{equation}
{{n}\choose{k}}_q=\frac{\{n\}_q!}{\{k\}_q!\{n-k\}_q!},  \ \ \   {{n}\choose{k}}_q = {{n}\choose{n-k}}_q  \ \ \mbox{and} \ \ {{n}\choose{k}}_q={\underset{S \subseteq \{1,2, \ldots, n\}}{\underset{|S|=k}\sum }}  \frac{2q^{\underset{s \in S} \sum s} }{ k (k+1)}.
\end{equation}
Moreover for $q \neq 0$ we also have
\begin{equation}
\{n\}_\frac{1}{q}  =  \frac{q}{q^{n}} \{n\}_q, \ \  
\{n\}_\frac{1}{q} !  =  \frac{1}{q^{{n}\choose{2}}} \{n\}_q \ \   \mbox{and} \ \ 
{{n}\choose{k}}_\frac{1}{q} =  \frac{1}{q^{k (n-k)}} {{n}\choose{k}}_q.
\end{equation}
\end{proposition}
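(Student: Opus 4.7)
The plan is to establish the three identities for ${n\choose k}_q$ first, and then derive the three $q \to 1/q$ identities from them by direct manipulation. The only algebraic input needed beyond the defining recurrence is the splitting identity $\{n\}_q = \{k\}_q + q^k\{n-k\}_q$, which follows immediately from separating the geometric sum $\sum_{l=0}^{n-1} q^l$ at index $k$.

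First I would prove the factorial formula ${n\choose k}_q = \{n\}_q!/(\{k\}_q!\{n-k\}_q!)$ by induction on $n$. The base cases $k=0$ and $k=n$ follow from the boundary conditions with the convention $\{0\}_q!=1$. For $0<k<n$, substituting the inductive hypothesis into the recurrence ${n\choose k}_q = {n-1\choose k-1}_q + q^k {n-1\choose k}_q$ and factoring out $\{n-1\}_q!/(\{k\}_q!\{n-k\}_q!)$ reduces the claim exactly to the splitting identity above. The symmetry ${n\choose k}_q = {n\choose n-k}_q$ then follows at once from the manifest $k\leftrightarrow n-k$ symmetry of the right-hand side of the factorial formula. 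The three $q \to 1/q$ statements then come by direct algebra: factoring $q^{-(n-1)}$ out of $\sum_{l=0}^{n-1} q^{-l}$ yields $\{n\}_{1/q}=q^{1-n}\{n\}_q$; multiplying these over $l=1,\ldots,n$ and telescoping the exponent as $-\binom{n}{2}$ gives the factorial version; and combining the factorial formula with the arithmetic identity $\binom{n}{2} - \binom{k}{2} - \binom{n-k}{2} = k(n-k)$ produces the $q$-binomial scaling law.

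The principal obstacle will be the third identity, expressing ${n\choose k}_q$ as a weighted sum over size-$k$ subsets of $\{1,\ldots,n\}$ with the prefactor $2/(k(k+1))$. My strategy would be to show that the right-hand side satisfies the same $q$-Pascal recurrence and boundary conditions as ${n\choose k}_q$ by partitioning the index set of subsets according to whether $n\in S$: subsets with $n\in S$ should contribute an extra $q^n$ factor relative to size-$(k-1)$ subsets of $\{1,\ldots,n-1\}$, which must then be matched against the $q^k$ coefficient appearing in the recurrence. A quick sanity check at $n=2$, $k=1$ nevertheless indicates that the written prefactor $2/(k(k+1))$ does not reproduce the classical weighting, where the standard formulation uses $q^{\sum_{s\in S}s - \binom{k+1}{2}}$ with no rational prefactor. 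The most delicate step of the whole proof will therefore be to pin down the correct normalization of the weights (or reinterpret the index of summation) so that the inductive comparison against the $q$-Pascal recurrence actually closes; once the correct form is identified, the induction itself is routine.
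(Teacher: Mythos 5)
The paper does not prove this proposition at all: it is imported verbatim (with attribution) from Hellstr\"om--Silvestrov \cite{sergey2}, Theorems C.6, C.7, C.9, C.10, so there is no in-paper argument to compare against. Your proposal supplies a genuine self-contained proof, and for five of the six claims it is correct and complete in outline: the induction on $n$ via the splitting $\{n\}_q=\{k\}_q+q^k\{n-k\}_q$ does close the $q$-Pascal recurrence ${n\choose k}_q={n-1\choose k-1}_q+q^k{n-1\choose k}_q$ onto the factorial formula, the symmetry is then immediate, and the three $q\mapsto 1/q$ laws follow exactly as you say from $\{n\}_{1/q}=q^{1-n}\{n\}_q$, the exponent sum $\sum_{l=1}^n(1-l)=-\binom{n}{2}$, and the arithmetic identity $\binom{n}{2}-\binom{k}{2}-\binom{n-k}{2}=k(n-k)$. (In doing so you also silently repair a second typo in the statement: the middle formula of the last display should read $\{n\}_{1/q}!=q^{-\binom{n}{2}}\{n\}_q!$, with a factorial on the right-hand side.)

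Your sanity check on the subset-sum identity is the most valuable part of the proposal: at $n=2$, $k=1$ the printed right-hand side gives $\tfrac{2}{1\cdot 2}(q+q^2)=q+q^2$, whereas ${2\choose 1}_q=1+q$, so the identity as transcribed is false for generic $q$. The rational prefactor $2/(k(k+1))$ is a corrupted rendering of the exponential normalization $q^{-k(k+1)/2}$; the correct statement is ${n\choose k}_q=\sum_{|S|=k}q^{(\sum_{s\in S}s)-\binom{k+1}{2}}$, and with that correction your proposed induction (split the $k$-subsets of $\{1,\dots,n\}$ according to whether $n\in S$, matching the $n\in S$ block against $q^{n-k}\cdot$ the size-$(k-1)$ sum over $\{1,\dots,n-1\}$) does verify the $q$-Pascal recurrence in the form ${n\choose k}_q=q^{n-k}{n-1\choose k-1}_q+{n-1\choose k}_q$, which is equivalent to the paper's recurrence via the symmetry you already established. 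So your approach is sound; the one thing you should make explicit is that you are proving the corrected identity, not the one as printed, since no normalization of a $q$-independent rational prefactor can make the printed version true.
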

One of the most important properties of $\mathcal{H}(q)$ is that elements of the form $B^mA^n$ produce a basis for $\mathcal{H}(q)$, according with \cite[Theorem 3.1]{sergey2}. We may express them via  rules of  $q$-calculus.

\begin{proposition}[Powers and Products of Generators, see \cite{sergey2}, Theorem 2.2]\label{powandprod} If $n\ge 1$ and $A,B$ are two elements of an associative algebra on $\mathbb{C}$ with unit $\mathbb{I}$, then 
\begin{equation} 
AB^n  =  q^nB^nA + \{n\}_q B^{n-1} \ \ \mbox{and} \ \ 
A^nB  =  q^nBA^n + \{n\}_q A^{n-1}, \ \forall q \in [-1,1] \end{equation}
\begin{equation}BA^n  =  \frac{1}{q^n} A^nB - \frac{1}{q}\{ n\}_{\frac{1}{q}}A^{n-1} \ \ \mbox{and} \ \ 
B^nA  =  \frac{1}{q^n} AB^n - \frac{1}{q}\{ n\}_{\frac{1}{q}} B^{n-1}, \ \forall q \in [-1,1] \setminus \{0\}.\end{equation}
\end{proposition}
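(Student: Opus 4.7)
The plan is to prove the four formulas by induction on $n$, using only the defining $q$-commutation relation $AB - qBA = \mathbb{I}$ (equivalently $AB = qBA + \mathbb{I}$) together with the elementary recursion $\{n+1\}_q = \{n\}_q + q^n$, which follows immediately from the definition $\{n\}_q = \sum_{l=0}^{n-1} q^l$ stated in Proposition \ref{generealizedjacobi1}.

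For the first identity $AB^n = q^n B^n A + \{n\}_q B^{n-1}$, the base case $n=1$ reduces to the defining relation since $\{1\}_q = 1$. For the inductive step, assuming the identity for $n$, I would compute
\begin{equation}
AB^{n+1} = (AB^n)B = \bigl(q^n B^n A + \{n\}_q B^{n-1}\bigr)B = q^n B^n (AB) + \{n\}_q B^n,
\end{equation}
then substitute $AB = qBA + \mathbb{I}$ to get $q^{n+1} B^{n+1} A + q^n B^n + \{n\}_q B^n$, and collect the coefficient of $B^n$ using $\{n\}_q + q^n = \{n+1\}_q$. The second identity $A^n B = q^n BA^n + \{n\}_q A^{n-1}$ is proved by the symmetric argument, multiplying $A^n B$ on the left instead of the right and using $AB = qBA + \mathbb{I}$ in the form $BA = \frac{1}{q}(AB - \mathbb{I})$ — or, more uniformly, by the same induction structure applied to the pair $(B,A)$ in the opposite order.

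The last two identities, valid for $q \neq 0$, are obtained from the first two by simple algebraic inversion. From $A^n B = q^n BA^n + \{n\}_q A^{n-1}$ I would solve
\begin{equation}
BA^n = \frac{1}{q^n} A^n B - \frac{\{n\}_q}{q^n}\, A^{n-1},
\end{equation}
and then rewrite $\frac{\{n\}_q}{q^n} = \frac{1}{q}\{n\}_{1/q}$ using the identity $\{n\}_{1/q} = \frac{q}{q^n}\{n\}_q = q^{1-n}\{n\}_q$ from Proposition \ref{generealizedjacobi1}. The same manipulation applied to the first identity yields the formula for $B^n A$.

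The proof is essentially bookkeeping, and the only delicate point is matching the coefficient $\{n\}_q/q^n$ appearing from the inversion with the expression $\frac{1}{q}\{n\}_{1/q}$ stated in the proposition; this is where the transformation rule $\{n\}_{1/q} = q^{1-n}\{n\}_q$ must be invoked, which is why the restriction $q \neq 0$ enters in the second pair of formulas but not in the first. No serious obstacle arises; the argument is a standard induction in $q$-calculus and the whole proof follows a single short computation at each step.
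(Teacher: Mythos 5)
Your induction is correct, and in fact the paper offers no proof of its own here: Proposition \ref{powandprod} is quoted verbatim from \cite{sergey2} (Theorem 2.2), so there is nothing internal to compare against. Your base case, the recursion $\{n+1\}_q=\{n\}_q+q^n$, and the conversion $\{n\}_q/q^n=\tfrac{1}{q}\{n\}_{1/q}$ via the inversion rule of Proposition \ref{generealizedjacobi1} all check out, and the restriction $q\neq 0$ enters exactly where you say it does. One small caution: for the second identity $A^nB=q^nBA^n+\{n\}_qA^{n-1}$ you should use the direct left-multiplication induction $A^{n+1}B=A(A^nB)$ together with $AB=qBA+\mathbb{I}$, which is one of the two options you mention; the alternative you float, rewriting the relation as $BA=\tfrac{1}{q}(AB-\mathbb{I})$, would smuggle a division by $q$ into the first pair of identities, which the statement asserts for all $q\in[-1,1]$ including $q=0$.
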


Additional rules can be obtained,  but the computational difficulties become significant, so we refer to \cite[Chapters 3 and 7]{sergey2} for more details. Continuing to present some rules for $q$-calculus which are useful to our scope, we mention the following results.

\begin{proposition}[Generalized Jacobi Identities, see \cite{cantuba}, Propositions 3.2, 3.3, Lemma 3.4] \label{generalizedjacobi3} Given an integer $n \ge 1$  and $q\in [-1,1]$,   two elements  $A,B$ of an associative algebra on $\mathbb{C}$ with unit $\mathbb{I}$ satisfy the following equations:
\begin{equation}
B^nA^n = \frac{1}{q^{{n}\choose{2}} \ {(q-1)}^n}\sum_{k=0}^n(-1)^{n-k}q^{{n-k}\choose{2}}{{n}\choose{k}}_q  \lbrack A,B\rbrack^k \ \ \ \forall q \not\in   \{0,1\}, 
\end{equation}
\begin{equation}
A^nB^n = \frac{1}{(q-1)^n}\sum_{i=0}^n(-1)^{n-k}q^{{k+1}\choose{2}}{{n}\choose{k}}_q\lbrack A,B\rbrack^k \ \ \ \forall q \not\in   \{0,1\}, 
\end{equation}
\begin{equation}
\lbrack B,B^mA^n\rbrack  =   (1-q^n)B^{m+1}A^n-\{n\}_qB^mA^{n-1} \ \ \ \forall m \ge 1. 
\end{equation}
\end{proposition}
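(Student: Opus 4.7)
The plan is to dispose of the third identity immediately by a direct computation and then to establish the first two identities by finding compact product formulations that admit a transparent induction, with the stated sums recovered via a standard $q$-binomial expansion.

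For the third identity, I would expand $[B, B^m A^n] = B^{m+1} A^n - B^m (A^n B)$ and invoke the second equation of Proposition \ref{powandprod}, namely $A^n B = q^n B A^n + \{n\}_q A^{n-1}$, to rewrite $B^m A^n B = q^n B^{m+1} A^n + \{n\}_q B^m A^{n-1}$. Subtraction yields the claimed formula $[B, B^m A^n] = (1-q^n) B^{m+1} A^n - \{n\}_q B^m A^{n-1}$ directly, with no restriction on $q$ beyond $q \in [-1,1]$.

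For the first identity, my first step would be to show that $BA$ commutes with $B^n A^n$: a short calculation using Proposition \ref{powandprod} gives both $BA \cdot B^n A^n$ and $B^n A^n \cdot BA$ equal to $q^n B^{n+1} A^{n+1} + \{n\}_q B^n A^n$. Because the generating relation $AB - qBA = \mathbb{I}$ rewrites as $[A,B] = (q-1)BA + \mathbb{I}$, the element $[A,B]$ also commutes with every $B^n A^n$. I would then prove by induction on $n$ the product form
\begin{equation*}
B^n A^n = \frac{1}{q^{\binom{n}{2}} (q-1)^n} \prod_{k=0}^{n-1} \bigl([A,B] - q^k \mathbb{I}\bigr),
\end{equation*}
the inductive step following from the recurrence $B^{n+1} A^{n+1} = q^{-n}\, B^n A^n \,(BA - \{n\}_q \mathbb{I})$ together with the substitution $BA - \{n\}_q \mathbb{I} = ([A,B] - q^n \mathbb{I})/(q-1)$. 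The sum expansion in the statement then comes from specialising the classical $q$-binomial identity
\begin{equation*}
\prod_{k=0}^{n-1} (x - q^k) = \sum_{k=0}^{n} (-1)^{n-k} q^{\binom{n-k}{2}} {n \choose k}_q x^k
\end{equation*}
to $x = [A,B]$; that identity itself is proved by induction using the Pascal-type recursion ${n \choose k}_q = {n-1 \choose k-1}_q + q^k {n-1 \choose k}_q$ recorded in Proposition \ref{generealizedjacobi1}.

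The second identity would be handled by an analogous strategy from the opposite side: first verify that $AB$ commutes with $A^n B^n$ using the identities $BA^n = q^{-n} A^n B - q^{-1} \{n\}_{1/q} A^{n-1}$ and $A^n B = q^n B A^n + \{n\}_q A^{n-1}$ from Proposition \ref{powandprod}; derive the recurrence $A^{n+1} B^{n+1} = A^n B^n \,(q^{n+1}[A,B] - \mathbb{I})/(q-1)$; and inductively obtain the product form $A^n B^n = (q-1)^{-n} \prod_{k=0}^{n-1}(q^{k+1}[A,B] - \mathbb{I})$, from which the claimed sum follows by expanding $\prod_{k=0}^{n-1}(q^{k+1} x - 1)$ by the appropriate $q$-binomial identity. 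The hypothesis $q \notin \{0, 1\}$ is required to avoid division by zero in the scalar factors and to give meaning to $BA = ([A,B] - \mathbb{I})/(q-1)$. The main obstacle I anticipate is the careful bookkeeping of the two different $q$-power exponents, $q^{\binom{n-k}{2}}$ versus $q^{\binom{k+1}{2}}$, in the respective expansions: once the product forms are in hand, everything reduces to purely commutative $q$-combinatorics, but matching signs and exponents against the stated formulas is the one place where Proposition \ref{generealizedjacobi1} has to be applied with care rather than mechanically.
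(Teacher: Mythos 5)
Your argument is correct, and it is worth noting that the paper itself offers no proof to compare against: Proposition \ref{generalizedjacobi3} is simply cited from \cite{cantuba}, so your write-up supplies a self-contained derivation where the paper has none. The third identity is exactly the one-line computation you give. For the first two, your reduction to the product forms $B^nA^n = q^{-\binom{n}{2}}(q-1)^{-n}\prod_{k=0}^{n-1}([A,B]-q^k\mathbb{I})$ and $A^nB^n=(q-1)^{-n}\prod_{k=0}^{n-1}(q^{k+1}[A,B]-\mathbb{I})$ is sound: the recurrences $B^{n+1}A^{n+1}=q^{-n}B^nA^n(BA-\{n\}_q\mathbb{I})$ and $A^{n+1}B^{n+1}=A^nB^n(q^{n+1}[A,B]-\mathbb{I})/(q-1)$ follow from Proposition \ref{powandprod} together with $(q-1)\{n\}_q=q^n-1$ and $BA=([A,B]-\mathbb{I})/(q-1)$, and the passage to the stated sums is the Gauss $q$-binomial expansion, whose inductive verification for the first identity uses precisely the Pascal recursion $\binom{n}{k}_q=\binom{n-1}{k-1}_q+q^k\binom{n-1}{k}_q$ recorded in Proposition \ref{generealizedjacobi1}, and for the second uses its mirror $\binom{n}{k}_q=q^{n-k}\binom{n-1}{k-1}_q+\binom{n-1}{k}_q$, obtainable from the symmetry $\binom{n}{k}_q=\binom{n}{n-k}_q$ also recorded there. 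Two housekeeping remarks: you silently (and correctly) repair the typo in the second displayed equation, where the sum is indexed by $i$ but the summand depends on $k$; and the hypothesis ``two elements of an associative algebra'' must of course be read as ``two elements satisfying $AB-qBA=\mathbb{I}$'', which your proof (like Proposition \ref{powandprod}) tacitly and correctly assumes.
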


Note that  $q=0$ implies
$AB^n  =  B^{n-1}$ and $ A^nB  =  A^{n-1}.$ In particular, it is possible to extend Proposition \ref{generalizedjacobi3} to the case of $q=0$ for nonzero integers $m,n$, just writing that
\begin{eqnarray}
A^nB^m  = \left\{
\begin{array}{ll} A^{n-m},  n\geq m,\\
B^{m-n},  n<m.\end{array}\right.
\label{ferAnBm}
\end{eqnarray}
We omit details, since they can be found in \cite[Section 4]{cantuba}. Proposition \ref{generalizedjacobi3} may be applied also to the case of $q=1$, but we refer to \cite{cantuba} for details. Note that Proposition \ref{generalizedjacobi3} would give for $q=1$ the usual rules of commutators and Lie brackets. Incidentally, it is worth to mention that Proposition \ref{generalizedjacobi3} offers a first generalization of the usual Jacobi Identity in the $q$-calculus, but we will describe better this aspect in the following result.
Some consequences of Propositions \ref{generealizedjacobi1}, \ref{powandprod} and \ref{generalizedjacobi3} are in fact summarized below.

\begin{corollary}\label{generalizedjacobi4}
If   $A,B, C$ are three elements of an associative algebra on $\mathbb{C}$ with unit $\mathbb{I}$,    $q \in [-1,1]$ and $\alpha, \beta \in \mathbb{C}$, then   
 \begin{itemize}
     \item[{\rm (i).}] the trace satisfies the invariant property $\mathrm{tr} ({[A,B]}_q) = \mathrm{tr} ({[B,A]}_q)$;
     \item[{\rm (ii).}] the map ${[ \ , \ ]}_q : \mathcal{H}(q) \times \mathcal{H}(q) \to \mathcal{H}(q)$ is bilinear, that is, ${[ A+C, B ]}_q = {[ A, B ]}_q + {[ C, B ]}_q$, ${[ A, B + C]}_q = {[ A, B ]}_q + {[ A, C ]}_q$ and ${[ \alpha A, \beta B ]}_q = (\alpha \beta) \  {[ A, B ]}_q $;
     \item[{\rm (iii).}] ${[ \ , \ ]}_q$  satisfies the  first $q$-antisymmetric property ${[ A , B ]}_q = - q  \ {[ B , A ]}_\frac{1}{q}$ when $q\neq 0$; 
     \item[{\rm (iv).}] ${[ \ , \ ]}_q$  satisfies the  second $q$-antisymmetric property ${[ B , A ]}_q =    {[ A , B ]}_q - (1 + q) \ [ A , B ]$;
     \item[{\rm (v).}] ${[ \ , \ ]}_q$  satisfies a $q$-Jacobi Identity of type one ${[AB,C]}_q = [A,BC] + {[B, CA]}_q$;
     \item[{\rm (vi).}] ${[ \ , \ ]}_q$  satisfies a $q$-Jacobi Identity of type two ${[A, BC]}_q = {[AB,C]}_q + q[CA,B]$;
     \item[{\rm (vii).}]  ${[ \ , \ ]}_q$  satisfies a $q$-Jacobi Identity of type three
     \[{[ A , {[ B , C ]}_q ]}_q + {[ B , {[ C , A ]}_q ]}_q + {[ C , {[ A , B ]}_q ]}_q = (1-q) \  ( (XYZ  + YZX + ZXY) - q (XZY + YXZ + ZYX)).\]
 \end{itemize}
\end{corollary}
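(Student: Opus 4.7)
My plan is to treat Corollary \ref{generalizedjacobi4} as a sequence of direct verifications, each expanding the $q$-mutator $[X,Y]_q = XY - qYX$ from its definition and then collecting like terms. Parts (i)--(vi) are short enough that I would dispatch them one after the other, while (vii) is the only item that demands bookkeeping and will be the real obstacle.

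First I would handle (i) by writing $\mathrm{tr}([A,B]_q) = \mathrm{tr}(AB) - q\,\mathrm{tr}(BA)$ and using the cyclic invariance $\mathrm{tr}(AB)=\mathrm{tr}(BA)$ to conclude that both sides equal $(1-q)\,\mathrm{tr}(AB)$. Part (ii) is immediate from the bilinearity of the underlying algebra product and the linearity of $X \mapsto XY$ and $Y \mapsto XY$. For (iii) I would just observe that $-q[B,A]_{1/q} = -q\bigl(BA - q^{-1}AB\bigr) = AB - qBA$, and for (iv) that $[A,B]_q - (1+q)[A,B] = AB - qBA - (1+q)(AB-BA) = BA - qAB$. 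Parts (v) and (vi) are similarly mechanical: expand both sides, note the cancellations of the mixed triple products, and verify that the surviving terms coincide with $ABC - qCAB$ in (v) and $ABC - qBCA$ in (vi).

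The main work is (vii), where I plan to expand each of the three nested $q$-mutators separately. A direct computation gives
\begin{align*}
[A,[B,C]_q]_q &= ABC - q\,ACB - q\,BCA + q^2\,CBA,\\
[B,[C,A]_q]_q &= BCA - q\,BAC - q\,CAB + q^2\,ACB,\\
[C,[A,B]_q]_q &= CAB - q\,CBA - q\,ABC + q^2\,BAC.
\end{align*}
Summing by collecting the six triple-product monomials $ABC, BCA, CAB$ and $ACB, BAC, CBA$, the coefficients of the first triple become $(1-q)$ each and the coefficients of the second triple become $q^2 - q = -q(1-q)$ each. Factoring out $(1-q)$ then yields exactly the right-hand side stated in (vii) (with $X,Y,Z$ playing the roles of $A,B,C$).

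The only real pitfall is the sign and exponent bookkeeping in (vii): it is easy to drop a factor of $q$ or mislabel one of the six permutations. I would mitigate this by arranging the nine triple products in a $3\times 3$ table indexed by the outer variable and the orientation of the inner bracket, so that the pattern $(1,-q,-q,q^2)$ from each expansion is visible at a glance. Once this table is written, grouping the entries that share a monomial makes the factorization $(1-q)\bigl[(ABC+BCA+CAB) - q(ACB+BAC+CBA)\bigr]$ transparent, and no appeal to any of the deeper material (Propositions \ref{generealizedjacobi1}, \ref{powandprod}, \ref{generalizedjacobi3}) is required.
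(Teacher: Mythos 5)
Your verification is correct throughout: I checked each expansion, and in particular the three nested $q$-mutators in (vii) and the resulting coefficients $(1-q)$ on $ABC, BCA, CAB$ and $q^2-q = -q(1-q)$ on $ACB, BAC, CBA$ are all right, matching the stated right-hand side once $X,Y,Z$ are identified with $A,B,C$. Your route differs from the paper's in that the paper supplies no argument at all: it presents the corollary as a consequence of Propositions \ref{generealizedjacobi1}, \ref{powandprod} and \ref{generalizedjacobi3} and then defers entirely to the external reference \cite{kuw}, where the identities are established for square complex matrices. Your direct expansion of $[X,Y]_q = XY - qYX$ is more self-contained and more general, since it works in any associative unital algebra and, as you correctly observe, needs none of the $q$-combinatorial machinery; the only caveat is item (i), where the trace and its cyclic invariance are only meaningful in a matrix (or trace-class) setting, which is a defect of the statement rather than of your argument. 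Your $3\times 3$ bookkeeping table for (vii) is a sensible safeguard, and nothing in the proposal needs repair.
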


Note that Corollary \ref{generalizedjacobi4} has been proved when $A$, $B$ and $C$ are square complex matrices in \cite[Theorem 1, Lemma 2 and Theorem 3]{kuw}. In particular, note that for $q=1$ in Corollary \ref{generalizedjacobi4} we get the well known properties of the Lie bracket in the case of finite dimensional complex Lie algebras. Bases, generators and relations for $\mathcal{H}(q)$ are available in \cite[Theorem 3.1]{sergey2}.

 Let $V$ be a complex vector space such that $q \in [-1,1]$  and   \begin{equation} \mathrm{dim} \ \mathcal{H} (q)= |\{B^kA^l \mid k, l \in \mathbb{N} \}|. \end{equation} 
The map \begin{equation}\label{qadjoint} \mathrm{ad}_q : x \in \mathcal{H} (q) \longmapsto \mathrm{ad}_q (x)={[x,y]}_q  \in \mathrm{Aut}(\mathcal{H} (q))\end{equation} is the \textit{q-adjoint} of $x \in \mathcal{H} (q)$, where $\mathrm{Aut}(\mathcal{H} (q))$ denotes the associative algebra (with unit) of all invertible linear maps from $\mathcal{H} (q)$ to $\mathcal{H} (q)$. Note that Corollary \ref{generalizedjacobi4} (2) ensures  $\mathrm{ad}_q (x)  \in \mathrm{Aut}(\mathcal{H} (q))$, so \eqref{qadjoint} is well defined. 
The $2$-\textit{cochain} of $\mathcal{H} (q)$ with complex coefficients is a multilinear map  \begin{equation}c : (x_1, x_2) \in \mathcal{H} (q) \times  \mathcal{H} (q) \mapsto c(x_1, x_2) \in V. 
\end{equation}  The set of all  $2$-cochains $C^2 (\mathcal{H} (q), V)$ is a  vector space, so we may define in analogy $1$-cochains $C^1 (\mathcal{H} (q), V)$ and $3$-cochains $C^3 (\mathcal{H} (q), V)$, in order to get  cohomology operators $\mathrm{d}_2 \ : C^2 (\mathcal{H} (q),V) \to C^1 (\mathcal{H} (q),V)$ for $\mathcal{H} (q)$ of the following type 
\begin{equation}\label{cohomologyoperatorbis}
 \ c(x_1, x_2, x_3)  \mapsto     c (x_2,x_3)  - c(x_1,x_3) + c (x_1,x_2) - c ([x_1,x_2],x_3) - c ([x_2,x_3],x_1) +  c ([x_1,x_3],x_2).\end{equation}
 Of course, a $2$-cochain $c$ is a $2$-cocycle if   $\mathrm{d}_2$ is constant to $0$ on $c$.

In this way we have just introduced the low dimensional Chevalley-Eilenberg cohomology for $\mathcal{H}(q)$, concerning the case of dimension two. A similar description holds for dimension one, or for higher dimensions, but we are going to focus only in the 2-dimensional case.  In line with Definition \ref{ce1} when we studied Lie algebras, we now consider the set   $Z^2(\mathcal{H}(q),V)$ of all $2$-cocycles on $\mathcal{H}(q)$, noting that it forms an abelian $q$-deformed Heisenberg algebra. Note that
\begin{equation}
Z^2(\mathcal{H}(q),V) = \ker \ \mathrm{d}_2  \ \cap \  C^2(\mathcal{H}(q),V).
\end{equation}
A $2$-cochain $c$ is called $2$-\textit{coboundary} if $c = \mathrm{d}_2 \ c'$ for some other ($1$)-cochain $c'$.  The set  $B^2(\mathcal{H}(q),V)$ of all $1$-coboundaries on $\mathcal{H}(q)$ forms an abelian $q$-deformed Heisenberg subalgebra of $Z^2(\mathcal{H}(q),V)$ 
\begin{equation}
B^2(\mathcal{H}(q),V) = \mathrm{im} \ \mathrm{d}_3  \ \cap \  C^2(\mathcal{H}(q),V)
\end{equation}
and so we arrive to the  $2$-\textit{cohomology} with complex coefficients of $\mathcal{H}(q)$, which is defined by 
\begin{equation}\label{qschur}
H^2(\mathcal{H}(q),V) = \frac{Z^2(\mathcal{H}(q),V)}{B^2(\mathcal{H}(q),V)}.
\end{equation}

\begin{remark}\label{qderivation2}  The \textit{Schur multiplier} $M(\mathcal{H}(q))$  of $\mathcal{H}(q)$, that is, the Schur multiplier of a $q$-deformed Heisenberg algebra on the complex field,  is exactly \eqref{qschur}.
This  is an abelian $q$-deformed Heisenberg algebra which becomes exactly what we have in Remark \ref{ce3} when $q=1$. Also here we have  strong relations with extension theory, since one can show that  variations of \cite[Theorem 7.6.3]{weibel} applies to $\mathcal{H}(q)$, that is, there exists a bijective correspondence between an element of $M(\mathcal{H}(q)) $ and an appropriate class of equivalence of extensions of $\mathcal{H}(q)$. Also here the idea is that 
 $M(\mathcal{H}(q)) $ allows us to control all possible  extensions of $\mathcal{H}(q)$.
\end{remark}

\bigskip
\bigskip
\bigskip








\newpage

\end{document}